%% file: main.tex
\documentclass[11pt,a4paper]{article}

\usepackage[T1]{fontenc}
\usepackage[utf8]{inputenc}
\usepackage{lmodern}
\usepackage{microtype,fullpage}

\usepackage{amsmath,amssymb,amsthm,mathtools}
\usepackage{thm-restate}
\DeclarePairedDelimiter\ceil{\lceil}{\rceil}

\usepackage{graphicx}
\usepackage{xcolor}
\usepackage{tikz}
\usetikzlibrary{
  arrows.meta,
  positioning,
  shapes.geometric,
  shapes.misc,
  fit,
  calc,
  math,
  shapes.multipart,
  decorations.text,
  decorations.markings,
  decorations.pathmorphing,
  decorations.pathreplacing,
}
\usepackage{tikz-dimline}

\usepackage[vlined]{algorithm2e}

\usepackage[hidelinks]{hyperref}
\usepackage[nameinlink]{cleveref}

\usepackage{enumitem}

\usepackage{comment}
\usepackage{epstopdf}
\usepackage{etoolbox}
\usepackage{xifthen}
\usepackage{boxedminipage}
\usepackage{framed}
\usepackage{tabularx}
\usepackage{tcolorbox}

\usepackage{
graphicx,
mathtools
}

\usepackage{ 
boxedminipage,
etoolbox,
framed,
ifthen,
tabularx,
tcolorbox,
tikz,
xifthen
}

\newlength{\RoundedBoxWidth}
\newsavebox{\GrayRoundedBox}
\newenvironment{GrayBox}[1]%
   {\setlength{\RoundedBoxWidth}{.95\textwidth}
    \def\boxheading{#1}
    \begin{lrbox}{\GrayRoundedBox}
       \begin{minipage}{\RoundedBoxWidth}}%
   {   \end{minipage}
    \end{lrbox}
    \begin{center}
    \begin{tikzpicture}%
       \node(Text)[draw=black!20,fill=white,rounded corners,%
             inner sep=2ex,text width=\RoundedBoxWidth]%
             {\usebox{\GrayRoundedBox}};
        \coordinate(x) at (current bounding box.north west);
        \node [draw=white,rectangle,inner sep=3pt,anchor=north west,fill=white] 
        at ($(x)+(6pt,.75em)$) {\boxheading};
    \end{tikzpicture}
    \end{center}}
\newenvironment{defproblemx}[2][]{\noindent\ignorespaces%
                \FrameSep=6pt%
                \parindent=0pt%
                \vspace*{-1em}
                \ifthenelse{\isempty{#1}}{%
                  \begin{GrayBox}{\textsc{#2}}%
                }{%
                  \begin{GrayBox}{\textsc{#2}  parameterized by~{#1}}%
                }
                \begin{tabular*}{\textwidth}{@{\hspace{.1em}} >{\itshape} p{1cm} p{0.8\textwidth} @{}}%
            }{
                \end{tabular*}%
                \end{GrayBox}%
                \ignorespacesafterend
            }  
\newcommand{\defproblem}[3]{
  \begin{defproblemx}{#1}
    Input:  & #2 \\
    Task: & #3
  \end{defproblemx}
}%
\newtheorem{theorem}{Theorem}[section]
\newtheorem{lemma}[theorem]{Lemma}

\newtheorem{corollary}[theorem]{Corollary}
\newtheorem{observation}[theorem]{Observation}
\newtheorem{claim}{Claim}

\theoremstyle{definition}
\newtheorem{definition}[theorem]{Definition}

\theoremstyle{remark}
\newtheorem{remark}[theorem]{Remark}

\crefname{algorithm}{Algorithm}{Algorithms}
\Crefname{algorithm}{Algorithm}{Algorithms}

\crefname{figure}{Figure}{Figures}
\Crefname{figure}{Figure}{Figures}

\crefname{theorem}{Theorem}{Theorems}
\Crefname{theorem}{Theorem}{Theorems}

\crefname{lemma}{Lemma}{Lemmas}
\Crefname{lemma}{Lemma}{Lemmas}

\crefname{section}{Section}{Sections}
\Crefname{section}{Section}{Sections}

\crefname{claim}{Claim}{Claims}
\Crefname{claim}{Claim}{Claims}

\crefname{observation}{Observation}{Observations}
\Crefname{observation}{Observation}{Observations}

\crefname{proposition}{Proposition}{Propositions}
\Crefname{proposition}{Proposition}{Propositions}

\newenvironment{claimproof}{\proof[Proof of claim]}{\endproof}

\newcommand{\pendant}[1]{\left\lceil\frac{#1}{2}\right\rceil}

\newcommand{\tc}{\textsc{TriConn}}

\title{Connectivity Augmentation of Plane Graphs}

\author{
Krishnan Dehaleesan \\
University of Bergen, Norway\\
\texttt{krishnan.dehaleesan@uib.no}
\and
Asif Khan\thanks{Partially funded by a grant from Infosys foundation.}\\
Chennai Mathematical Institute, India\\
\texttt{asifkhan@cmi.ac.in}
\and
Pranabendu Misra\\
Chennai Mathematical Institute, India\\
\texttt{pranabendu@cmi.ac.in}
}

\date{}
\begin{document}

\maketitle

\begin{abstract}
    We study the problem of connectivity augmentation of a planar graph, while preserving planarity. This problem is motivated by many real-world settings such as road-networks, power-networks etc. In these settings, it is crucial to preserve the original planar embedding after augmentation.

	In 2009, Gutwenger and Mutzel gave a constructive algorithm showing that a connected planar graph with a fixed embedding (a plane graph) can be optimally augmented to a biconnected graph without crossings while preserving the embedding. We further this line of research, by giving an algorithm that computes a minimum set of edges that makes a connected plane graph 2-edge-connected in \(O(|V|(1+\alpha(|V|)))\) time and linear space, where \(\alpha\) is the inverse Ackermann function.

    We also study the 3-vertex-connectivity augmentation of biconnected outerplanar plane graphs. We present the first polynomial-time algorithm that augments such graphs to 3-connectivity with the minimum number of edges in \(O(|V|(1+\alpha(|V|)))\) time and linear space while preserving the embedding, i.e., the augmented graph has a planar embedding that extends the given embedding.
 
\end{abstract}

\input{Sections/1.Introduction}

\input{Sections/2.1.prelims}

\input{Sections/2.bridgeconnectivityaug}
\input{Sections/3.Triconnforfixedembed}

\input{Sections/4.almostopt}

\input{Sections/5.opt}

\input{Sections/6.conclusion}

\bibliographystyle{plainurl}
\bibliography{cas-refs}

\end{document}

%% file: Sections/1.Introduction.tex
 \section{Introduction}

The \textbf{connectivity augmentation} problem asks to increase the connectivity of a given graph to a specified level while adding as few edges as possible. This is a classic problem in graph algorithms and network design, motivated by applications in improving connectivity of real world networks, such as roads, electricity, telecommunication, and so on. While it is certainly possible to augment the connectivity by increasing the capacity of existing links, or adding parallel links, this is sub-optimal in both the costs, and in improving the network resiliency to link failures. For example, if a link in an electricity-network has failed due to some real-world incident (e.g a falling tree), it is highly likely that any parallel links would have also have failed. Hence, it is highly desirable to have connectivity via some alternative route. 

This motivates the algorithmic study of connectivity augmentation problems, which play a fundamental role in the design of reliable communication and infrastructure networks.
For general graphs, both biconnectivity augmentation (making a graph 2-vertex-connected) and bridge-connectivity augmentation (making a graph 2-edge-connected), with uniform costs for new-links, admit linear-time algorithms \cite{eswaran1976augmentation}. However, they become NP-hard when new links have different costs.

A variant of connectivity augmentation problems considers the case where the input graph is planar and we want the augmented graph to be planar as well. This is directly motivated by real-world settings such as augmentation of road networks, electricity networks, etc. that are always embedded on a plane, and hence a plane solution is highly preferred. 
This variant, called \textsc{Planar BiConnectivity Augmentation (PVCA)}, is known to be NP-hard~\cite{kant1991planar}, even with uniform new-link costs. The edge-connectivity version, \textsc{Planar BridgeConnectivity Augmentation (PECA)} is also NP-hard~\cite{wolff2012augmenting}.

Let us observe that, in many motivating real-world problems, such as augmenting a road-network, not only is the input graph planar, but it has a fixed embedding of the nodes in the plane which cannot be changed. Therefore, it is highly desirable that upon augmenting the graph with new-links, the embedding is preserved.
 
This motivates us to study the following problem:
\defproblem
{Planar Bridgeconnectivity Augmentation for Fixed Embedding (PECA-Fix)}
{A connected planar graph $G$ with a fixed embedding.}
{Find a minimum set of edges $E' \subset V \times V$ such that $G+E'$ has a planar embedding that extends the planar embedding of $G$ and is 2-edge connected.}

The vertex-connectivity variant of PECA-Fix, called PVCA-Fix, can be defined similarly. This was studied by Gutwenger, Mutzel and Zey \cite{gutwenger2009planar} who showed that solving biconnectivity augmentation for planar graphs with a fixed embedding (PVCA-Fix) is polynomial-time solvable in near-linear time.  Since plane biconnectivity augmentation was poly-time solvable, it was natural to ask the question for connectivity greater than 2. Recent work by Akitaya et al.~\cite{akitaya2025price} showed that, even for plane graphs, increasing connectivity from $c$ to $k$ is NP-hard for all $2 \le c < k \le 5$. This motivated us to study the tri-connectivity augmentation of graphs from a subclass of planar graphs, namely outerplanar graphs, with a fixed embedding:

\defproblem
{Outerplanar Triconnectivity Augmentation for Fixed Embedding (OTA-Fix)}
{An Outerplane biconnected graph $G$ with a fixed embedding.}
{Find a minimum set of edges $E' \subset V \times V$ such that $G+E'$ has a planar embedding that extends the planar embedding of $G$ and is 3-vertex-connected.}

Note that the requirement is only that the augmented graph admits a planar embedding extending the fixed embedding of $G$; the augmented graph need not remain outerplanar. Indeed, except for a triangle, no 3-connected graph is outerplanar.

Let us note that the triconnectivity augmentation problem for outerplanar graphs (without a fixed embedding) is well-studied. The authors of  \cite{garcia2010augmenting} gave a polynomial time algorithm for the biconnectivity augmentation problem and the bridge-connectivity augmentation problem.  For $k$ even or $k=3$, Nagamochi and Eades \cite{nagamochi1998edge} provided optimal outerplanar edge-connectivity augmentation algorithms.
Furthermore, Kant \cite{kant1996augmenting} provided an optimal solution in polynomial time for the Outerplanar triconnectivity augmentation problem. 

The fixed embedding constraint rules out the known techniques for OTA-Fix. For instance, Kant's exact algorithm that augments an outerplanar graph to triconnectivity does not extend to the setting where the input graph has a fixed embedding. In particular, Kant’s method may alter the embedding to a non-outerplanar one while preserving planarity, and then insert edges accordingly. To illustrate this limitation, consider the graph obtained from a cycle $C_k$ by replacing each edge with a parallel edge and subdividing it. Kant’s algorithm can make this outerplanar graph 3-connected using $\frac{k}{2}$ edges by effectively moving vertices from the outer face into inner faces. Such modifications are not allowed in the OTA-Fix setting, where the embedding must be preserved. In fact, for the example above, no augmentation respecting the fixed embedding can achieve 3-connectivity using only $\frac{k}{2}$ edges.

This raises an interesting question: ``how much worse can the solution be when the embedding is fixed?'' 
For instance, in the case of 2-connectivity augmentation of plane graphs, there exist graphs where a fixed embedding requires $k$ edges, whereas allowing changes to the embedding admits a solution of size $\frac{k}{2}$.
Our results show that this gap is significantly smaller for outerplanar graphs: although the optimal augmentation without embedding constraints uses $\frac{k}{2}$ edges, in the fixed-embedding setting the optimum increases by at most one, i.e., it is bounded by $\frac{k}{2} + 1$.

\subsection{Our Contributions}

\textbf{2-edge Connectivity Augmentation for Plane Graphs (PECA-Fix)}: We present the first polynomial-time algorithm to optimally augment a connected planar graph (with a fixed embedding) to eliminate all bridges. Our algorithm runs in $O(n \cdot (1+\alpha(n)))$ time using $O(n)$ space, where $n=|V|$ and $\alpha(\cdot)$ is the inverse Ackermann function (arising from union–find operations). We build upon the framework of~\cite{gutwenger2009planar}, meant for vertex-connectivity, to handle our edge-connectivity setting.

\medskip
\noindent
\textbf{Triconnectivity Augmentation for Outerplane Graphs (OTA-Fix)}: Our main result is an $O(n \cdot (1+\alpha(n)))$ time exact algorithm for augmenting a biconnected outerplanar graph (with fixed embedding) to achieve 3-vertex-connectivity. 

In a setting where problems are NP-hard for connectivity greater than 2, we prove that Outerplane graphs form a sufficiently rich restriction admitting a fast polynomial time algorithm. In addition, We show that despite the embedding constraint, the optimal solution size increases by at most one in comparison with the analogous problem without a fixed embedding.

\medskip
\noindent

A central idea in our algorithm for OTA-fix is a face-splitting technique that could possibly be useful in other contexts as well. We identify a “large” face in the embedding that hinders connectivity, and we add internal edges to split this face into smaller ones, which simplifies the augmentation task. Moreover, for the outerplanar augmentation problem, we prove a noteworthy structural property of optimal solutions: at most two new edges need to be added inside the outerplanar graph in any optimum augmentation, no matter how large the graph is. This property significantly restricts the search space of possible solutions and underpins the correctness and optimality of our algorithm.

\subsection{Related Work}

Biconnectivity augmentation is the special case of $k$-connectivity augmentation for $k=2$. Polynomial-time algorithms are known for $k=3$~\cite{hsu1991linear,watanabe1993minimum} and $k=4$~\cite{hsu1992four}, while the problem remained widely open for $k\ge 5$. This was resolved when, for a fixed $k$, Jackson et al.~\cite{jackson2005independence} gave a polynomial-time algorithm, and V\'{e}gh~\cite{DBLP:journals/siamdm/Vegh11} showed that a $(k-1)$-connected graph can be augmented to $k$-connectivity in polynomial time. The analogous edge version includes bridge-connectivity augmentation ($k=2$), and general $k$-edge-connectivity augmentation for which Watanabe~\cite{watanabe1987edge} provided a polynomial-time algorithm. 

For planar biconnectivity augmentation (PVCA), Kant and Bodlaender~\cite{kant1991planar} gave a 2-approximation and claimed a $3/2$-approximation; counterexamples were later shown by Fialko and Mutzel~\cite{fialko1998new}, who proposed a $5/3$-approximation, which was also refuted by Gutwenger, Mutzel and Zey~\cite{gutwengerhardness}. Thus, the best known ratio remains 2. Geometric variants of biconnectivity augmentation have also been studied~\cite{altri,ABELLANAS2008220,toth2012connectivity}. Other works on planar graphs include augmentation problems where the output is required to be regular (or cubic)~\cite{hartmann2012cubic, hartmann2015regular}.

\paragraph*{Organization of the paper.} In \cref{bridge}, we give a polynomial-time algorithm for PECA-Fix in connected planar graphs with a fixed embedding. In \cref{triconnproblem}, we study our main contribution to this paper: the OTA-Fix algorithm and its analysis. We first provide an algorithm that uses slightly more than the optimal number of augmentation edges in \cref{almostopt}. 
Then, in \cref{optalg} we use this to provide an exact polynomial time algorithm.

%% file: Sections/2.1.prelims.tex
\section{Preliminaries}

For a graph $G$, $V(G)$ and $E(G)$ denote its vertex and edge sets. We use standard graph-theoretic terminology as in~\cite{diestel2017} for connectivity, cut-edges (also called bridges), cut-vertices and planar graphs.
In particular, a graph is $k$-vertex-connected if it has more than three vertices and remains connected after deleting any set of at most $k-1$ vertices. Similarly, a graph is $k$-edge-connected if it is connected and remains so even when any subset of edges of up to $k-1$ size is deleted.  A \emph{plane graph} is a planar graph equipped with a fixed
embedding in the plane.

 A subset $S\subseteq V(G)$ is a \emph{block} (or \emph{biconnected component}) if $G[S]$ is biconnected and maximal with this property. A \emph{bridge-component} is a maximal vertex set $S$ such that $G[S]$ is $2$-edge connected.

The \emph{block-cut tree} ($bc$-tree) of a connected graph $G$, denoted $bc(G)$, is the graph whose nodes are the blocks (\emph{b-nodes}) and cut vertices (\emph{c-nodes}) of $G$, with a block node adjacent to a cut-vertex node iff the cut vertex lies in the block.

Similarly, the \emph{bridgetree} $bt(G)$ has nodes corresponding to bridge-components (b-nodes) and cut-edges (c-nodes), where a bridge-component node is adjacent to a cut-edge node iff an endpoint of the cut-edge lies in the component. Every c-node in $bt(G)$ has degree $2$.

The blocks corresponding to the leaf b-nodes in $bc(G)$ are called \emph{pendant blocks}. Pendant bridge-components are defined analogously in $bt(G)$.

\begin{definition}[Massive node \& Balanced BC-tree]\label{defnbalancedbc}
    Given a connected graph $G$,
    let $p$ be the number of leaves (pendants) in $bc(G)$. A c-node $c^*$ in $bc(G)$,  is called \emph{massive} if $\deg_{bc(G)}(c^*) \ge \pendant{p}+2$. The $bc$-tree of $G$ is called \emph{balanced} if it does not contain any massive c-node.
    Otherwise, $bc(G)$ is called \emph{unbalanced}.
\end{definition}

\begin{figure}
  \centering
  \begin{minipage}[t]{0.48\textwidth}
    \centering
    \input{Figs/tikzfigs/edgegraph.tex}

    \smallskip
    \textbf{(a)}
    \label{fig:sfig1}
  \end{minipage}\hfill
  \begin{minipage}[t]{0.48\textwidth}
    \centering
    \input{Figs/tikzfigs/bridgetree.tex}

    \smallskip
    \textbf{(b)}
    \label{fig:sfig2}
  \end{minipage}

  \caption{An example of representing a $bt$-tree. (a) Original graph with a fixed embedding; (b) The bridge tree of the graph.}
  \label{fig:bridgetree}
\end{figure}

A bridge-component of $G$ is called \emph{pendant} if the node corresponding to it in the bridge-tree of $G$ is a leaf. 

Recall that $G$ has a fixed planar embedding. Two bridge-components $B_1,B_2$ are
\emph{matchable} if they contain vertices on a common face, so that adding an edge between them preserves the embedding. The corresponding b-nodes in $bt(G)$ are also called matchable.

\begin{observation}\label{bridgetreechange}
Let $u,v$ be matchable pendant b-nodes in $bt(G)$ and let $G'=G+e$ where $e$ joins vertices in components of $u$ and $v$. Then $bt(G')$ is obtained from $bt(G)$ by contracting the $u$–$v$ path in $bt(G)$ into a single b-node $b'$, preserving all adjacencies outside the path.
\end{observation}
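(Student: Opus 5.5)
The plan is to describe the bridge-components of $G'=G+e$ directly and compare them with those of $G$. Let $e=xy$ with $x\in B_u$ and $y\in B_v$, where $B_u,B_v$ are the bridge-components of $u,v$. Let the $u$–$v$ path in $bt(G)$ be $u=b_0,c_1,b_1,\dots,c_k,b_k=v$. Here each $c_i$ is a cut-edge $f_i$ of $G$ joining $B_{i-1}$ and $B_i$; recall that every c-node has degree $2$. Matchability plays no role in the combinatorial claim. It only guarantees that $e$ can be drawn inside a common face, so that $G'$ is still a plane graph extending the embedding. The statement itself is purely about $bt$.

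\textbf{Step 1: the path collapses.} In $G$, pick for each $i$ a walk through $B_{i-1}$ to the endpoint of $f_i$, cross $f_i$, and continue inside $B_i$. Concatenating these walks gives an $x$–$y$ path $P$ in $G$ that uses exactly the cut-edges $f_1,\dots,f_k$. Adding $e$ closes $P$ into a cycle. So each $f_i$ lies on a cycle of $G'$ and is no longer a bridge. Since $2$-edge-connectivity (lying on a common cycle, equivalently being joined by two edge-disjoint paths) is an equivalence relation on vertices, $B_0\cup\dots\cup B_k$ lies inside a single bridge-component $B'$ of $G'$.

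\textbf{Step 2: nothing else changes.} Let $f$ be any cut-edge of $G$ not among $f_1,\dots,f_k$. Removing $f$ from $G$ separates the tree $bt(G)$ into two sides. The path $u\ldots v$ avoids the c-node of $f$, so $x$ and $y$ lie on the same side. Hence $G'-f=(G-f)+e$ is still disconnected, and $f$ remains a bridge of $G'$. Any edge that was not a bridge in $G$ stays a non-bridge in $G'$. The new edge $e$ lies on a cycle, so it is not a bridge either. Therefore the bridges of $G'$ are exactly the bridges of $G$ minus $\{f_1,\dots,f_k\}$. The bridge-components of $G'$ are the components left after deleting its bridges. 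This yields the old components off the path, together with the single merged component $B'=\bigcup_i B_i$, which is connected via the $f_i$ and $e$.

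\textbf{Step 3: read off the tree.} Build $bt(G')$ from this description. Every surviving cut-edge keeps its endpoints. An endpoint that previously lay in some $B_i$ on the path now lies in $B'$. So the adjacencies are exactly those of $bt(G)$ with the path contracted to $b'$, and edges to nodes off the path are preserved. The only step needing care is Step 2, showing that no other bridge is destroyed. The tree-separation argument handles it: $e$ only closes cycles through c-nodes on the tree path between its endpoints. Pendancy of $u,v$ is not needed for the contraction statement. It only matters later, for counting leaves.
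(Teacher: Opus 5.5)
Your argument is correct. The paper states this as an observation without proof, so you are supplying the standard justification it takes for granted: the cut-edges $f_1,\dots,f_k$ on the tree path lie on the cycle closed by $e$; every other bridge survives because $x$ and $y$ lie on the same side of its c-node; and $bt(G')$ is read off from the components of $G'$ minus its bridges.

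One small correction: ``lying on a common cycle'' is not an equivalence relation on vertices (take two triangles sharing a vertex). The relation to invoke is ``joined by two edge-disjoint paths'', which is what your transitivity step actually uses.
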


\begin{observation}[Lower bound]\label{lowerbound}
If $p$ is the number of pendants in $bt(G)$, at least $\lceil p/2\rceil$ edges are required to make $G$ 2-edge-connected.
\end{observation}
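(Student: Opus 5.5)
The argument is a counting one: each added edge can ``cover'' at most two pendant bridge-components, and every pendant bridge-component must be covered by at least one added edge.

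Fix a pendant bridge-component $B$ of $G$, i.e.\ a leaf b-node of $bt(G)$. Since $B$ is a leaf, exactly one cut-edge $e_B$ of $G$ has an endpoint in $B$. (The degenerate case where $bt(G)$ is a single node gives $p=0$, and the bound is trivial.) Removing $e_B$ separates $V(B)$ from $V(G)\setminus V(B)$, and both sides are nonempty.

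Let $E'$ be any edge set such that $G+E'$ is 2-edge-connected. Then $e_B$ is not a bridge of $G+E'$, so some edge of $E'$ crosses the cut $(V(B),V(G)\setminus V(B))$. Such an edge has at least one endpoint in $V(B)$; call it an edge incident to $B$.

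The bridge-components partition $V(G)$, so the pendant components are pairwise vertex-disjoint. An edge has two endpoints, so it is incident to at most two pendant components. Double counting the pairs (pendant component, incident edge of $E'$) gives
\[ p \le \sum_{B \text{ pendant}} |\{f\in E' : f \text{ incident to } B\}| \le 2|E'|. \]
Hence $|E'|\ge p/2$, and since $|E'|$ is an integer, $|E'|\ge\lceil p/2\rceil$.

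The embedding constraint only restricts which sets $E'$ are allowed, so the bound holds in particular for embedding-preserving augmentations. The only point needing care is that the pendant components are disjoint and each is separated from the rest of the graph by exactly one cut-edge; both facts follow directly from the definition of $bt(G)$.
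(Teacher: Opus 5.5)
Your argument is correct. The paper states this observation without proof, and your cut-based double counting is exactly the standard justification it takes for granted: each pendant bridge-component needs a new edge crossing the cut of its unique incident bridge, and each new edge has endpoints in at most two of the pairwise disjoint pendant components. The one fact you use beyond the definition of $bt(G)$ is that every edge of $G$ leaving a bridge-component is itself a bridge; this is what makes $e_B$ the only edge of $G$ crossing the cut $(V(B),V(G)\setminus V(B))$, and it is standard but deserves a line.
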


For any two pendants $p_1$ and $p_2$, adding the edge between them reduces the number of pendants remaining at least by 1. We shall now define a condition for bridgetrees analogous to the leaf connectivity condition in \cite{gutwenger2009planar}.

\begin{definition}[Leaf connection condition (l.c.c)]\label{llc}
Two pendants $p_1,p_2$ in $bt(G)$ satisfy the l.c.c. if the path between them contains either two b-nodes of degree at least three or one b-node of degree at least four.
\end{definition}

\begin{lemma}\label{decreaseby2}
If $p_1,p_2$ satisfy the l.c.c., adding an edge between them reduces the number of pendants by two.
\end{lemma}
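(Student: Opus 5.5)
By \cref{bridgetreechange}, adding an edge $e$ between (vertices in the components of) the matchable pendants $p_1,p_2$ produces $bt(G+e)$ by contracting the $p_1$--$p_2$ path $P$ in $bt(G)$ into a single b-node $b'$, keeping all adjacencies outside $P$. The plan is therefore to count the leaves of the contracted tree, not to argue about $G$ directly. Nodes off $P$ keep their degrees, so leaves off $P$ stay leaves and no new leaves appear off $P$. The two leaves $p_1,p_2$ lie on $P$ and are absorbed into $b'$. So the pendant count drops by exactly two precisely when $b'$ is not itself a leaf of $bt(G+e)$, i.e.\ when $\deg(b')\ge 2$. (If $bt(G+e)$ is a single node, there are no pendants left; I expect the paper's convention to count this as the degenerate case, which I would treat separately.)

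The main step is to compute $\deg(b')$. Every neighbour of $b'$ in $bt(G+e)$ corresponds to an edge of $bt(G)$ joining a node of $P$ to a node off $P$. Internal nodes of $P$ are either c-nodes or b-nodes.

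\begin{itemize}
\item A c-node has degree exactly $2$ in $bt(G)$, and both its edges lie on $P$ when it is internal to $P$, so it contributes nothing.
\item An internal b-node $x$ of $P$ has two edges on $P$, so it contributes $\deg(x)-2$ off-path edges.
\item The endpoints $p_1,p_2$ are leaves and contribute nothing.
\end{itemize}

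Hence
\[
\deg(b') = \sum_{x\in P\text{ internal b-node}} (\deg(x)-2).
\]
Under the l.c.c., either two b-nodes on $P$ have degree at least three, giving $\deg(b')\ge 2$, or one b-node on $P$ has degree at least four, again giving $\deg(b')\ge 2$. A node of degree $\ge 3$ cannot be a leaf endpoint, so these nodes are indeed internal. Also, because $bt(G)$ is a tree, these off-path edges lead to distinct subtrees, so they give distinct neighbours of $b'$ and no edges collapse.

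The expected obstacle is justifying the "no collapse / no new leaf" claims carefully. The contraction must not create parallel edges or merge neighbours, and this follows since in a tree the components of $bt(G)-P$ attach to $P$ by single edges. The c-nodes adjacent to $P$ also keep degree $2$, so the result is still a valid bridge tree with no spurious c-leaves. Once this is checked, $p_1$ and $p_2$ disappear, $b'$ has degree at least $2$, and every other leaf is preserved, so the number of pendants decreases by exactly two.
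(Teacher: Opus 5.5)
Your proof is correct: contracting the $p_1$--$p_2$ path via \cref{bridgetreechange} and computing $\deg(b')=\sum(\deg(x)-2)$ over the internal b-nodes shows that the l.c.c.\ forces $\deg(b')\ge 2$. This also rules out your degenerate single-node case, so exactly the two pendants $p_1,p_2$ disappear and no new pendant arises. The paper states this lemma without an explicit proof and relies on the same path-contraction reasoning that appears in the proof of \cref{lccsatisfied}, so your argument is the intended one with the details filled in.
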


\medskip
\emph{Bundles} and \emph{labels} were defined with the context of $bc$-tree in  \cite{fialko1998new}. We shall do it for our requirement in the $bt$-tree now. 

\medskip
\noindent\textbf{Bundles and labels.}
Let $P$ be the set of pendants of $bt(G)$. A set $B\subseteq P$ is a
\emph{bundle} if (i) every pair in $B$ is matchable, (ii) joining any pair produces a new pendant in $bt(G+e)$, and (iii) $B$ is maximal. By \cref{bridgetreechange}, two pendants lie in the same bundle iff the path between them in $bt(G)$ contains exactly one degree-3 b-node, called the \emph{parent} of the bundle. A bundle together with its parent is a \emph{label}; its size $L(\ell)$ is the number of pendants the bundle contains.

\medskip
\noindent\textbf{Intuition.} 
Inside a bundle, any two pendants are mutually matchable and behave the same: connecting any pair does not eliminate the entire group, but produces a new pendant through their common parent. Thus, the bundle acts as a single unit attached to the rest of the bridgetree at its parent. 
Labels therefore partition the pendants into blocks that cannot be fully resolved internally; useful augmentation edges typically connect pendants from \emph{different} labels, which ensures progress (via the l.c.c.) in reducing the number of pendants.

\begin{restatable}{lemma}{lccsatisfied}\label{lccsatisfied}
If two matchable pendants belong to different labels, then they satisfy the l.c.c.
\end{restatable}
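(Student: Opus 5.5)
We argue by contrapositive. Let $p_1,p_2$ be matchable pendants of $bt(G)$ that violate the l.c.c. We show they lie in the same label, i.e.\ in a common bundle. Let $Q$ be the $p_1$–$p_2$ path in $bt(G)$. Its interior alternates between c-nodes (bridges, always of degree $2$) and b-nodes. Since $p_1$ and $p_2$ are distinct pendants and the tree is not just the edge $p_1p_2$, $Q$ contains at least one interior b-node. An interior b-node on a path has degree at least $2$. Violating the l.c.c.\ means that $Q$ contains no b-node of degree at least four and at most one b-node of degree at least three. So every interior b-node of $Q$ has degree $2$, with at most one exception of degree exactly $3$.

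Next we use \cref{bridgetreechange}. Adding an edge $e$ between the components of $p_1$ and $p_2$ contracts $Q$ into a single b-node $b'$. The neighbours of $b'$ in $bt(G+e)$ are exactly the neighbours, off $Q$, of the nodes of $Q$. Interior c-nodes contribute nothing, since both their neighbours lie on $Q$. The endpoints $p_1,p_2$ contribute nothing, since they are leaves. Each interior b-node $x$ contributes $\deg(x)-2$ neighbours. Hence $\deg_{bt(G+e)}(b')$ is $0$ or $1$.
\begin{itemize}
\item If it is $0$, then $bt(G)$ was exactly the path $Q$, and $G+e$ is 2-edge-connected. We treat this degenerate case separately: with only two pendants there is at most one label.
\item Otherwise $Q$ contains exactly one degree-$3$ b-node, all other interior b-nodes have degree $2$, and $b'$ is a new pendant.
\end{itemize}

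In the second case, $p_1$ and $p_2$ are matchable by hypothesis, and joining them produces a new pendant. By the characterization stated right after the definition of bundles, two pendants lie in the same bundle iff their path contains exactly one degree-$3$ b-node. So $p_1$ and $p_2$ lie in the same bundle, whose parent is that degree-$3$ node. They therefore belong to the same label, contradicting the hypothesis.

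The main obstacle is the maximality clause (iii) in the definition of a bundle. A priori, bundles might not partition the pendants: a maximal pairwise-matchable, pairwise-pendant-producing set containing $p_1$ need not contain $p_2$. To handle this, I would first show that any two pendants related as above have the same parent, namely the unique degree-$3$ b-node on their path. Then "same parent and matchable" is the relevant relation. Within one parent, the pendants hanging off the parent's degree-$2$ chains number at most $3$ (one per incident branch, at most two of which lead to leaves in the relevant configuration). A short case check shows that matchable ones form a single clique. Since labels are identified by their parent, $p_1$ and $p_2$ lie in the same label.
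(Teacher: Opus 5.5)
Your argument is correct and is essentially the paper's: contrapositively, failing the l.c.c.\ leaves at most one interior b-node of degree $3$ (all others of degree $2$), so by \cref{bridgetreechange} the contracted node is a new pendant and $p_1,p_2$ share a bundle. Your explicit degree count and your separate handling of the case where $bt(G)$ is just the path go beyond the paper, which skips that case; note, though, that ``at most one label'' there is the paper's implicit convention rather than a consequence of the bundle definition, since the join leaves no pendant at all.

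The maximality ``obstacle'' in your last paragraph is unnecessary: $\{p_1,p_2\}$ already satisfies (i) and (ii), so it extends to a maximal such set, i.e.\ a bundle containing both. If you still want labels to partition the pendants, your ``short case check'' amounts to transitivity of matchability among pendant components. That needs a planarity argument (a pendant component shares exactly one face with the rest of $G$), and it is automatic in the facegraph setting, where all pendants lie on $F$.
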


\begin{proof}
    We prove that if $p_1$ and $p_2$ don't satisfy the l.c.c, then they must belong to the same label; this implies the statement of the lemma.
    If the l.c.c is not satisfied, then the path between $p_1$ and $p_2$ will have only one degree $3$ b-vertex, $v$. In this case, adding an edge between $p_1$ and $p_2$ leaves the new coalesced block obtained from $v$ with exactly one adjacent c-node and hence $v$ is a new pendant, leaving us with only one less pendant. Hence, $p_1$ and $p_2$ belong to the same label.
\end{proof}

%% file: Figs/tikzfigs/edgegraph.tex
\scalebox{0.7}{\tikzset{every picture/.style={line width=0.75pt}} 
	
	\begin{tikzpicture}[x=0.75pt,y=0.75pt,yscale=-1,xscale=1]
		
		\draw   (137.57,45.37) -- (218.9,45.37) -- (218.9,116.08) -- (137.57,116.08) -- cycle ;
		\draw   (218.9,116.08) -- (259.57,176.86) -- (178.24,176.86) -- cycle ;
		\draw    (259.57,176.86) -- (297.92,233.15) ;
		\draw  [fill={rgb, 255:red, 0; green, 0; blue, 0 }  ,fill opacity=1 ] (133.48,45.37) .. controls (133.48,43.22) and (135.31,41.48) .. (137.57,41.48) .. controls (139.83,41.48) and (141.65,43.22) .. (141.65,45.37) .. controls (141.65,47.51) and (139.83,49.25) .. (137.57,49.25) .. controls (135.31,49.25) and (133.48,47.51) .. (133.48,45.37) -- cycle ;
		\draw  [fill={rgb, 255:red, 0; green, 0; blue, 0 }  ,fill opacity=1 ] (214.82,45.37) .. controls (214.82,43.22) and (216.65,41.48) .. (218.9,41.48) .. controls (221.16,41.48) and (222.99,43.22) .. (222.99,45.37) .. controls (222.99,47.51) and (221.16,49.25) .. (218.9,49.25) .. controls (216.65,49.25) and (214.82,47.51) .. (214.82,45.37) -- cycle ;
		\draw  [fill={rgb, 255:red, 0; green, 0; blue, 0 }  ,fill opacity=1 ] (133.48,116.08) .. controls (133.48,113.94) and (135.31,112.2) .. (137.57,112.2) .. controls (139.83,112.2) and (141.65,113.94) .. (141.65,116.08) .. controls (141.65,118.23) and (139.83,119.97) .. (137.57,119.97) .. controls (135.31,119.97) and (133.48,118.23) .. (133.48,116.08) -- cycle ;
		\draw  [fill={rgb, 255:red, 0; green, 0; blue, 0 }  ,fill opacity=1 ] (214.82,116.08) .. controls (214.82,113.94) and (216.65,112.2) .. (218.9,112.2) .. controls (221.16,112.2) and (222.99,113.94) .. (222.99,116.08) .. controls (222.99,118.23) and (221.16,119.97) .. (218.9,119.97) .. controls (216.65,119.97) and (214.82,118.23) .. (214.82,116.08) -- cycle ;
		\draw  [fill={rgb, 255:red, 0; green, 0; blue, 0 }  ,fill opacity=1 ] (174.15,176.86) .. controls (174.15,174.71) and (175.98,172.98) .. (178.24,172.98) .. controls (180.49,172.98) and (182.32,174.71) .. (182.32,176.86) .. controls (182.32,179) and (180.49,180.74) .. (178.24,180.74) .. controls (175.98,180.74) and (174.15,179) .. (174.15,176.86) -- cycle ;
		\draw  [fill={rgb, 255:red, 0; green, 0; blue, 0 }  ,fill opacity=1 ] (255.49,176.86) .. controls (255.49,174.71) and (257.32,172.98) .. (259.57,172.98) .. controls (261.83,172.98) and (263.66,174.71) .. (263.66,176.86) .. controls (263.66,179) and (261.83,180.74) .. (259.57,180.74) .. controls (257.32,180.74) and (255.49,179) .. (255.49,176.86) -- cycle ;
		\draw  [fill={rgb, 255:red, 0; green, 0; blue, 0 }  ,fill opacity=1 ] (293.83,233.15) .. controls (293.83,231) and (295.66,229.27) .. (297.92,229.27) .. controls (300.17,229.27) and (302,231) .. (302,233.15) .. controls (302,235.29) and (300.17,237.03) .. (297.92,237.03) .. controls (295.66,237.03) and (293.83,235.29) .. (293.83,233.15) -- cycle ;
		\draw  [color={rgb, 255:red, 35; green, 134; blue, 245 }  ,draw opacity=1 ][dash pattern={on 5.63pt off 4.5pt}][line width=1.5]  (141,31.03) .. controls (161,20.03) and (235,8.03) .. (255,48.03) .. controls (275,88.03) and (304.5,197.84) .. (249.5,202.84) .. controls (194.5,207.84) and (152,185.03) .. (127,145.03) .. controls (102,105.03) and (121,42.03) .. (141,31.03) -- cycle ;
		\draw  [color={rgb, 255:red, 35; green, 134; blue, 245 }  ,draw opacity=1 ][dash pattern={on 5.63pt off 4.5pt}][line width=1.5]  (279.47,233.15) .. controls (279.47,222.96) and (287.73,214.71) .. (297.92,214.71) .. controls (308.1,214.71) and (316.36,222.96) .. (316.36,233.15) .. controls (316.36,243.33) and (308.1,251.59) .. (297.92,251.59) .. controls (287.73,251.59) and (279.47,243.33) .. (279.47,233.15) -- cycle ;
		
		\draw (128,176) node [font = \Large] [anchor=north west][inner sep=0.75pt]   [align=left] {$\displaystyle b_{1}$};
		\draw (292,256) node [font = \Large][anchor=north west][inner sep=0.75pt]   [align=left] {$\displaystyle b_{2}$};
		\draw (280,189) node [font = \Large] [anchor=north west][inner sep=0.75pt]   [align=left] {$\displaystyle c_{1}$};

\end{tikzpicture}}

%% file: Figs/tikzfigs/bridgetree.tex
\scalebox{0.7}{\tikzset{every picture/.style={line width=0.75pt}} 
	
	\begin{tikzpicture}[x=0.75pt,y=0.75pt,yscale=-1,xscale=1]
		
		\draw   (209.6,155.22) .. controls (209.6,147.04) and (216.63,140.41) .. (225.3,140.41) .. controls (233.97,140.41) and (241,147.04) .. (241,155.22) .. controls (241,163.4) and (233.97,170.03) .. (225.3,170.03) .. controls (216.63,170.03) and (209.6,163.4) .. (209.6,155.22) -- cycle ;
		\draw  [color={rgb, 255:red, 35; green, 134; blue, 245 }  ,draw opacity=1 ][fill={rgb, 255:red, 142; green, 194; blue, 252 }  ,fill opacity=1 ] (150,84.23) .. controls (150,75.29) and (157.39,68.03) .. (166.5,68.03) .. controls (175.61,68.03) and (183,75.29) .. (183,84.23) .. controls (183,93.18) and (175.61,100.44) .. (166.5,100.44) .. controls (157.39,100.44) and (150,93.18) .. (150,84.23) -- cycle ;
		\draw [line width=1.5]    (176.5,96.84) -- (215.5,143.84) ;
		\draw [line width=1.5]    (234.5,166.84) -- (287.15,224.31) ;
		\draw  [color={rgb, 255:red, 35; green, 134; blue, 245 }  ,draw opacity=1 ][fill={rgb, 255:red, 142; green, 194; blue, 252 }  ,fill opacity=1 ] (272.3,224.31) .. controls (272.3,216.18) and (278.95,209.6) .. (287.15,209.6) .. controls (295.35,209.6) and (302,216.18) .. (302,224.31) .. controls (302,232.44) and (295.35,239.03) .. (287.15,239.03) .. controls (278.95,239.03) and (272.3,232.44) .. (272.3,224.31) -- cycle ;
		
		\draw (132,91) node [font = \Large] [anchor=north west][inner sep=0.75pt]   [align=left] {$\displaystyle b_{1}$};
		\draw (193,158) node [font = \Large] [anchor=north west][inner sep=0.75pt]   [align=left] {$\displaystyle c_{1}$};
		\draw (254,231) node [font = \Large][anchor=north west][inner sep=0.75pt]   [align=left] {$\displaystyle b_{2}$};
		
\end{tikzpicture}}

%% file: Sections/2.bridgeconnectivityaug.tex
\section{Bridgeconnectivity augmentation for a fixed embedding}\label{bridge}
In this section, we give an algorithm that solves PECA-Fix. 
Recall that, in PECA-Fix, the input is a connected plane graph $G$ and the goal is to find a subset of
edges $E' \subseteq V(G) \times V(G) $, of minimum size such that $G' = G+E'$ is bridgeconnected and planar.

Let us restate our result for the sake of convenience.

\begin{restatable}{theorem}{twoedgeconnopt}
\label{thm_2edgeconnopt}
 There exists an  algorithm that for a given graph $G$ solves PECA-Fix in $\mathcal{O}(|V(G)|(1+\alpha(|V(G)|)))$ time and requires $\mathcal{O}(|V(G)|)$ space.
\end{restatable}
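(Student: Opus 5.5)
Our plan is to follow the Gutwenger--Mutzel--Zey framework for PVCA-Fix, transferred to the bridge-tree $bt(G)$. First, compute the bridge-components, the bridge-tree, and for every face of the embedding the list of bridge-components that appear on its boundary; all of this takes linear time and space via a DFS and a traversal of the faces. The augmentation will be built greedily. Each newly added edge joins two matchable pendants that share a face and lie in different labels, whenever such a pair exists. By \cref{lccsatisfied} and \cref{decreaseby2}, every such edge removes two pendants.

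The optimality argument is a counting lower bound that refines \cref{lowerbound}. We will show that a minimum solution has size
\[
\max\Bigl\{\pendant{p},\; \max_{\ell} L(\ell) \Bigr\}
\]
or a closely related quantity computed face by face. Pendants inside one bundle can only be merged into a new pendant through the common parent. A bundle of size $L(\ell)$ therefore needs at least $L(\ell)$ edges incident to its pendants unless it is combined with pendants from other labels on shared faces. We then prove that the greedy rule achieves this bound. The key invariant is this: if the largest label is not "massive" relative to the remaining pendants, there is always an admissible pair from different labels on a common face. The justification is that each face boundary, read in cyclic order, meets pendants in a planar, non-crossing pattern. Pairing the largest label against the others on that face, as in the leaf-connection argument of \cite{gutwenger2009planar}, keeps all labels balanced. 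If a label becomes massive, we instead add edges pairing its pendants with arbitrary matchable pendants. Each such edge removes one pendant from the label, which meets the $L(\ell)$ bound.

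We must also check that each added edge can be drawn inside its face without crossing earlier added edges. Within a face we pair pendants in a nested (parenthesis-like) order. An edge added in a face splits it into two faces, and the face lists are updated accordingly.

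For the running time, maintain labels in bucket structures indexed by size so the largest label is available in $O(1)$. Use union--find over b-nodes to implement the path contractions of \cref{bridgetreechange}, which gives the $\alpha(|V|)$ factor. Path lengths can be charged to the nodes they destroy, so the total work is $O(|V|(1+\alpha(|V|)))$.

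The main obstacle is the exact optimality proof: showing that the greedy choice of label pairs, restricted to what shares a face, never strands pendants that a cleverer solution could have merged. I would first try an exchange argument. Take an optimal solution that agrees with the greedy choices for as long as possible. At the first disagreement, replace its edges at the chosen pendants using the non-crossing structure of face boundaries, and show the modified solution is no larger and agrees one step further.
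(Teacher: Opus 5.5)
Your proposal has a genuine gap: the value $\max\{\lceil p/2\rceil,\max_\ell L(\ell)\}$ is not the optimum, and it fails in both directions.

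The term $\max_\ell L(\ell)$ imports the massive-c-node obstruction of \cite{gutwenger2009planar}, but no such obstruction exists in a bridge tree. Every c-node of $bt(G)$ is a bridge of degree exactly $2$. The parent of a bundle must be a b-node of degree exactly $3$, since at degree $\ge 4$ the l.c.c.\ already holds. Hence every label has at most three pendants, and a label of size $3$ is closed with the two edges $p_1p_2$ and $p_2p_3$, not three. For example, take a triangle $xyz$ with pendant edges $yb_1$, $zb_2$, $zb_3$ drawn in the outer face. This is a single bundle of size $3$, yet $2$ edges suffice, so your bound overestimates.

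The bound also underestimates, because $\lceil p/2\rceil$ counted on the global tree $bt(G)$ is not attainable. Add to the example a pendant edge $xa$ drawn inside the triangle. Now $bt(G)$ has four pendants in four singleton labels, so your formula gives $2$. The true optimum is $3$: one edge inside the triangle and two outside. Your greedy also stalls on this instance. After joining $b_1b_2$, the remaining pendants $a$ and $b_3$ share no face, so your claimed invariant (an admissible pair always exists) is false. The exchange argument you defer therefore has no correct target to converge to.

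The missing idea is the per-face decomposition the paper uses. A bridge $e$ of a connected plane graph has the same face $F$ on both sides. $F$ is the only face whose boundary meets both components of $G-e$, since any other face boundary is connected and avoids $e$. So every edge covering $e$ must be drawn inside $F$. The problem therefore splits into independent subproblems on $H_F=\textsc{facegraph}(F)$, and $bt(H_F)$, not $bt(G)$, is the right object. Its pendants may be interior nodes of $bt(G)$; for instance, the triangle becomes a pendant in the inner face of the example above. Within $F$, all of these pendants are pairwise matchable.

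Each pendant of $bt(H_F)$ needs an incident new edge inside $F$, which gives the trivial lower bound $\lceil p_F/2\rceil$. \textsc{Face\_Conn} attains it:
\begin{itemize}
\item While at least two labels remain (automatic when $p_F>3$, since labels have size at most $3$), it joins two cyclically adjacent pendants of different labels.
\item Such an edge is non-crossing because the pendants are adjacent, and it removes two pendants by \cref{lccsatisfied} and \cref{decreaseby2}.
\item A final label of size $2$ or $3$ is closed with one or two edges.
\end{itemize}
Matching upper and lower bounds in every face give optimality $\sum_F\lceil p_F/2\rceil$ directly, with no exchange argument. Your union--find plan for the running time is in line with the paper's.
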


\subsection{Preliminaries and Structural Remarks}

For a plane graph $G$ which is a tree, we can provide a \emph{cyclic ordering} of the leaves in this tree from its embedding. We can start from some leaf node and move along the outer face (the only face for a tree) in clockwise direction and add the leaves whenever they are visited. This provides a cyclic ordering, $\pi$, of the leaves in this tree. Two pendants are \emph{adjacent} if they appear successively in the cyclic ordering. Two labels are said to be \emph{neighbours} if a pendant of one is adjacent to a pendant of another. Each label has an adjacent label.

For a face $F$, consider the subgraph of $G$ that consists of vertices and edges that lie on $F$, call it $\textsc{facegraph}(F)$ (for brevity, let $H:=\textsc{facegraph}(F)$). 
 
 Since augmentation edges lie inside faces, the problem reduces to making each facegraph bridgeconnected. We therefore operate on the bridgetree of each facegraph. The embedding of $bt(H)$ is inherited from the embedding of $H$ by preserving the cyclic order of incident cut edges at each block.

\subsection{Algorithm for PECA-Fix}

The algorithm works by computing the optimal edges to eliminate all the bridges in each face $F$. For each face $F$, \cref{Face_conn_edge} computes $H$, and iteratively connects two ‘pendant’ components to reduce their number (hinging on the approach of \cite{gutwenger2009planar} for biconnectivity). 

Specifically, the following is done in $H$: we pick a maximum sized label and connect one of its pendant to an adjacent pendant from a different label. The algorithm keeps repeating this process over and over again until there is only one label, and then we handle one of two possible cases: the remaining label has size 3, or the remaining label has size 2. In each of the cases we find the optimal solution. Refer \Cref{fig:facegraph} for an illustrative example of \textsc{Face\_Conn}.

\begin{algorithm}[t]
\caption{\textsc{PA\_Bridgeconn}}
\label{alg:Bridgeconn}

\KwIn{A connected plane graph $G$}
\KwOut{A set of vertex pairs $E'$ such that $G + E'$ is $2$-edge connected}

$E' \gets \emptyset$\;

\For{each face $F$}{
    $A \gets \textsc{Face\_Conn}(F)$\;
    $E' \gets E' \cup A$\;
}

\Return{$E'$}\;

\end{algorithm}

\begin{algorithm}[t]
\caption{\textsc{Face\_Conn}}
\label{Face_conn_edge}

\KwIn{A face $F$ of a graph with a fixed embedding}
\KwOut{List of edges $A$ that 2-edge-connect the face $F$}

$H \gets \textsc{facegraph}(F)$\;
Compute $bt(H)$\;
$A \gets \emptyset$\;

\While{number of labels $> 1$}{
    Consider $bt(F)$ and take any label $l_1$, and another neighbouring label $l_2$\;
    Let $p_1$ from $l_1$ and $p_2$ from $l_2$ represent pendant blocks adjacent to each other in the cyclic ordering of pendants\;
    Add an edge $e$ in $H$ connecting vertices of the blocks $p_1$ and $p_2$\;
    Update $H \gets H + e$\;
    $A \gets A \cup \{e\}$\;
}

\If{there is exactly one label $l$ in $H$}{
    \eIf{$\mathrm{size}(l) = 2$}{
        Let $e$ be an edge connecting a vertex from each pendant block\;
        $A \gets A \cup \{e\}$\;
        
    }{
        \If{$\mathrm{size}(l) = 3$}{
            Add $2$ edges $e_1$ and $e_2$ between pendants $p_1,p_2$ and $p_2,p_3$\;
            $A \gets A \cup \{e_1, e_2\}$\;
            
        }
    }
}

\Return{$A$}\;
\end{algorithm}

\begin{figure}
  \centering
  \begin{minipage}[t]{0.45\textwidth}
    \centering
    \includegraphics[scale = 0.7]{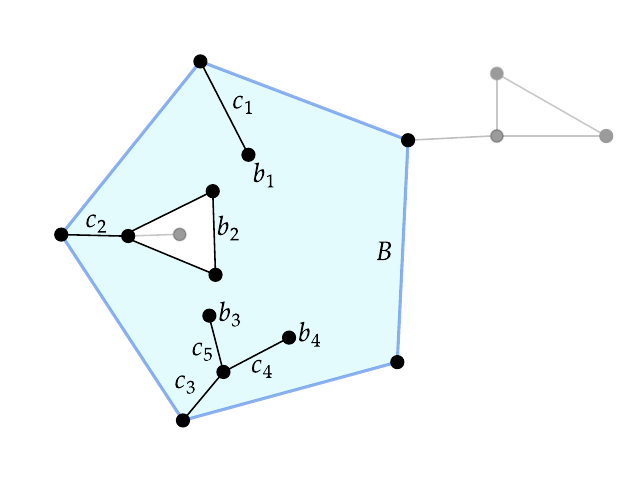}
    \smallskip

    \textbf{(a)}
    \label{fig1}
  \end{minipage}\hfill
  \begin{minipage}[t]{0.45\textwidth}
    \centering
    \includegraphics[scale = 0.6]{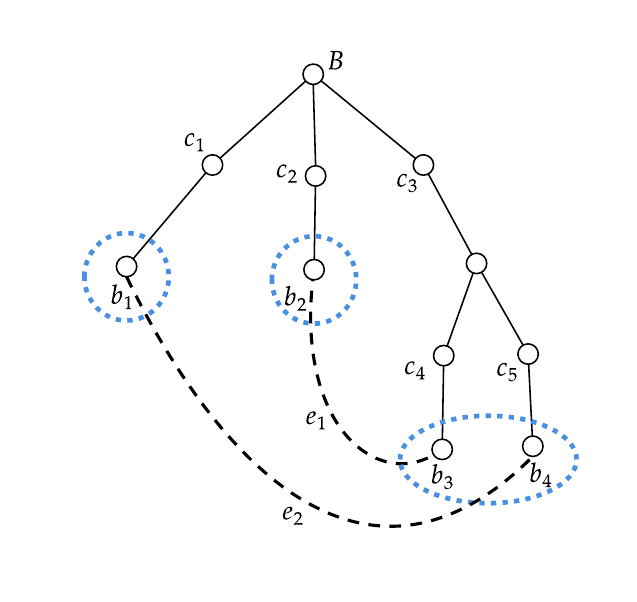}
    \smallskip

    \textbf{(b)}
    \label{fig2}
  \end{minipage}

  \caption{(a) A connected graph with a particular highlighted face containing cut edges (denoted by $c_i$'s) and bridge-components (denoted by $b_i$'s);
  (b) The bridge-tree decomposition of the \textsc{facegraph}, with dotted light-blue edges denoting the labels.
  The dotted black edges $e_1$ and $e_2$ show the augmentation edges between two neighbouring labels, in the first and second round of \textsc{Face\_Conn} respectively.}
  \label{fig:facegraph}
\end{figure}

\begin{restatable}{lemma}{bridgeconnplanar}\label{lem:bridgeconnplanar}
	{\sc PA\_Bridgeconn} outputs a set of edges $E'$ such that $G+E'$ is planar, extends the planar embedding of $G$, and is bridgeconnected.
\end{restatable}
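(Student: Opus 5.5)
We need to show three things: planarity and extension of the embedding, and bridge-connectivity of $G+E'$. The plan is to treat them separately. The first two follow from drawing each new edge inside the face it was computed for. Bridge-connectivity follows from a per-face argument plus the observation that every bridge of $G$ lies on some face.

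\textbf{Planarity and extension.} Every edge produced by \textsc{Face\_Conn}$(F)$ joins two vertices of $H=\textsc{facegraph}(F)$ that lie on the boundary of $F$. We show by induction over the iterations that all chosen edges can be drawn inside $F$ without crossing each other. Each edge joins pendants $p_1,p_2$ that are \emph{adjacent} in the cyclic ordering $\pi$ of pendants around the boundary walk of $F$, as inherited by $bt(H)$. The final one or two edges within a single label are likewise between consecutive pendants. Drawing an edge between consecutive boundary occurrences splits $F$ into a small face containing no further pendants and a face that still contains all remaining pendants in their cyclic order. Consequently the updated $H+e$ restricted to the large subface again has a cyclic ordering in which later chosen pairs are adjacent. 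So every later edge fits inside that subface, and all edges are pairwise non-crossing chords of $F$. Different faces are disjoint, so edges from different calls never cross. The original embedding is untouched, hence $G+E'$ is plane and extends the embedding of $G$.

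\textbf{Termination.} Each iteration of the while loop joins pendants from different labels, and these are matchable because they lie on $F$. By \cref{lccsatisfied} they satisfy the l.c.c., so by \cref{decreaseby2} the number of pendants drops by two. The loop therefore terminates. At that point either $bt(H)$ is a single node, or exactly one label remains. If one label remains, its size is $2$ or $3$, since a label's pendants all hang off a single degree-3 parent (this needs to be checked from the definition of bundle). Adding the one or two prescribed edges then collapses the tree to one node by \cref{bridgetreechange}, since contracting the paths between all leaves of a star-like tree yields a single node. So after \textsc{Face\_Conn}$(F)$, $H+A$ is 2-edge-connected.

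\textbf{Global bridge-connectivity.} Suppose $G+E'$ had a bridge $e$. Then $e\in E(G)$, since each new edge lies on a cycle in its $H+A$. A bridge of $G$ has the same face on both sides, call it $F$, so $e$ is a bridge of $\textsc{facegraph}(F)$. But $e$ lies on a cycle in $\textsc{facegraph}(F)+A_F\subseteq G+E'$, a contradiction.

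The main obstacle is the planarity induction: justifying that, after edges are added, the ``adjacent pendants'' chosen by the algorithm are still consecutive on the boundary of a single subface of $F$. This is delicate when a bridge-component appears several times along the facial walk. I would handle it by working with occurrences on the facial walk rather than with components, and by choosing the representative vertices at the consecutive occurrences.
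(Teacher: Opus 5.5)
Your proposal is correct and takes the same route as the paper: each added edge joins cyclically adjacent pendants of the facegraph and is drawn inside its face, and \textsc{Face\_Conn} runs until the bridge-tree of the facegraph collapses to a single node. Beyond the paper's two-line sketch, you fill in the induction keeping all remaining pendants on one subface, termination via \cref{lccsatisfied} and \cref{decreaseby2}, and the observation that every bridge of $G$ has the same face on both sides and is therefore eliminated within a single facegraph.
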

\begin{proof}

Each edge joins vertices on a common face and is inserted between adjacent
pendant components in the cyclic order, so it can be drawn inside the face
without crossings. The process continues until no cut-edge remains and the statement follows.

\end{proof}

What is left is to prove that within each face, our algorithm finds the optimal number of edges to add. We use this to conclude that the added edges in total comprise the optimal solution.

\begin{restatable}{lemma}{faceopt}\label{lem_faceopt}

	For each face $F$ such that $bt(\textsc{Facegraph}(F))$ has $p$ pendants, \textsc{Face\_Conn} uses $\pendant{p}$ (optimal) edges to 2-edge connect \textsc{Facegraph}$(F)$.

\end{restatable}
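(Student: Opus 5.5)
Optimality follows from \cref{lowerbound}: any augmentation of \textsc{Facegraph}$(F)$ needs at least $\pendant{p}$ edges. So it is enough to show that \textsc{Face\_Conn} adds at most $\pendant{p}$ edges. I will do this by a potential argument on the pendants of $bt(H)$ as $H$ changes. Every iteration of the while loop adds one edge between adjacent pendants $p_1\in l_1$ and $p_2\in l_2$ with $l_1\neq l_2$. These pendants lie on the common face $F$, so they are matchable. By \cref{lccsatisfied} they satisfy the l.c.c., and by \cref{decreaseby2} the new edge reduces the number of pendants by exactly two. By \cref{bridgetreechange}, the new bridge tree is obtained by contracting the $p_1$--$p_2$ path. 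Because $p_1,p_2$ are consecutive in the cyclic ordering, the contracted node can be drawn inside $F$, and the remaining pendants keep their cyclic order. The invariants needed for the next round therefore hold again: all pendants are on one face, each is matchable with its neighbours, and labels are defined relative to the new tree.

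If the loop runs $k$ times, $p-2k$ pendants remain, and they form at most one label. Suppose one label of size $2$ remains. Connecting its two pendants leaves only their parent node, which is then the whole tree, so $p-2k=2$ and $k+1=p/2$ edges are used in total. Suppose one label of size $3$ remains. Two edges suffice: the first merges $p_1,p_2$ through the parent into a new pendant, and the second joins that to $p_3$. Here $p-2k=3$, so the total is $k+2=\pendant{p}$. If no label remains, then $p\le 1$; since $bt(H)$ is a tree, this means it has a single node, and no edges are needed. In every case the count equals $\pendant{p}$.

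The main obstacle is showing that the termination configuration is really one of these cases. When a single label remains, I have to argue that the bundle's parent is the only other node of $bt(H)$, that is, the tree is a star whose centre is a b-node. I also have to rule out a lone label of size $\ge 4$. For the star shape: if some other node hung off the parent, the tree would have a leaf outside the bundle, and that leaf would belong to another label. For the size bound, my plan is to use the degree condition in the bundle definition. Two pendants lie in one bundle only when the path between them has a single degree-$3$ b-node, so a bundle has at most three pendants. Hence a single label has size $2$ or $3$. A second issue to watch is that the loop may create fresh single-pendant labels. That only produces more neighbouring labels, and each iteration still drops the pendant count by two by \cref{decreaseby2}, so the counting above is unaffected.
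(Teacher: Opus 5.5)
Your argument is correct and is essentially the paper's proof with the induction unrolled into an iteration count. The paper inducts on $p$ with base cases $p=2,3$; for $p>3$ it uses that labels have size at most $3$ (so at least two labels exist) and \cref{lccsatisfied} with \cref{decreaseby2} to remove two pendants per edge, exactly as you do, and your invariant check (the remaining pendants stay consecutive on one face) makes explicit what the paper's induction uses silently.

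One small wording fix: in $bt(H)$ the legs from the parent to the pendants pass through degree-2 c-nodes, so the final configuration is a subdivided star (or a path when the lone label has size $2$), not literally a star; this does not affect the count.
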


\begin{proof}
	Consider a face $F$, and let $p$ denote the number of pendants.

If $p=2$, Algorithm \textsc{Face\_Conn} connects the two pendants by adding one edge, regardless of whether their parent in the $bt$-tree is a $b$-node or a $c$-node. Thus, this case is resolved.

If $p=3$, there must be a $b$-node of degree $3$ in the $bt$-tree (since every $c$-node has degree $2$). This is the only possible node of degree $3$. By the contrapositive of \cref{lccsatisfied}, all three pendants have the same label. The algorithm therefore adds two edges to augment the graph.

\textbf{Inductive hypothesis:} For any bridgetree with $p' < p$ pendants, the face can be made bridgeconnected using $\ceil{\frac{p'}{2}}$ edges.

Now assume $p>3$. Then there are at least two distinct labels (each $b$-label has size at most $3$). The algorithm adds an edge between pendants of two different labels. By \cref{lccsatisfied} and \cref{decreaseby2}, this reduces the number of pendants by $2$.

By the inductive hypothesis, $\ceil{\frac{p'}{2}} = \ceil{\frac{p-2}{2}}$ additional edges suffice to bridgeconnect the resulting graph $F+e$. Hence, the total number of edges used is
\[
\pendant{p-2}+1 = \pendant{p},
\]
which completes the proof.
\end{proof}

Combining the arguments in \Cref{lem_faceopt,lem:bridgeconnplanar} and arguing for the running time, we get the final theorem.

\twoedgeconnopt*
\begin{proof}
\textsc{PA\_BrideConn} provides the correct solution (\Cref{lem_faceopt}).
We compute the initial $bt$-tree of each face $F$ using a DFS-based
decomposition in $\mathcal{O}(|V_F|)$ time, since planar graphs have a
linear number of edges.

As edges are added during the augmentation, bridge-components merge.
We maintain the $bt$-tree dynamically using a union–find data structure
following \cite{westbrook1992maintaining}. Each bridge-component is
represented as a set in the union–find structure. When an edge is
inserted between two vertices belonging to different bridge-components,
the corresponding sets are merged using a union operation. Since each
edge insertion performs at most one union and a constant number of find
operations, the total cost of maintaining component membership is
$\mathcal{O}(|V_F|\alpha(|V_F|))$.

Importantly, the structure of the $bt$-tree changes only locally:
inserting an internal edge can merge at most two bridge-components and
affects only the path between them in the $bt$-tree. 

Each bridge-component carries a label of size at most three. When two
components merge, only their labels need to be updated, and since the
label size is bounded by a constant, this update takes constant time.
Furthermore, each edge insertion affects at most two labels, so label
maintenance over all insertions is linear. Thus, all dynamic updates to the $bt$-tree and labels over the course of
the algorithm require
$
\mathcal{O}(|V_F|(1+\alpha(|V_F|)))
$ time for each face $F$.

Since each face is processed independently and the sum of $|V_F|$ over
all faces is $\mathcal{O}(|V(G)|)$ in a plane graph, the total running
time is
$
\mathcal{O}(|V(G)|(1+\alpha(|V(G)|))).
$

For space, we store the $bt$-tree, union–find structure, and labels,
each of linear size. No step allocates more than linear auxiliary
memory, so the total space usage is $\mathcal{O}(|V(G)|)$.

The procedure \textsc{PA\_Bridgeconn} preserves the embedding and adds planar edges (\cref{lem:bridgeconnplanar}).
\end{proof}

%% file: Sections/3.Triconnforfixedembed.tex
\section{Triconnectivity Augmentation of Outerplanar Graphs with a fixed Embedding}\label{triconnproblem}

In this section, we turn to the main contribution of the paper and move to the domain of Vertex Connectivity Augmentation. Recall that in OTA-Fix, the input is a biconnected Outerplane graph $G $. We are required to find a minimum-size set of
edges $E'$ such that $G'=G + E'$ has a planar embedding that extends the original embedding.

\subparagraph*{Section and method overview} Our strategy for solving OTA-Fix is split across various subsections. In \Cref{subsec:prelimstreegraph,subsec:biconnplanetree} we recall the concept of \emph{treegraph} introduced by Kant \cite{kant1996augmenting} for outerplanar graphs and an algorithm that given a treegraph outputs a minimum set of edges that makes the treegraph biconnected such that the endpoint of every augmentation edge is a leaf vertex. Furthermore, we discuss the equivalences between the solution edges of the treegraph and solution edges of a biconnected outerplane graph. Finally, we discuss our novel contribution where we find an augmentation that is at most one edge away from optimal (\cref{almostopt}), and then adjust it to an optimal solution in \cref{optalg}. The high-level idea is to handle separately the case of an ‘unbalanced’ face with odd number of degree-2 vertices by adding one or two interior edges inside a face of the graph.

\subsection{Preliminaries}\label{subsec:prelimstreegraph}
When we consider the input graph to be embedded on a plane, all the faces are \emph{internal} except for one face which is \emph{external} and unbounded. In similar vein, the edges that we add to augment the graph (edges in the solution) are of two kinds. Either they lie in an internal face, or they lie in the outer face. In case of the latter, it is called an \emph{outer edge} otherwise, it is called an \emph{inner edge}. Note that the distinction is made before any edges are added to the graph. For a given pair of non-adjacent vertices $u,v$ in an internal  face of the graph, the augmented edge $uv$ can be an inner edge or an outer edge. 

Before we proceed further, let us observe a lower bound on the size of a solution for \textsc{OTA-Fix}.
\begin{observation}
     For a given 2-connected outerplane graph $G$ with $k$ degree 2 vertices in the graph, we require at least $\ceil{\frac{k}{2}}$ edges to 3-vertex connect the graph.
\end{observation}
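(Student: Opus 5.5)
A degree-counting argument suffices here; the outerplanar structure and the fixed embedding play no role. The bound holds for 3-vertex-connectivity of any augmentation of $G$, planar or not, so restricting to embeddings that extend that of $G$ only shrinks the set of feasible solutions.

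\textbf{Step 1 (degree requirement).} Every vertex of a 3-vertex-connected graph has degree at least $3$. Otherwise, if $\deg(v)\le 2$, deleting the at most two neighbours of $v$ disconnects $v$ from the rest of the graph. The rest is non-empty because a 3-connected graph has at least four vertices, by the definition in the preliminaries. Consequently, in $G'=G+E'$ each of the $k$ vertices of degree $2$ in $G$ must be incident to at least one edge of $E'$. Since $G$ is biconnected, every vertex has degree at least $2$, so the degree-$2$ vertices are exactly the ones whose degree must rise.

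\textbf{Step 2 (counting endpoints).} Each edge of $E'$ has exactly two endpoints, so it can cover at most two of the $k$ degree-$2$ vertices. Double counting the incidences between $E'$ and the set $D$ of degree-$2$ vertices gives
\[
k=|D|\le \sum_{v\in D}\deg_{E'}(v)\le 2|E'| .
\]
Hence $|E'|\ge k/2$, and since $|E'|$ is an integer, $|E'|\ge\ceil{\frac{k}{2}}$.

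\textbf{Main obstacle.} There is no real obstacle. The only point to check is that augmentation edges are genuine new edges between distinct vertices; parallel edges would not help vertex connectivity anyway. This holds because $E'\subseteq V\times V$ and $G'$ is meant to be simple, or at least loopless. One should also note the degenerate case of a triangle ($k=3$), where no augmentation to a 3-connected graph on three vertices is possible under the stated definition. The bound $\ceil{k/2}$ is stated as a lower bound, so it remains valid there vacuously or with the convention that such inputs are excluded.
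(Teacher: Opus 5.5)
Your proposal is correct: the paper states this as an observation without proof, and your degree-counting argument is the standard justification it implicitly relies on. Every vertex of a 3-connected graph on at least four vertices needs three distinct neighbours, so each of the $k$ degree-2 vertices must be an endpoint of a new edge, and each new edge covers at most two of them; your remarks on parallel edges and the triangle case are appropriate and do not affect the bound.
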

Let us now look at a \emph{Tree} structure of the graph that was originally described in \cite{kant1996augmenting}.

\begin{definition}
    $T(G)$ or \emph{Tree of the graph} for an outerplanar plane graph $G$ is the interior dual of $G$ with one pendant attached for each degree-2 vertex in $G$, embedded according to the cyclic order inherited from $G$.

\end{definition}

Note that the interior dual for a biconnected outerplane graph is a tree. In the following, there is a change in notation; in the previous sections, we used $p$ to denote the number of pendants. Let $k$ denote the number of pendants for $T(G)$.

\begin{figure}
  \centering
  \begin{minipage}[t]{0.48\textwidth}
    \centering
    \input{Figs/tikzfigs/dualtree.tex}
    \smallskip

    \textbf{(a)}
    \label{fig:dualtree}
  \end{minipage}\hfill
  \begin{minipage}[t]{0.48\textwidth}
    \centering
    \input{Figs/tikzfigs/treegraph.tex}
    \smallskip

    \textbf{(b)}
    \label{fig:treegraph}
  \end{minipage}

  \caption{(a) The graph $G$ (black) and its dual graph (red), excluding the outer face.
(b) The interior dual graph with a pendant attached for each degree-2 vertex, inheriting the canonical embedding from the embedding of $G$.}
  \label{fig:dual-and-tree}
\end{figure}

As an extension from previously used terminology, we define balanced graphs as follows based on balanced $bc$-trees

\begin{definition}
    An outerplane 2-connected graph $G$ is said to be balanced if $bc(T(G))$ is balanced.
\end{definition}
\cref{fig:dual-and-tree} provides an example of an outerplane graph and its corresponding treegraph.

\subsection{Plane Biconnectivity Augmentation for a Balanced Tree}\label{subsec:biconnplanetree}

In this subsection, we discuss an algorithm that outputs given a balanced tree, a minimum set of edges that makes the graph 2-vertex-connected ensuring that the output satisfies the additional properties stated in \cref{lem_onlyleaves}. Note that the reason we discuss the augmentation of a plane tree is revealed when it helps us with the augmentation of an outerplane biconnected graph as used in \Cref{almostopt,optalg}. 

The algorithm description is as follows:

Algorithm {\textsc{TreeConn}}: 

Let $G$ be a balanced plane tree. Compute its block--cut tree $bc(G)$ and the
associated labels, and initialize $A:=\emptyset$. While more than one label
remains, select a label $l_1$ of maximum size and a neighboring label $l_2$ in
the cyclic order induced by the embedding. Let $p_1\in l_1$ and $p_2\in l_2$ be
pendant blocks that are consecutive in this cyclic order. Insert an edge $e$
between simple vertices (non cut-vertices) of $p_1$ and $p_2$, update $G\leftarrow G+e$, and add
$e$ to $A$.

When exactly one label $l$ remains, if $|l|=2$ then insert a single edge between
simple vertices of the two pendant blocks. If $|l|=3$ with pendant blocks
$p_1,p_2,p_3$, choose simple vertices $v_i\in p_i$ that are leaves of the
original tree $G$, and add two edges $e_1=v_1v_2$ and $e_2=v_2v_3$ such that the
endpoints of each edge do not share a common parent in $G$ (i.e., neither
$\{v_1,v_2\}$ nor $\{v_2,v_3\}$ are siblings). The algorithm outputs the set $A$
of all inserted edges.

We recall a lemma that provides a sufficient condition when a $c$-node of high degree should be a parent of a label.

\begin{lemma}[{\cite[Lemma~2]{gutwenger2009planar}}]\label{impolemma1}
	If a $bc$-tree $\mathcal{B}$ is not balanced, its massive $c$-node $c^*$ must be the parent of a label, say $\ell_0$. Furthermore, $\ell_0$ is the unique maximum label in $\mathcal{B}$.
\end{lemma}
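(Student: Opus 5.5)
The plan is to argue both parts directly from the definition of a massive $c$-node together with a counting of pendants. The key parameters are the total number of pendants $p$ in $\mathcal{B}$ and the degree $d=\deg_{\mathcal{B}}(c^*)\ge \pendant{p}+2$. Removing $c^*$ from $\mathcal{B}$ splits it into $d$ subtrees $T_1,\dots,T_d$, one for each neighbouring $b$-node. Each $T_i$ that is not a single pendant block contains at least one leaf of $\mathcal{B}$. In fact, if the neighbouring $b$-node $b_i$ is not itself a pendant, then $T_i$ contains at least one pendant, and every leaf of $T_i$ other than possibly $b_i$ is a pendant of $\mathcal{B}$. Let $s$ be the number of neighbours of $c^*$ that are pendant blocks. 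Then
\[
p \ge s + (d-s) = d,
\]
and we want a sharper statement.

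The decisive step is to show that $s\ge 2$, and in fact that $s$ is large. I would argue as follows. Suppose the branches with non-pendant roots collectively contain at least $d-s$ pendants. Then $p\ge s+(d-s)=d\ge \pendant{p}+2$, which is consistent and gives nothing. So the counting has to be refined. In a $bc$-tree a non-pendant $b$-node $b_i$ adjacent to $c^*$ has degree at least $2$, so it has another $c$-node neighbour, and that $c$-node has degree at least $2$. Hence $T_i$ branches off to at least one pendant. This yields only $\ge 1$ pendant per branch, so $s$ may be zero.

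I therefore need to reinterpret the claim via the label structure of Fialko--Mutzel and Gutwenger et al. There a label is a maximal set of matchable pendants whose parent is the nearest node of degree $\ge 3$ on their paths, and that parent may be a $c$-node. Each branch $T_i$ whose path from $c^*$ down to its pendants passes through no node of degree $\ge3$ consists of a chain ending in a single pendant. All such pendants have $c^*$ as the first node of degree $\ge3$, so they form (part of) a label with parent $c^*$. A branch containing a node of degree $\ge 3$ contributes at least $2$ pendants. If $t$ branches are chains and $d-t$ are branching, then
\[
p\ge t+2(d-t)=2d-t,
\]
so
\[
t\ge 2d-p\ge 2\pendant{p}+4-p\ge 4.
\]
Hence $c^*$ is the parent of a label $\ell_0$ of size at least $2d-p$. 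Since $p\le 2d-4$, this size exceeds $p/2$ by at least $2$; here I use that the chain pendants are matchable, as in the planar setting of the cited lemma, where labels are defined through the face structure.

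For uniqueness and maximality, note that any other label $\ell$ is disjoint from $\ell_0$. Then $|\ell|\le p-|\ell_0|\le p-(2d-p)=2p-2d\le p-4 - (\ldots)$; more simply, $|\ell_0|\ge 2d-p> p/2$ forces $|\ell|<p/2<|\ell_0|$. So $\ell_0$ is the unique maximum label.

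The main obstacle is fixing the precise label/bundle definition (in particular matchability of the chain pendants) so that the chain pendants of $c^*$ indeed form a single label; I would handle this by quoting the definitions of \cite{fialko1998new} directly.
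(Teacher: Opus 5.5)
Your existence argument is correct. With $t$ chain branches at $c^*$ (one pendant each, whose first node of degree $\ge 3$ is $c^*$) and $d-t$ branching ones (at least two pendants each), you get $p\ge 2d-t$. Hence $t\ge 2d-p\ge 4$, so $c^*$ is the parent of a label $\ell_0$ with $|\ell_0|\ge t$. The paper gives no proof here; it only cites Lemma~2 of Gutwenger, Mutzel and Zey, so there is nothing to compare against.

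The gap is in the uniqueness step. Massiveness, $d\ge\lceil p/2\rceil+2$, only gives $2d\ge p+4$, i.e.\ $2d-p\ge 4$. It does \emph{not} give $2d-p>p/2$, which would need $d>3p/4$. So neither ``this size exceeds $p/2$ by at least $2$'' nor ``$|\ell|<p/2<|\ell_0|$'' is justified, and the bound $|\ell|\le p-|\ell_0|\le 2p-2d$ is too weak. For a concrete example, take the tree whose centre $c^*$ has four leaf neighbours and three further neighbours $x_1,x_2,x_3$, each with two leaf children. Then $p=10$, $d=7=\lceil 10/2\rceil+2$, and $|\ell_0|=4<5=p/2$. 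Your bound $|\ell|\le 2p-2d=6$ cannot exclude a competing label of size $4$. The lemma still holds there, since the other labels have size $2$, but your counting cannot detect it.

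The missing idea is that every other label is confined to a single branch. Suppose $\ell\neq\ell_0$ has parent $v\neq c^*$. Then $v$ lies in some branching branch $T_j$. Every pendant of $\ell$ also lies in $T_j$, because the walk from a pendant to its first node of degree $\ge 3$ cannot pass through $c^*$; otherwise $c^*$ would be its parent. The $t$ chain pendants and the other $d-t-1$ branching branches, each containing at least two pendants, are disjoint from $T_j$. Hence
\[
|\ell|\;\le\;p-t-2(d-t-1)\;=\;t-(2d-p-2)\;\le\;t-2\;<\;|\ell_0|,
\]
using $2d\ge p+4$. With this replacing your final paragraph the proof is complete, and it even shows that $\ell_0$ exceeds every other label by at least two.

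Matchability, which you single out as the main obstacle, is not the real difficulty. For a plane tree, which is where the paper uses the lemma, all pendants lie on the single face. The actual difficulty is the counting above.
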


With this, we can proceed to prove the optimality of the algorithm with extra conditions.

\begin{restatable}{lemma}{balancedtreemin}\label{lem_balancedtreemin}
	Given a balanced plane tree $G$ with $p$ leaves, \textsc{TreeConn} computes an optimal augmentation of size $\lceil p/2\rceil$ that makes $G$ biconnected. Furthermore, every augmentation edge is incident only on leaves of $G$, and no augmentation edge joins two leaves with a common parent.
\end{restatable}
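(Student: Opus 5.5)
Our plan is to follow the scheme of \cref{lem_faceopt}, now on the block--cut tree of a plane tree, adapting the inductive argument of \cite{gutwenger2009planar} and tracking the two extra properties along the way.

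\textbf{Setup.} In a tree $G$ every edge is a block. So $bc(G)$ has one b-node per edge, and one c-node per internal vertex. The pendant blocks are exactly the leaf edges, and the unique simple vertex of a pendant block is the corresponding leaf of $G$. Hence every edge inserted during the main loop is automatically incident only on leaves of $G$. The lower bound $\lceil p/2\rceil$ is the vertex analogue of \cref{lowerbound}: each new edge can destroy at most two pendant blocks.

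\textbf{Main loop.} We show that as long as at least two labels remain, inserting $e=p_1p_2$ with $p_1\in l_1$ of maximum size and $p_2\in l_2$ a cyclic neighbour has three effects.
\begin{enumerate}
\item It can be drawn in the outer face. This holds because $p_1,p_2$ are consecutive in the cyclic order, exactly as in \cref{lem:bridgeconnplanar}.
\item It reduces the number of pendants by two. This uses the vertex version of \cref{lccsatisfied} and \cref{decreaseby2}, with the l.c.c.\ now phrased via c-nodes as in \cite{gutwenger2009planar}.
\item It keeps the resulting $bc$-tree balanced.
\end{enumerate}
For the third point, let $p'=p-2$. After the insertion, a c-node could only become massive if its degree stayed close to $\lceil p/2\rceil+1$ while $p$ dropped by two. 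By \cref{impolemma1}, such a c-node would be the parent of the unique maximum label. Choosing $p_1$ from a maximum label shrinks that label, and hence lowers the degree of its parent, so no massive c-node is created. Induction on $p$ then gives $\lceil p/2\rceil$ edges in total, provided the final cases match.

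\textbf{End cases and the sibling condition.}
\begin{itemize}
\item \emph{One label of size $2$:} we use one edge. Both pendant blocks still contain simple vertices that are original leaves, because merged blocks are never pendant-leaf-only.
\item \emph{One label of size $3$:} we use two edges, which equals $\lceil 3/2\rceil$ and is therefore optimal.
\end{itemize}

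For the sibling condition in the main loop, suppose two leaves with a common parent $c$ in $G$ are joined. Then their pendant blocks share the c-node $c$, so the path between them in $bc(G)$ has no b-node of degree at least $3$. Such pendants belong to the same label, whose parent is the c-node $c$. Since the main loop only joins pendants from \emph{different} labels, it never joins siblings.

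In the final size-$3$ case, the algorithm explicitly picks $v_2$ so that neither $v_1v_2$ nor $v_2v_3$ is a sibling pair. We must argue this choice exists. Among three leaves, at most one pair can consist of siblings unless all three share a parent. If all three shared a single parent c-node, that node would have degree at least $3\geq \lceil 3/2\rceil+2$ relative to $p=3$, i.e.\ it would be massive. This is excluded by balancedness, which we maintain as an invariant.

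\textbf{Main obstacle.} We expect the hardest step to be the balance-preservation invariant together with the claim that $p>3$ forces at least two labels. This needs care with c-node parents: a label parented by a c-node can be large, unlike the $bt$ setting where labels have size at most $3$. Here we will lean on \cref{impolemma1} and the counting argument of \cite[Lemma~3]{gutwenger2009planar}.
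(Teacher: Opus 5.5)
Your route matches the paper's: induction on label-to-label insertions, pendant blocks staying original leaf edges, and siblings confined to a single label. However, the final sibling step is wrong and the balance step is incomplete. In the size-$3$ end case you argue that three leaves with a common parent would make that c-node massive because $3\ge\lceil 3/2\rceil+2$. But $\lceil 3/2\rceil+2=4$, so a degree-$3$ c-node is \emph{not} massive when $p=3$, and balancedness excludes nothing here. In fact $G=K_{1,3}$ is a balanced plane tree with $p=3$ in which every pair of leaves are siblings, so the sibling clause is false for it without an extra hypothesis. The same holds for $K_{1,2}$ in the size-$2$ case, which you do not address. The paper's proof gets around this by invoking the exclusion carried in \cref{lem_onlyleaves} (the star, which the paper calls $K_3$). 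It adds a structural argument: if the single remaining label has size $3$ and all three pendant blocks hang directly off one c-node $c$, then every branch at $c$ is one of these blocks. The current graph is then the star itself and no edge was ever added. Otherwise at most one pair are siblings, and you place the remaining leaf in the middle.

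For balance preservation, your argument covers only the case where the new edge touches the label of the would-be massive c-node $c^*$; then $c^*$ lies on the new cycle and loses a degree. But \cref{impolemma1} makes $c^*$'s label the \emph{unique} maximum only in the tree after insertion. Before insertion $c^*$ has degree $d=\lceil p/2\rceil+1$ and is not massive. Its label, of size $j$, may tie with the label $l_1$ the algorithm actually picks, so the edge can avoid $c^*$ altogether. This case is the real content of the inductive step and needs an explicit count:
\begin{itemize}
\item $p_1,p_2$ lie in one branch at $c^*$, which contributes at least $|l_1|+|l_2|\ge j+1$ pendants;
\item the label of $c^*$ contributes $j$ pendants;
\item each of the other $d-j-1$ branches contributes at least $2$.
\end{itemize}
Hence $p\ge 2d-1=2\lceil p/2\rceil+1>p$, a contradiction. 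You defer this count to a citation, and you also leave unproved the easy fact that a single label with $p\ge 4$ pendants forces a c-node parent of degree $p\ge\lceil p/2\rceil+2$. As written, the induction is therefore unproved.
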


\input{Sections/proofs/balancedtreemin}
The arguments for the time and space finalises the analysis.

\begin{restatable}{lemma}{treeconnruntime}\label{lem_nearlintime}
	\textsc{TreeConn} can be implemented to run in $\mathcal{O}(1 +  \alpha(|V(G)|)|V(G)|)$ time and linear space.
\end{restatable}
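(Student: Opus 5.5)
We follow the running-time analysis given for \textsc{PA\_Bridgeconn} in \cref{thm_2edgeconnopt}, adapted from the bridge-tree to the block--cut tree. Write $n=|V(G)|$.

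\textbf{Initial phase.} We first compute $bc(G)$ with a standard DFS-based biconnected-components algorithm. This takes $\mathcal{O}(n)$ time, since a tree has $n-1$ edges. In the same traversal we also do three things:
\begin{itemize}
\item record the cyclic order of the leaves, by walking the outer face of the plane tree;
\item compute the degree of every $b$-node;
\item form the labels, by grouping consecutive pendants whose connecting path contains a single degree-3 $b$-node, or whose parent is a $c$-node.
\end{itemize}
All of this is linear time and linear space. We store the labels in a bucket structure indexed by size, so that a label of maximum size can be extracted in amortised constant time. Each label is also kept as a doubly linked list of its pendants, respecting the cyclic order, so that neighbouring labels and adjacent pendants can be found in $\mathcal{O}(1)$.

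\textbf{Augmentation loop.} Each iteration inserts one edge between leaves of $p_1$ and $p_2$. By \cref{lem_balancedtreemin} there are at most $\lceil p/2\rceil \le n$ iterations. By the analogue of \cref{bridgetreechange} for $bc$-trees, inserting an edge contracts the $p_1$--$p_2$ path into one $b$-node.

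We maintain block membership with a union--find structure, following \cite{westbrook1992maintaining}. Contracting the path is a sequence of unions over the nodes on the path. Every node of $bc(G)$ is merged away at most once, so the total number of unions and finds over the whole run is $\mathcal{O}(n)$, which costs $\mathcal{O}(n\,\alpha(n))$ in total.

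After a contraction only constant local information changes:
\begin{itemize}
\item the degree of the new $b$-node;
\item whether it becomes a pendant, in which case it joins the label of its neighbours in the cyclic order;
\item the sizes of the at most two labels $l_1,l_2$ involved, together with possibly one new label.
\end{itemize}
Each of these updates is $\mathcal{O}(1)$, plus the cost of bucket moves.

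The final step handles a last label of size $2$ or $3$. It only requires picking leaves that are not siblings, which is a constant-size check once each leaf stores its parent.

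\textbf{Main obstacle.} The delicate step is walking the $p_1$--$p_2$ path to contract it, without paying its length repeatedly. We plan to charge each step of the walk to a node that disappears under the contraction. Paths are located by climbing parent pointers of find-representatives from both endpoints toward their lowest common ancestor in a rooted copy of $bc(G)$. Each visited node except the final merged one is then unioned and never visited again, so the total walking cost is $\mathcal{O}(n)$ amortised.

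The only other concern is that a maximum label must be found quickly even though label sizes change. The bucket array handles this: sizes only change by constant amounts per iteration, and by \cref{impolemma1} together with balancedness we never need more than the current maximum bucket pointer, which moves monotonically except for $\mathcal{O}(1)$ adjustments per step.

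Summing the phases gives $\mathcal{O}(n(1+\alpha(n)))$ time. All structures (the $bc$-tree, the union--find, the cyclic lists and the buckets) have linear size, so the space is $\mathcal{O}(n)$.
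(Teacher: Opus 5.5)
Your proposal is correct and follows the same route as the paper: a linear-time construction of $bc(G)$ and its labels, $\mathcal{O}(1)$ local label updates per inserted edge, a Westbrook--Tarjan union--find structure for the $\mathcal{O}(n\,\alpha(n))$ term, and linear-size data structures. Your amortised charging of the path contraction, where each $b$-node is merged away at most once, is more careful than the paper's claim of a constant number of unions per insertion. Two small fixes are needed. First, cut vertices of degree at least three on the $p_1$--$p_2$ path survive the contraction and can be revisited; this is harmless because $b$- and $c$-nodes alternate along the path. Second, the two-sided climb toward the lowest common ancestor should alternate between the two endpoints so that the overshoot is bounded by the path length.
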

	\begin{proof}
We first compute the block--cut tree $bc(G)$ of the input tree $G$, which can be
done in linear time. Using the leaf connectivity condition
\cite{gutwenger2009planar}, we partition the pendants into labels in overall
linear time.

When an augmentation edge is added, at most four labels are affected.
Updating label membership and sizes therefore takes constant time per edge
insertion.

We maintain the dynamic merging of components in $bc(G)$ using a union--find
structure \cite{westbrook1992maintaining}. Each edge insertion triggers a
constant number of \textsc{find} and \textsc{union} operations, yielding a total
time of $O(n\cdot\alpha(n))$ for all updates, where $\alpha(\cdot)$ is the inverse
Ackermann function.

Since the algorithm processes each face independently and the total size of
all facegraphs is $O(n)$, the overall running time is
$O(n(1+\alpha(n)))$.

For space, we store the $bc$-trees, union--find structure, and label data each requiring linear space.

\end{proof}

\textsc{TreeConn} is an augmentation procedure that provides a minimum set of edges satisfying some special properties. This is because we would eventually like the set of edges output by \textsc{TreeConn} to be used for triconnectivity augmentation of biconnected outerplane graphs. The specific properties are as stated in the lemma below:

\begin{lemma}\label{lem_onlyleaves}
For a balanced plane tree $T$ with $k$ leaves (that is not $K_2$ or $K_3$), we can
compute in time $O(1+\alpha(|V|)\,|V|)$ and linear space an optimal augmentation
$A_T$ that biconnects $T$ with the following properties:
\begin{itemize}
  \item $|A_T| = \pendant{k}$.
  \item Every edge in $A_T$ is incident to a leaf of $T$.
  \item No edge in $A_T$ joins two leaves of $T$ that share the same parent.
\end{itemize}
\end{lemma}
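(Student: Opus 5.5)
The lemma is essentially a repackaging of \cref{lem_balancedtreemin} and \cref{lem_nearlintime}, so the plan is to run \textsc{TreeConn} on $T$, set $A_T$ to its output, and check each property in turn.

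\textbf{Size.} In a tree every edge is a block, so the pendant blocks of $bc(T)$ are exactly the leaf edges of $T$. Hence the number $p$ of pendants of $bc(T)$ equals the number $k$ of leaves of $T$. \Cref{lem_balancedtreemin} then gives $|A_T| = \pendant{k}$. Optimality follows from the standard lower bound: every pendant block must receive at least one new edge at a simple vertex, and each edge serves at most two pendants. This is the vertex analogue of \cref{lowerbound}.

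\textbf{Leaf incidence.} The simple vertex of a pendant block $\{u,\ell\}$ in a tree is the leaf $\ell$, since $u$ is a cut vertex. Every edge inserted in the main loop joins simple vertices of pendant blocks, so it joins two leaves. Later in the run pendant blocks are merged blocks, so I must check that their simple vertices are still original leaves of $T$, or that the algorithm can always choose one. I would argue this by induction on the iterations. When two pendants from different labels are joined, the l.c.c.\ (via the vertex analogue of \cref{lccsatisfied} and \cref{decreaseby2}) makes both disappear as pendants. Every remaining pendant block is therefore still an original leaf edge, because contraction only affects the path between $p_1$ and $p_2$ and absorbs no other pendant. The final case with $|l|\in\{2,3\}$ explicitly picks leaves $v_i$ of the original $T$.

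\textbf{No siblings.} In the main loop, two leaves with a common parent $c$ hang from the same c-node $c$. The path between them in $bc(T)$ passes through no b-node of degree $\ge 3$, only through $c$. So they cannot satisfy the l.c.c.\ and must lie in the same label, whereas the loop always joins different labels. In the final case, $|l|=2$ would put the two leaves at distance two through a single cut vertex. This cannot happen in a tree unless $T$ is a star $K_{1,2}$, which I would treat separately (it is $P_3$; the excluded cases $K_2,K_3$ are meant to remove these degenerate shapes). For $|l|=3$, \textsc{TreeConn} chooses the path $v_1v_2v_3$ so that neither consecutive pair are siblings. Here I would verify that among three pendants of one label at most two are siblings, using balancedness and \cref{impolemma1}: a c-node carrying all of them would be massive. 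An ordering with the non-sibling $v_2$ in the middle therefore exists.

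\textbf{Time, space, and the main obstacle.} Running time and space come directly from \cref{lem_nearlintime}. I expect the main obstacle to be the invariant behind leaf incidence: after contractions, every pendant block still contains an original leaf as a simple vertex. A related subtlety is that the massive-node (unbalanced) situation never arises mid-run. I would handle this by showing that joining a maximum label to a neighbour preserves balancedness, in the style of \cite{gutwenger2009planar}.
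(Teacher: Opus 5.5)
Your route is the paper's. The lemma is just \cref{lem_balancedtreemin} plus \cref{lem_nearlintime}, and your two main arguments are exactly those in the proof of \cref{lem_balancedtreemin}:
\begin{itemize}
  \item the leaf-incidence invariant: joining pendants of different labels removes two pendants and creates none, so every pendant block stays an original leaf edge;
  \item the main-loop sibling argument: two siblings fail the l.c.c., so they lie in the same label, and the loop never joins them.
\end{itemize}
Your treatment of the final $|l|=2$ case is also correct. Two siblings there force the current graph, and hence $T$, to be $K_{1,2}$.

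The step that fails is the $|l|=3$ case. You claim that a c-node carrying all three pendants of the last label would be massive, so balancedness and \cref{impolemma1} rule it out. This is false. When a single label of size $3$ remains, the $bc$-tree of the current graph has exactly three leaves. Suppose all three hang from one cut vertex $c$. Then $c$ has degree exactly $3$, since a fourth branch would contain a fourth pendant. So the current graph is the claw $K_{1,3}$. No augmentation edge has been added yet, because that would have created a non-$K_2$ block, so $T=K_{1,3}$. Its centre has degree $3<\lceil 3/2\rceil+2=4$, so $K_{1,3}$ is balanced and your argument does not exclude it. For $K_{1,3}$ the third property is genuinely unattainable: every non-edge joins two sibling leaves, so any augmentation joins siblings.

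The correct argument is structural, as in the paper. Three mutual siblings in the last label force $T=K_{1,3}$, and this case must be excluded by hypothesis, just as $K_{1,2}$ must be for $|l|=2$. Taken literally, ``not $K_2$ or $K_3$'' excludes neither tree, and $K_3$ is not even a tree. The exclusion has to be read as ruling out $K_{1,2}$ and $K_{1,3}$; the claw is $T(K_3)$, which is what the paper's proof alludes to. With that replacement the rest of your proof goes through.
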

\begin{proof}
	This is a direct consequence of the \cref{lem_balancedtreemin,lem_nearlintime}.
\end{proof}

Now we make the transition from the biconnectivity of a tree to the triconnectivity of an outerplane graph. For this reason, we need to show that the edges obtained while making $T(G)$ biconnected can be translated and used as the edges required to make $G$ triconnected. For the case when $G$ is balanced, the transition is direct.

\begin{restatable}{lemma}{propbalancedopt}\label{propbalancedopt}
	For a given balanced outerplane graph $G$, we can compute $A_{\mbox{tri}}$ in time $O(1+\alpha(|V|)\,|V|)$ and linear space such that $G+A_{\mbox{tri}}$ is triconnected and $|A_{\mbox{tri}}| = \pendant{k}$.
\end{restatable}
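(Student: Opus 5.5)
The plan is to apply \cref{lem_onlyleaves} to the balanced plane tree $T:=T(G)$ and to translate each augmentation edge of $T$ into an edge of $G$. Since $G$ is balanced, $bc(T(G))$ is balanced. The leaves of $T(G)$ are exactly the $k$ pendants attached to the degree-2 vertices of $G$; the leaves of the interior dual itself are also triangles or faces carrying degree-2 vertices, so all leaves of $T$ are these attached pendants. First dispose of the trivial cases where $T$ is $K_2$ or $K_3$ (then $G$ is essentially a single cycle, handled directly, e.g.\ a triangle or $C_4$ needing $\pendant{k}$ chords drawn in the outer face). Otherwise \cref{lem_onlyleaves} yields $A_T$ with $|A_T|=\pendant{k}$, every edge joining two leaves, no edge joining siblings, and the whole of $T+A_T$ drawn planarly in the outer region respecting the cyclic order.

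Define $A_{\mathrm{tri}}$ by replacing each edge $\ell_1\ell_2\in A_T$ by the edge $v_1v_2$, where $v_i$ is the degree-2 vertex of $G$ corresponding to the leaf $\ell_i$, drawn in the outer face of $G$. Since the embedding of $T$ inherits the cyclic order of degree-2 vertices along the outer boundary of $G$, non-crossing edges of $A_T$ drawn outside $T$ map to pairwise non-crossing chords in the outer face of $G$. Hence $G+A_{\mathrm{tri}}$ is planar and extends the embedding of $G$. The sibling condition guarantees that $v_1,v_2$ do not lie on a common inner face as its only degree-2 vertices, so no edge is parallel to an existing edge and none is wasted.

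The main step is showing $G+A_{\mathrm{tri}}$ is 3-connected. The approach is to use the standard correspondence that 2-cuts $\{x,y\}$ of a biconnected outerplane graph are exactly pairs of vertices lying on a common inner face that are non-adjacent on its boundary, or the endpoints of an inner chord. Such a pair splits the outer cycle into two arcs, each containing a vertex other than $x,y$. In $T$ this corresponds to deleting a dual edge, or splitting at a face-node, which separates $T$ into parts, each containing at least one leaf; for a chord $xy$ the separated side contains a degree-2 vertex. One then argues that $\{x,y\}$ remains a cut in $G+A_{\mathrm{tri}}$ only if no augmentation edge joins the two arcs, which translates into the corresponding vertex or cut of $T$ not being bypassed by $A_T$. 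That is impossible because $T+A_T$ is biconnected: every cut-vertex of $T$ and every edge-separation has a leaf-to-leaf edge crossing it. For separations at a face node with one arc containing only the vertices between consecutive chords, the sibling condition is used to ensure that the crossing edge actually leaves the arc rather than connecting two degree-2 vertices of the same face.

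I expect this correspondence, particularly the handling of pairs $\{x,y\}$ on a common face where one side contains no degree-2 vertex but only chords, to be the main obstacle. The fallback is to note that every such side still contains a leaf of the dual subtree, and that the subtree leaves are represented by attached pendants. Running time and space follow from \cref{lem_onlyleaves} plus linear-time construction of $T(G)$ and the translation.
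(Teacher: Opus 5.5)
\textbf{There is a genuine gap.} Your route is the paper's route:
\begin{itemize}
\item apply \cref{lem_onlyleaves} to $T(G)$;
\item map leaf--leaf edges to outer chords between degree-2 vertices;
\item rule out each 2-cut $\{x,y\}$ on a common inner face $F$ using biconnectivity of $T(G)+A_T$ and the sibling condition.
\end{itemize}
The failing step is ``$\{x,y\}$ stays a cut only if $v_F$ is not bypassed by $A_T$, which biconnectivity forbids.'' An edge of $A_T$ at the leaf $\ell_x$ of $x$ (or at $\ell_y$) bypasses $v_F$ in $T+A_T$, but it is deleted in $G+A_{\mathrm{tri}}-\{x,y\}$. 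What you need is that $T+A_T-\{v_F,\ell_x,\ell_y\}$ is connected.

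\emph{Even $k$.} This does follow. Every leaf carries exactly one edge of $A_T$, and that edge does not go to a sibling. So $\ell_x,\ell_y$ have degree one in $T+A_T-v_F$ and can be deleted without disconnecting it.

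\emph{Odd $k$.} It can fail, because the middle vertex $v_2$ of the final size-3 label carries two edges. Take the outer cycle $x,p,d,q,a,s,b,c,r$ with chords $pq,qr,rs$. The inner faces are $(x,p,q,r)$, $(p,d,q)$, $(q,a,s,r)$ and $(s,b,c,r)$, so $k=5$. Every node of $T(G)$ has degree at most $3$, so $G$ is balanced.
\begin{itemize}
\item \textsc{TreeConn} may first join $a,b$ (labels $\{b,c\}$ and $\{a\}$).
\item Then $\{x,d,c\}$ is the only label, with parent the face $(x,p,q,r)$. The three are pairwise non-siblings, so $v_2=x$ is allowed.
\item This gives $A_T=\{ab,xd,xc\}$, which satisfies everything in \cref{lem_onlyleaves}, and $T+A_T$ is biconnected.
\end{itemize}
Nevertheless $G+A_{\mathrm{tri}}-\{x,q\}$ splits into $\{p,d\}$ and $\{a,s,b,c,r\}$.

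\textbf{How to close it.} The paper's own proof makes the same leap (``since $T(G)-v_F$ is connected, there exists an outer edge \ldots''). The fix belongs in the tree step, not in the translation. In the final size-3 label, choose as $v_2$ a pendant whose attaching face-node $v_{F^*}$ is not the label's parent. This is always possible unless $T(G)$ is the three-leaf star. With that choice:
\begin{itemize}
\item $v_{F^*}$ has degree two in the bc-tree just before the last two edges are inserted;
\item hence $T+A_T-\{v_{F^*},\ell_{v_2}\}$ is connected;
\item every other leaf $\ell_y$ of $F^*$ has degree one there, so your crossing argument goes through.
\end{itemize}

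\textbf{Smaller point on trivial cases.} $T(G)$ is never $K_2$ or $K_3$. The case \cref{lem_onlyleaves} really excludes is the three-leaf star, i.e.\ $G$ a triangle, which no edge addition can make 3-connected. Meanwhile $C_4$ (indeed every $C_n$ with $n\ge 4$) is unbalanced, so it lies outside this lemma.
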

\input{Sections/proofs/propbalancedopt} 

The implications of biconnecting the plane tree as shown has further consequences which shall be highlighted in \Cref{optalg}. We shall now proceed to the algorithmic portion for solving \textsc{OTA-Fix}.

%% file: Figs/tikzfigs/dualtree.tex
\scalebox{0.7}{

\tikzset{every picture/.style={line width=0.75pt}} 

\begin{tikzpicture}[x=0.75pt,y=0.75pt,yscale=-1,xscale=1]
	
	\draw   (124.5,139.5) .. controls (124.5,87.59) and (166.59,45.5) .. (218.5,45.5) .. controls (270.41,45.5) and (312.5,87.59) .. (312.5,139.5) .. controls (312.5,191.41) and (270.41,233.5) .. (218.5,233.5) .. controls (166.59,233.5) and (124.5,191.41) .. (124.5,139.5) -- cycle ;
	\draw  [fill={rgb, 255:red, 0; green, 0; blue, 0 }  ,fill opacity=1 ] (120.26,143.74) .. controls (120.26,141.4) and (122.16,139.5) .. (124.5,139.5) .. controls (126.84,139.5) and (128.74,141.4) .. (128.74,143.74) .. controls (128.74,146.08) and (126.84,147.98) .. (124.5,147.98) .. controls (122.16,147.98) and (120.26,146.08) .. (120.26,143.74) -- cycle ;
	\draw    (131,106) -- (172.5,221.48) ;
	\draw    (236.5,46.48) -- (187.5,228.48) ;
	\draw    (265.5,57.48) -- (218.5,233.5) ;
	\draw    (139,90) .. controls (176.5,87.48) and (195.5,74.48) .. (206.5,45.48) ;
	\draw    (277.5,66.48) .. controls (273.5,93.48) and (288.5,125.48) .. (312.5,139.5) ;
	\draw    (301.5,183.48) .. controls (274.5,181.48) and (248.5,200.48) .. (246.5,228.5) ;
	\draw  [fill={rgb, 255:red, 0; green, 0; blue, 0 }  ,fill opacity=1 ] (134.26,188.74) .. controls (134.26,186.4) and (136.16,184.5) .. (138.5,184.5) .. controls (140.84,184.5) and (142.74,186.4) .. (142.74,188.74) .. controls (142.74,191.08) and (140.84,192.98) .. (138.5,192.98) .. controls (136.16,192.98) and (134.26,191.08) .. (134.26,188.74) -- cycle ;
	\draw  [fill={rgb, 255:red, 0; green, 0; blue, 0 }  ,fill opacity=1 ] (200.26,231.74) .. controls (200.26,229.4) and (202.16,227.5) .. (204.5,227.5) .. controls (206.84,227.5) and (208.74,229.4) .. (208.74,231.74) .. controls (208.74,234.08) and (206.84,235.98) .. (204.5,235.98) .. controls (202.16,235.98) and (200.26,234.08) .. (200.26,231.74) -- cycle ;
	\draw  [fill={rgb, 255:red, 0; green, 0; blue, 0 }  ,fill opacity=1 ] (162.26,61.74) .. controls (162.26,59.4) and (164.16,57.5) .. (166.5,57.5) .. controls (168.84,57.5) and (170.74,59.4) .. (170.74,61.74) .. controls (170.74,64.08) and (168.84,65.98) .. (166.5,65.98) .. controls (164.16,65.98) and (162.26,64.08) .. (162.26,61.74) -- cycle ;
	\draw  [fill={rgb, 255:red, 0; green, 0; blue, 0 }  ,fill opacity=1 ] (299.26,99.74) .. controls (299.26,97.4) and (301.16,95.5) .. (303.5,95.5) .. controls (305.84,95.5) and (307.74,97.4) .. (307.74,99.74) .. controls (307.74,102.08) and (305.84,103.98) .. (303.5,103.98) .. controls (301.16,103.98) and (299.26,102.08) .. (299.26,99.74) -- cycle ;
	\draw  [fill={rgb, 255:red, 0; green, 0; blue, 0 }  ,fill opacity=1 ] (306.26,159.74) .. controls (306.26,157.4) and (308.16,155.5) .. (310.5,155.5) .. controls (312.84,155.5) and (314.74,157.4) .. (314.74,159.74) .. controls (314.74,162.08) and (312.84,163.98) .. (310.5,163.98) .. controls (308.16,163.98) and (306.26,162.08) .. (306.26,159.74) -- cycle ;
	\draw  [fill={rgb, 255:red, 0; green, 0; blue, 0 }  ,fill opacity=1 ] (275.26,210.74) .. controls (275.26,208.4) and (277.16,206.5) .. (279.5,206.5) .. controls (281.84,206.5) and (283.74,208.4) .. (283.74,210.74) .. controls (283.74,213.08) and (281.84,214.98) .. (279.5,214.98) .. controls (277.16,214.98) and (275.26,213.08) .. (275.26,210.74) -- cycle ;
	\draw [color={rgb, 255:red, 246; green, 6; blue, 6 }  ,draw opacity=1 ][line width=1.5]    (181.5,69.49) -- (185.5,122.08) ;
	\draw [color={rgb, 255:red, 246; green, 6; blue, 6 }  ,draw opacity=1 ][line width=1.5]    (185.5,122.08) -- (206.5,131.08) ;
	\draw [color={rgb, 255:red, 246; green, 6; blue, 6 }  ,draw opacity=1 ][line width=1.5]    (206.5,131.08) -- (227.5,140.08) ;
	\draw [color={rgb, 255:red, 246; green, 6; blue, 6 }  ,draw opacity=1 ][line width=1.5]    (227.5,140.08) -- (265.5,153.08) -- (275.5,199.08) ;
	\draw [color={rgb, 255:red, 246; green, 6; blue, 6 }  ,draw opacity=1 ][line width=1.5]    (185.5,122.08) -- (142.5,165.08) ;
	\draw [color={rgb, 255:red, 246; green, 6; blue, 6 }  ,draw opacity=1 ][line width=1.5]    (299.5,116.08) -- (265.5,153.08) ;
	\draw  [color={rgb, 255:red, 246; green, 6; blue, 6 }  ,draw opacity=1 ][fill={rgb, 255:red, 246; green, 6; blue, 6 }  ,fill opacity=1 ] (138.26,165.08) .. controls (138.26,162.74) and (140.16,160.84) .. (142.5,160.84) .. controls (144.84,160.84) and (146.74,162.74) .. (146.74,165.08) .. controls (146.74,167.42) and (144.84,169.32) .. (142.5,169.32) .. controls (140.16,169.32) and (138.26,167.42) .. (138.26,165.08) -- cycle ;
	\draw  [color={rgb, 255:red, 246; green, 6; blue, 6 }  ,draw opacity=1 ][fill={rgb, 255:red, 246; green, 6; blue, 6 }  ,fill opacity=1 ] (181.26,122.08) .. controls (181.26,119.74) and (183.16,117.84) .. (185.5,117.84) .. controls (187.84,117.84) and (189.74,119.74) .. (189.74,122.08) .. controls (189.74,124.42) and (187.84,126.32) .. (185.5,126.32) .. controls (183.16,126.32) and (181.26,124.42) .. (181.26,122.08) -- cycle ;
	\draw  [color={rgb, 255:red, 246; green, 6; blue, 6 }  ,draw opacity=1 ][fill={rgb, 255:red, 246; green, 6; blue, 6 }  ,fill opacity=1 ] (223.26,140.08) .. controls (223.26,137.74) and (225.16,135.84) .. (227.5,135.84) .. controls (229.84,135.84) and (231.74,137.74) .. (231.74,140.08) .. controls (231.74,142.42) and (229.84,144.32) .. (227.5,144.32) .. controls (225.16,144.32) and (223.26,142.42) .. (223.26,140.08) -- cycle ;
	\draw  [color={rgb, 255:red, 246; green, 6; blue, 6 }  ,draw opacity=1 ][fill={rgb, 255:red, 246; green, 6; blue, 6 }  ,fill opacity=1 ] (177.26,69.49) .. controls (177.26,67.15) and (179.16,65.25) .. (181.5,65.25) .. controls (183.84,65.25) and (185.74,67.15) .. (185.74,69.49) .. controls (185.74,71.84) and (183.84,73.73) .. (181.5,73.73) .. controls (179.16,73.73) and (177.26,71.84) .. (177.26,69.49) -- cycle ;
	\draw  [color={rgb, 255:red, 246; green, 6; blue, 6 }  ,draw opacity=1 ][fill={rgb, 255:red, 246; green, 6; blue, 6 }  ,fill opacity=1 ] (271.26,199.08) .. controls (271.26,196.74) and (273.16,194.84) .. (275.5,194.84) .. controls (277.84,194.84) and (279.74,196.74) .. (279.74,199.08) .. controls (279.74,201.42) and (277.84,203.32) .. (275.5,203.32) .. controls (273.16,203.32) and (271.26,201.42) .. (271.26,199.08) -- cycle ;
	\draw  [color={rgb, 255:red, 246; green, 6; blue, 6 }  ,draw opacity=1 ][fill={rgb, 255:red, 246; green, 6; blue, 6 }  ,fill opacity=1 ] (295.26,116.08) .. controls (295.26,113.74) and (297.16,111.84) .. (299.5,111.84) .. controls (301.84,111.84) and (303.74,113.74) .. (303.74,116.08) .. controls (303.74,118.42) and (301.84,120.32) .. (299.5,120.32) .. controls (297.16,120.32) and (295.26,118.42) .. (295.26,116.08) -- cycle ;

\end{tikzpicture}
}

%% file: Figs/tikzfigs/treegraph.tex
\scalebox{0.7}{

\tikzset{every picture/.style={line width=0.75pt}} 

\begin{tikzpicture}[x=0.75pt,y=0.75pt,yscale=-1,xscale=1]
	
	\draw [color={rgb, 255:red, 246; green, 6; blue, 6 }  ,draw opacity=1 ][line width=1.5]    (159.5,97.48) -- (179.5,146.08) ;
	\draw [color={rgb, 255:red, 246; green, 6; blue, 6 }  ,draw opacity=1 ][line width=1.5]    (179.5,146.08) -- (200.5,155.08) ;
	\draw [color={rgb, 255:red, 246; green, 6; blue, 6 }  ,draw opacity=1 ][line width=1.5]    (200.5,155.08) -- (221.5,164.08) ;
	\draw [color={rgb, 255:red, 246; green, 6; blue, 6 }  ,draw opacity=1 ][line width=1.5]    (221.5,164.08) -- (259.5,177.08) -- (269.5,223.08) ;
	\draw [color={rgb, 255:red, 246; green, 6; blue, 6 }  ,draw opacity=1 ][line width=1.5]    (179.5,146.08) -- (136.5,189.08) ;
	\draw [color={rgb, 255:red, 246; green, 6; blue, 6 }  ,draw opacity=1 ][line width=1.5]    (293.5,140.08) -- (259.5,177.08) ;
	\draw  [color={rgb, 255:red, 246; green, 6; blue, 6 }  ,draw opacity=1 ][fill={rgb, 255:red, 246; green, 6; blue, 6 }  ,fill opacity=1 ] (175.26,146.08) .. controls (175.26,143.74) and (177.16,141.84) .. (179.5,141.84) .. controls (181.84,141.84) and (183.74,143.74) .. (183.74,146.08) .. controls (183.74,148.42) and (181.84,150.32) .. (179.5,150.32) .. controls (177.16,150.32) and (175.26,148.42) .. (175.26,146.08) -- cycle ;
	\draw  [color={rgb, 255:red, 246; green, 6; blue, 6 }  ,draw opacity=1 ][fill={rgb, 255:red, 246; green, 6; blue, 6 }  ,fill opacity=1 ] (217.26,164.08) .. controls (217.26,161.74) and (219.16,159.84) .. (221.5,159.84) .. controls (223.84,159.84) and (225.74,161.74) .. (225.74,164.08) .. controls (225.74,166.42) and (223.84,168.32) .. (221.5,168.32) .. controls (219.16,168.32) and (217.26,166.42) .. (217.26,164.08) -- cycle ;
	\draw  [color={rgb, 255:red, 246; green, 6; blue, 6 }  ,draw opacity=1 ][fill={rgb, 255:red, 246; green, 6; blue, 6 }  ,fill opacity=1 ] (132.26,189.08) .. controls (132.26,186.74) and (134.16,184.84) .. (136.5,184.84) .. controls (138.84,184.84) and (140.74,186.74) .. (140.74,189.08) .. controls (140.74,191.42) and (138.84,193.32) .. (136.5,193.32) .. controls (134.16,193.32) and (132.26,191.42) .. (132.26,189.08) -- cycle ;
	\draw  [color={rgb, 255:red, 246; green, 6; blue, 6 }  ,draw opacity=1 ][fill={rgb, 255:red, 246; green, 6; blue, 6 }  ,fill opacity=1 ] (289.26,140.08) .. controls (289.26,137.74) and (291.16,135.84) .. (293.5,135.84) .. controls (295.84,135.84) and (297.74,137.74) .. (297.74,140.08) .. controls (297.74,142.42) and (295.84,144.32) .. (293.5,144.32) .. controls (291.16,144.32) and (289.26,142.42) .. (289.26,140.08) -- cycle ;
	\draw  [color={rgb, 255:red, 246; green, 6; blue, 6 }  ,draw opacity=1 ][fill={rgb, 255:red, 246; green, 6; blue, 6 }  ,fill opacity=1 ] (265.26,223.08) .. controls (265.26,220.74) and (267.16,218.84) .. (269.5,218.84) .. controls (271.84,218.84) and (273.74,220.74) .. (273.74,223.08) .. controls (273.74,225.42) and (271.84,227.32) .. (269.5,227.32) .. controls (267.16,227.32) and (265.26,225.42) .. (265.26,223.08) -- cycle ;
	\draw  [color={rgb, 255:red, 246; green, 6; blue, 6 }  ,draw opacity=1 ][fill={rgb, 255:red, 246; green, 6; blue, 6 }  ,fill opacity=1 ] (155.26,97.48) .. controls (155.26,95.14) and (157.16,93.24) .. (159.5,93.24) .. controls (161.84,93.24) and (163.74,95.14) .. (163.74,97.48) .. controls (163.74,99.82) and (161.84,101.72) .. (159.5,101.72) .. controls (157.16,101.72) and (155.26,99.82) .. (155.26,97.48) -- cycle ;
	\draw [color={rgb, 255:red, 246; green, 6; blue, 6 }  ,draw opacity=1 ][line width=0.75]    (221.5,164.08) -- (202.5,218.09) ;
	\draw [color={rgb, 255:red, 246; green, 6; blue, 6 }  ,draw opacity=1 ][line width=0.75]    (136.5,189.08) -- (95.5,191.09) ;
	\draw [color={rgb, 255:red, 246; green, 6; blue, 6 }  ,draw opacity=1 ][line width=0.75]    (136.5,189.08) -- (133.5,229.09) ;
	\draw [color={rgb, 255:red, 246; green, 6; blue, 6 }  ,draw opacity=1 ][line width=0.75]    (139.5,74.09) -- (159.5,97.48) ;
	\draw [color={rgb, 255:red, 246; green, 6; blue, 6 }  ,draw opacity=1 ][line width=0.75]    (313.5,122.09) -- (293.5,140.08) ;
	\draw [color={rgb, 255:red, 246; green, 6; blue, 6 }  ,draw opacity=1 ][line width=0.75]    (259.5,177.08) -- (307.5,191.09) ;
	\draw [color={rgb, 255:red, 246; green, 6; blue, 6 }  ,draw opacity=1 ][line width=0.75]    (269.5,223.08) -- (275.5,251.09) ;
	\draw  [color={rgb, 255:red, 246; green, 6; blue, 6 }  ,draw opacity=1 ][fill={rgb, 255:red, 246; green, 6; blue, 6 }  ,fill opacity=1 ] (91.26,191.09) .. controls (91.26,188.75) and (93.16,186.85) .. (95.5,186.85) .. controls (97.84,186.85) and (99.74,188.75) .. (99.74,191.09) .. controls (99.74,193.44) and (97.84,195.33) .. (95.5,195.33) .. controls (93.16,195.33) and (91.26,193.44) .. (91.26,191.09) -- cycle ;
	\draw  [color={rgb, 255:red, 246; green, 6; blue, 6 }  ,draw opacity=1 ][fill={rgb, 255:red, 246; green, 6; blue, 6 }  ,fill opacity=1 ] (129.26,229.09) .. controls (129.26,226.75) and (131.16,224.85) .. (133.5,224.85) .. controls (135.84,224.85) and (137.74,226.75) .. (137.74,229.09) .. controls (137.74,231.44) and (135.84,233.33) .. (133.5,233.33) .. controls (131.16,233.33) and (129.26,231.44) .. (129.26,229.09) -- cycle ;
	\draw  [color={rgb, 255:red, 246; green, 6; blue, 6 }  ,draw opacity=1 ][fill={rgb, 255:red, 246; green, 6; blue, 6 }  ,fill opacity=1 ] (198.26,218.09) .. controls (198.26,215.75) and (200.16,213.85) .. (202.5,213.85) .. controls (204.84,213.85) and (206.74,215.75) .. (206.74,218.09) .. controls (206.74,220.44) and (204.84,222.33) .. (202.5,222.33) .. controls (200.16,222.33) and (198.26,220.44) .. (198.26,218.09) -- cycle ;
	\draw  [color={rgb, 255:red, 246; green, 6; blue, 6 }  ,draw opacity=1 ][fill={rgb, 255:red, 246; green, 6; blue, 6 }  ,fill opacity=1 ] (135.26,74.09) .. controls (135.26,71.75) and (137.16,69.85) .. (139.5,69.85) .. controls (141.84,69.85) and (143.74,71.75) .. (143.74,74.09) .. controls (143.74,76.44) and (141.84,78.33) .. (139.5,78.33) .. controls (137.16,78.33) and (135.26,76.44) .. (135.26,74.09) -- cycle ;
	\draw  [color={rgb, 255:red, 246; green, 6; blue, 6 }  ,draw opacity=1 ][fill={rgb, 255:red, 246; green, 6; blue, 6 }  ,fill opacity=1 ] (271.26,251.09) .. controls (271.26,248.75) and (273.16,246.85) .. (275.5,246.85) .. controls (277.84,246.85) and (279.74,248.75) .. (279.74,251.09) .. controls (279.74,253.44) and (277.84,255.33) .. (275.5,255.33) .. controls (273.16,255.33) and (271.26,253.44) .. (271.26,251.09) -- cycle ;
	\draw  [color={rgb, 255:red, 246; green, 6; blue, 6 }  ,draw opacity=1 ][fill={rgb, 255:red, 246; green, 6; blue, 6 }  ,fill opacity=1 ] (303.26,191.09) .. controls (303.26,188.75) and (305.16,186.85) .. (307.5,186.85) .. controls (309.84,186.85) and (311.74,188.75) .. (311.74,191.09) .. controls (311.74,193.44) and (309.84,195.33) .. (307.5,195.33) .. controls (305.16,195.33) and (303.26,193.44) .. (303.26,191.09) -- cycle ;
	\draw  [color={rgb, 255:red, 246; green, 6; blue, 6 }  ,draw opacity=1 ][fill={rgb, 255:red, 246; green, 6; blue, 6 }  ,fill opacity=1 ] (309.26,122.09) .. controls (309.26,119.75) and (311.16,117.85) .. (313.5,117.85) .. controls (315.84,117.85) and (317.74,119.75) .. (317.74,122.09) .. controls (317.74,124.44) and (315.84,126.33) .. (313.5,126.33) .. controls (311.16,126.33) and (309.26,124.44) .. (309.26,122.09) -- cycle ;

\end{tikzpicture}
}

%% file: Sections/proofs/balancedtreemin.tex
\begin{proof}
	
	We would like to proceed using induction.  The $bc$-tree is balanced. For the base case of 2 pendants, $\deg(c^*)$ will be 2. Here, we augment the graph using one edge and we are done. If $p=3$, then if it's from 1 label, we can add 2 edges inside the same label and complete. However, if we have 3 labels, we still only augment 2 edges from the algorithm. In both cases we've used $\ceil{\frac{p}{2}}$ edges. 
	
	\textbf{Induction hypothesis}: For $p' < p$ pendants we can augment using $\ceil{\frac{p'}{2}}$ edges

	Since $p \ge 4$, there exists at least 2 labels (massive $c$-node otherwise). Hence, the algorithm adds a new edge between 2 different labels. Let $l_1$ and $l_2$ be the labels. Let $l_1$ be a label of maximum size.
	
	\vspace{2mm}
	Assume on the contrary that addition of the edge $e$ creates a massive $c$-node $c^*$. Let $\deg(c^*)_{bef}$ and $\deg(c^*)_{aft}$ refer to the degree of $c^*$ before and after the edge $e$ is added respectively. 
    
    So, $\deg(c^*)_{aft} \ge \ceil{\frac{p'}{2}} + 2$. Assume that $c^*$ was in the cycle created by the addition of the edge $e$. This would mean that $\deg(c^*)_{bef} = \deg(c^*)_{aft} + 1 \ge \ceil{\frac{p'}{2}} + 3 = \ceil{\frac{p}{2}} + 2$. However, this means that $c^*$ was a massive node, contradicting the fact that there was no massive node.
	
	Hence, assume that $c^*$ does not lie in the cycle created by the edge added. We observe that $\deg(c^*)_{bef} = \ceil{\frac{p}{2}} + 1 = d$, say. Indeed, if $\deg(c^*)_{bef}$ were lesser, $c^*$ can't become massive after the edge addition.
	
	\vspace{2mm}
	From \cref{impolemma1}, we see that $c^*$-label exists with maximum size (say $j$). Call this label $l_0$. However we know that $c^*$ is not in the cycle formed by the new edge. This means that The edge is added between two other labels $l_1$ and $l_2$. However we required the edge to be added to maximum size label. Hence $l_2$ (WLOG)
	has size at least $j$. Remember that $\deg(c^*)_{bef} = d$ and the size $j$ label accounts to $j$ pendants. But the other $d-j-1$ neighbours account for at least 2 pendants each since they are not part of the $c^*$-label. The remaining 1 neighbor accounts for the labels $l_1$ and $l_2$ with at least $1$ and $j$ pendants respectively. Now we count the number of pendants ($p$)
	\begin{align*}
		p &\ge 2j + 2(d-j-1) + 1\\
		&=   2d-1\\
		&= 2\left(\left\lceil\frac{p}{2}\right\rceil + 1\right) - 1\\\
		&\ge p+1
	\end{align*}
	However, this is a contradiction. Hence, we can conclude that no massive $c$-node arises after the edge addition. Hence we can go through with the induction hypothesis to augment $G'$ (the resultant graph) with $\ceil{\frac{p'}{2}}$ edges and our original graph with $\ceil{\frac{p}{2}}$ edges.
	
	The algorithm makes sure that edges are not added between leaves with the same parent. Indeed, this is not possible when adding an edge between different labels. Assume that the algorithm has reached a point where there is exactly one label, the size of the label $l$ is either $l=3$ or $l=2$. If $l=3$, since $G$ is not $K_3$,  for the 3 simple vertices $v_1,v_2,v_3$ in the respective bridgeblocks, we know that at most 2 of them share the same parent. Indeed, if all 3 vertices shared the same parent, then we would require another edge from the parent to be connected to the rest of the graph. However, this is not possible since the label has size 3. It is not possible that the entire graph is the 3 pendants and the parent since this is $K_3$ and we forbade it. Hence, with at most two vertices sharing the same parent, we can make sure to order them such that neither $v_1,v_2$ nor $v_2,v_3$ share parents. 
	
	We also posit that initially, the pendant blocks comprise the $p$ degree 1 vertices as their only simple vertices (and their incident edge) in the tree $G$. Every time we add an edge between two different labels, the number of pendants go down by 2 and no new pendant block arises. Therefore, we are adding edges between leaves of the tree $G$. In case the graph only has 1 label, the construction still shows that we handpick the leaves from the pendants to add edges. Hence the set of augmentation edges only connects leaves of the graph $G$.
	\vspace{2mm}
\end{proof}

%% file: Sections/proofs/propbalancedopt.tex
\begin{proof}
	Consider the tree of the graph $G$, $T(G)$. By using Lemma~\ref{lem_onlyleaves}, we obtain $A_T$ such that $|A_T| = \ceil{\frac{k}{2}}$ and $T(G) + A_T$ is biconnected. For every edge between 2 leaves in $A_T$, identify an edge between the corresponding degree 2 vertices in $G$. Let $A_{\mbox{tri}}$ represent these set of edges. It is important to notice that there are no crossings and the planarity is preserved in the fixed embedding since $A_{\mbox{tri}}$ doesn't have any crossings. We claim that $G+A_{\mbox{tri}}$ is triconnected. 
	
	Let $v_1, v_2, \hdots, v_n$ represent the set of vertices in $G$ in cyclic order. Let $P_{v_i, v_j}$ represent the vertices $v_i, v_{i+1}, \hdots, v_j$.\footnote{when we say $v_i$ we mean $v_{i\mod n}$ since the ordering is cyclic}

	Consider two vertices $v_i$ and $v_j$ arbitrarily. We need to show that $G-\{v_i,v_j\}$ is still connected. We'll consider two cases.
	
	\begin{enumerate}[wide, labelindent=0pt,label=Case~\arabic*.]
		\item\label{itm:diffFace} {\bf $v_i$ and $v_j$ belong to different internal faces:} In this case, on removing $v_i$ and $v_j$, the path $P_{v_{i+1}, v_{j-1}}$ and $P_{v_{j+1}, v_{i-1}}$ still stays. But since $v_i$ and $v_j$ belong to different faces, there is a chord $e$ between $P_{v_{i+1}, v_{j-1}}$ and $P_{v_{j+1}, v_{i-1}}$. Hence the remaining vertices are still all connected.
		\item\label{itm:sameFace}  {\bf $v_i$ and $v_j$ belong to an internal face:} Consider an internal face with $v_i$ and $v_j$. Removing $v_i$ and $v_j$ will give us two paths as before $P_{v_{i+1}, v_{j-1}}$ and $P_{v_{j+1}, v_{i-1}}$. Say $v_i$ and $v_j$ only have degree 2 vertices between them in the path. We can assume that $v_i$ and $v_j$ are not adjacent because $G$ remains connected after their removal otherwise. Due to this assumption, let $v$ be a degree 2 vertex in the arc $P_{v_{i+1}, v_{j-1}}$ in the face $F$. From \cref{lem_onlyleaves}, we know that in $A_T$, the vertex $v$ is connected to a some vertex say $u$ that doesn't share the same parent ($F$). Hence, $v$ is connected to a vertex $u$ from $P_{v_{j+1}, v_{i-1}}$. This means that $G - \{v_i,v_j\}$ is connected.
		
		Otherwise, there are degree 3 vertices in  $P_{v_{i+1}, v_{j-1}}$ and $P_{v_{j+1}, v_{i-1}}$ that are adjacent to other faces . Let vertices in each path can be grouped into faces (identify a face $F \in \mathcal{F}$ even if one vertex in the path $P$ is incident on $F$). Let $\mathcal{F}_1 \text{ and } \mathcal{F}_2$ represent the grouping of the faces from the paths $P_{v_{i+1}, v_{j-1}}$ and $P_{v_{j+1}, v_{i-1}}$ respectively. Notice that $\mathcal{F}_1 \text{ and } \mathcal{F}_2$ are not empty from the very assumption of the case. Let $V(\mathcal{F}_i)$  represent the veertices corresponding to the faces in $\mathcal{F}_i$ for $i \in \{1,2\}$. Since $T(G) - v_F$ is connected, there exists an outer edge $e$ from some vertex $v_{F_1} \in \mathcal{F}_1$ to some $v_{F_2} \in \mathcal{F}_2$ that connects degree 2 vertices from $\mathcal{F}_1$ and $\mathcal{F}_2$.

		WLOG, we have $v_a \in P_{v_{i+1}, v_{j-1}}$ and $v_b \in P_{v_{j+1}, v_{i-1}}$. $P_{v_{i+1}, v_{j-1}}$ and $P_{v_{j+1}, v_{i-1}}$ the paths are connected by $e$. Hence, the graph remains connected after the removal of $v_1$ and $v_j$. 
		
		It is also important to notice that the planarity criteria is satisfied and no new crossings are formed. This is because the edges $A_{\mbox{tri}}$ exactly correspond (in embedding) to the edges $A_T$ and they have no crossings.
		
	\end{enumerate}
\end{proof}

%% file: Sections/4.almostopt.tex
\subsection{An Additive (+1)-Approximation Algorithm for Triconnectivity Augmentation of Outerplane Graphs}\label{almostopt}

In this subsection, we present an algorithm for augmenting biconnected outerplane graphs that is almost optimal i.e, it might be off by an additional term of 1. The result is stated in the following lemma:

\begin{restatable}{lemma}{almostopt}\label{theorem_almostopt}
    \textsc{TriConn} triconnects an input biconnected outerplane graph $G=(V,E)$ with at most $\pendant{k}+1$ edges (where $k$ represents the number of degree-2 vertices in $V$) and runs in time $\mathcal{O}(|V|(1+\alpha(|V|)))$
\end{restatable}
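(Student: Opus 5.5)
The plan is to reduce to the balanced case handled by \cref{propbalancedopt}. If $G$ is balanced, \cref{propbalancedopt} already gives $\pendant{k}$ edges within the stated time, so nothing more is needed. Otherwise $bc(T(G))$ has a massive $c$-node $c^*$. By \cref{impolemma1}, $c^*$ is the parent of the unique maximum label $\ell_0$, and its degree is at least $\pendant{k}+2$.

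Since $T(G)$ is a tree, every $c$-node is an internal vertex of $T(G)$. A massive $c$-node $c^*$ therefore corresponds either to an internal face $F$ of $G$ or to one of the leaves attached at that face. The leaf option is impossible, since a leaf is not a cut vertex. So $c^*=v_F$ for an internal face $F$ that is incident to many degree-2 vertices, grouped into the bundle $\ell_0$, and also to many chords leading to other subtrees.

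The central step is face splitting. I will add one inner edge inside $F$ that joins two degree-2 vertices of $F$, chosen so that $F$ is split into two faces $F_1,F_2$ of roughly balanced size. The goal is that the new outerplane-like structure $G+e$ is balanced. Concretely, I pick the cut so that each side contains at most about half of the neighbours of $v_F$ in $T(G)$. In $T(G+e)$, $v_F$ is replaced by two adjacent nodes $v_{F_1},v_{F_2}$. The pendants for the two endpoints lose one degree each, so the number of pendants drops to $k-2$, or stays $k'\le k$ in other variants. Each new node then has degree about $\deg(c^*)/2+1$.

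I will verify that no node of $T(G+e)$ is massive with respect to $k'=k-2$. The degrees of other nodes are unchanged and were already below $\pendant{k}+2$. To make this robust, I expect to need the count from \cref{balancedtreemin}: a node of degree $d$ that is not the parent of a maximum label accounts for at least $2(d-j-1)+j$ pendants. With this count the off-by-one boundary $\pendant{k-2}+2$ should still hold, possibly after a second inner edge in degenerate cases.

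Next I apply \cref{propbalancedopt} to $G+e$. This uses $\pendant{k-2}=\pendant{k}-1$ edges, so the total is $\pendant{k}$. In the worst case, where the split cannot use two degree-2 endpoints (for example, when parity forces an endpoint to be a degree-3 vertex), the pendant count drops by only one and the total becomes $\pendant{k-1}+1\le\pendant{k}+1$.

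Two further checks are needed. First, triconnectivity of $G$ plus all edges follows because $G+e$ is still biconnected and outerplane with a tree dual, so the proof of \cref{propbalancedopt} applies verbatim. Second, the edges inside $F_1,F_2$ must not cross $e$, which holds because \cref{propbalancedopt} only adds edges according to the embedding of $T(G+e)$. The running time is $O(|V|)$ to find $c^*$ and the split, plus \cref{lem_nearlintime}.

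The main obstacle is showing that a single split of the massive face always yields a balanced tree, with no newly massive node and no remaining massive node on either side. If one split is insufficient, I would iterate: repeatedly split the current massive face. I would then argue by a potential argument that each split reduces the excess $\deg(c^*)-\pendant{k}-1$ while reducing $k$ by $2$ at no extra cost, so that the $+1$ is incurred at most once.
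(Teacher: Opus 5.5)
Your overall route is the paper's. If $G$ is balanced you invoke \cref{propbalancedopt}. Otherwise you insert one inner edge $e$ that splits the massive face $F$ into $F_1,F_2$ with $|d(F_1)-d(F_2)|\le 1$, and run \textsc{OuterPlanarConnect} on the outerplane graph $G+e$. However, the step you call the main obstacle (that this single split makes $G+e$ balanced) is the whole content of the lemma, and you leave it open (``should still hold, possibly after a second inner edge'').

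The one concrete argument you give for the untouched faces also uses the wrong threshold. Let $a\in\{0,1,2\}$ be the number of degree-2 endpoints of $e$. Then $G+e$ has $k-a$ pendants, and a face is massive there iff its degree is at least $\lceil (k-a)/2\rceil+2$. For $a=2$ this threshold is $\pendant{k}+1$. So a face $F'\neq F$ with $d(F')=\pendant{k}+1$, which is non-massive in $G$, becomes massive in $G+e$; ``already below $\pendant{k}+2$'' does not suffice. The fact you need is \cref{lem:bigfacesmallface}: once $F$ is massive, every other face has $d(F')\le\pendant{k}\le\lceil (k-a)/2\rceil+1$. The label count from the proof of \cref{lem_balancedtreemin} is not the relevant tool here.

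For the two new faces, the missing ingredient is $d(F)\le k$, since a tree has at least as many leaves as its maximum degree. The dual edge of $e$ contributes to both sides and the $a$ endpoint pendants disappear, so $d(F_1)+d(F_2)=d(F)+2-a$. A balanced split therefore gives $\max_i d(F_i)=\lceil (d(F)+2-a)/2\rceil\le\lceil (k-a)/2\rceil+1$, so neither new face is massive, for every value of $a$. In particular you never need both endpoints to be degree-2, which you cannot guarantee together with a balanced split. The total is then $1+\lceil (k-a)/2\rceil\le\pendant{k}+1$ immediately.

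Drop the fallback of iterating splits under a potential argument; it is unjustified. A split with fewer than two degree-2 endpoints need not lower $\pendant{k}$ by one (e.g.\ $a=0$, or $a=1$ with $k$ even). Each such split can therefore cost an extra edge, and several of them would exceed the $+1$ budget. With the two facts above, a single split always suffices and no iteration is needed.
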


We now briefly describe the structure of the algorithm used in this subsection. 
\paragraph*{Algorithmic overview.}

The algorithm \textsc{TriConn} computes an augmentation that is at most one edge
larger than optimal. For balanced instances, it invokes
\textsc{OuterPlanarConnect}, which is optimal. For unbalanced instances, it
adds a carefully chosen edge inside a maximum-degree face so as to
balance the graph, and then applies OuterPlanarConnect to the resulting
instance.

Before we proceed further, we provide essential preliminaries and structural insights.

\subsubsection*{Preliminaries and Structural Results}

Throughout the remainder of the section, we talk about internal faces in the graph and make a distinction between faces that are large and otherwise. Hence, it is essential and natural to introduce some new terminology.

\begin{definition}
	For an Outerplane graph $G$ with an internal face $F$, the \emph{degree of a face} $F$ or $d(F)$ refers to the degree of the vertex $v_F$ representing the face in $T(G)$.
\end{definition}   
A face $F$ is said to be \emph{massive} if in $T(G)$ we have $d(v_F)-1 > \pendant{k}$.

We observe that if a face has very high degree, it is not possible for the other faces in the graph to have a high degree.

\begin{restatable}{lemma}{bigfacesmallface}\label{lem:bigfacesmallface}
	Let $G$ be an outerplane graph with $k$ degree-2 vertices. If $G$ contains a face $F$ that is massive, then every other face $F'$ satisfies $d(F') \le \pendant{k}$.
\end{restatable}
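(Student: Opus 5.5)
The plan is to count leaves of $T(G)$ by deleting an edge. Since $T(G)$ is a tree with exactly $k$ leaves, and the pendant leaves attached for degree-2 vertices are exactly the leaves, the two large faces would force too many leaves.

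Suppose, for contradiction, that $F$ is massive and some other internal face $F'$ has $d(F') \ge \pendant{k}+1$. Write $a = d(v_F) \ge \pendant{k}+2$ and $b = d(v_{F'}) \ge \pendant{k}+1$. Both $v_F$ and $v_{F'}$ are internal nodes of the tree $T(G)$, and there is a unique path between them. Let $e$ be the edge on this path incident to $v_F$.

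Deleting $e$ splits $T(G)$ into two subtrees: $T_F$ containing $v_F$ and $T_{F'}$ containing $v_{F'}$. We bound the leaves of $T(G)$ on each side.

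\emph{Leaves near $v_F$.} In $T_F$, the vertex $v_F$ has $a-1$ branches other than the one through $e$. Each branch is a subtree hanging off $v_F$ and contains at least one leaf of $T(G)$ (possibly the neighbour itself). These leaves are distinct.

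\emph{Leaves near $v_{F'}$.} Similarly, $v_{F'}$ has at least $b-1$ branches not containing $v_F$. Each contains a leaf of $T(G)$, and all of them lie in $T_{F'}$.

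Since the branches at $v_F$ lie in $T_F$ and those at $v_{F'}$ lie in $T_{F'}$, all these leaves are disjoint. Hence
\[
k \ge (a-1)+(b-1) \ge \pendant{k}+1+\pendant{k} \ge k+1,
\]
which is a contradiction.

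The only step needing care is that each branch genuinely contains a leaf of $T(G)$ distinct from $v_F$ and $v_{F'}$. This holds because every finite subtree hanging off a node has a vertex of degree 1 in $T(G)$, and that vertex cannot be $v_F$ or $v_{F'}$, since both have degree at least 2.

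We also need to handle a degenerate case: $v_{F'}$ might itself be a pendant node in the interior dual. This causes no problem, because $d(F')$ already includes the attached degree-2 pendants, so the branch count still goes through.

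Only the slack $a-1 > \pendant{k}$ is actually needed. Even the weaker $b \ge \pendant{k}+1$ yields the contradiction, which gives exactly the claimed bound $d(F') \le \pendant{k}$.
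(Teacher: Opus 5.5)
Your proof is correct and follows essentially the same route as the paper. Both arguments count the $d(F)-1$ and $d(F')-1$ branches hanging off $v_F$ and $v_{F'}$ away from the connecting path. Each branch contains a leaf of $T(G)$, and the branches are disjoint, which yields at least $2\lceil k/2\rceil+1\ge k+1$ leaves. Your version, which deletes only the path edge at $v_F$ and checks explicitly that each branch contains a genuine leaf of $T(G)$, is just a slightly tidier write-up of the paper's argument.
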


\begin{proof}
	Let $v_F$ be the node corresponding to a face $F$ with
$d(F) \ge \pendant{k}+2$. Suppose for contradiction that there exists
another face $F'$ with node $v_{F'}$ such that
$d(F') \ge \pendant{k}+1$.

Consider the unique path between $v_F$ and $v_{F'}$ in $T(G)$.
Remove the edges of this path. The remaining graph consists of two
subtrees rooted at $v_F$ and $v_{F'}$, respectively. Each pendant
belongs to exactly one of these subtrees or lies on the path.

The subtree rooted at $v_F$ contains at least $\pendant{k}+1$ pendants,
since $d(F) \ge \pendant{k}+2$ and at most one edge of the path is
incident to $v_F$. Similarly, the subtree rooted at $v_{F'}$ contains
at least $\pendant{k}$ pendants, since $d(F') \ge \pendant{k}+1$.

Thus the total number of pendants in $T(G)$ is at least
\[
\left(\pendant{k}+1\right) + \pendant{k} = k+1,
\]
contradicting the fact that $G$ has exactly $k$ degree-2 vertices,
and hence $T(G)$ has exactly $k$ pendants.
\end{proof}

Now, we look at a structural lemma that guarantees a scenario where an edge added inside a face $F$ balances the graph.

\begin{restatable}{lemma}{newmaxface}\label{lem_newmaxface}
	If $G$ is an unbalanced graph with a massive face $F$ such that adding $e$ inside $F$ leads to an old face $F'$ in $G (\neq F)$ becoming the maximum
	degree face in $G+e$, then $G+e$ is a balanced graph.
\end{restatable}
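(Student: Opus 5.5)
The plan is to translate balancedness of $G+e$ into a degree bound on faces, and then combine the hypothesis that $F'$ is the maximum-degree face with \cref{lem:bigfacesmallface}.

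First I would make the target condition explicit. $T(G)$ is a tree, so its block-cut tree has one $c$-node for each internal vertex of $T(G)$, and the degree of that $c$-node equals the degree of the vertex in $T(G)$. Internal vertices of $T(G)$ are exactly the face-nodes $v_F$. So $G+e$ is balanced if and only if every internal face $H$ of $G+e$ satisfies $d(H) \le \pendant{k'}+1$, where $k'$ is the number of degree-2 vertices of $G+e$. Moreover, $F'$ is the maximum-degree face of $G+e$, so it suffices to prove the single inequality $d_{G+e}(F') \le \pendant{k'}+1$.

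Next I would describe exactly how $T$ changes when $e=uv$ is drawn inside $F$. The face $F$ splits into two faces $F_1,F_2$, and $v_F$ is replaced by two adjacent nodes $v_{F_1},v_{F_2}$ that share its former neighbours. Each endpoint of $e$ that had degree 2 in $G$ now has degree 3, so its pendant disappears. Hence $k' \ge k-2$, which gives $\pendant{k'} \ge \pendant{k}-1$.

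The key point, and the step I expect to need the most care, is that $d(F')$ does not change. A degree-2 vertex of a biconnected outerplane graph lies on exactly one internal face. So any endpoint of $e$ that loses its pendant lies only on $F$, and its pendant was attached to $v_F$, not to $v_{F'}$. An endpoint shared by $F$ and $F'$ has degree at least 3 and carries no pendant. Also, $e$ lies inside $F$, so it creates no new dual adjacency for $F'$ beyond its old adjacency to $F$, which is now inherited by exactly one of $F_1,F_2$. Therefore $d_{G+e}(F') = d_G(F')$.

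Finally I would apply \cref{lem:bigfacesmallface}: $F$ is massive in $G$ and $F'\neq F$, so $d_G(F') \le \pendant{k}$. Combining the previous steps gives
\[
d_{G+e}(F') = d_G(F') \le \pendant{k} \le \pendant{k'}+1 .
\]
So no $c$-node of $bc(T(G+e))$ is massive, and $G+e$ is balanced. The faces $F_1,F_2$ and all other old faces are covered automatically, because by hypothesis their degree is at most $d_{G+e}(F')$.
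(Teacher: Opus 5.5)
Your proof is correct and takes the same route as the paper. You bound the surviving old face by $d_G(F')\le\lceil k/2\rceil$ (the paper re-derives this by the same pendant count as \cref{lem:bigfacesmallface}), note that $G+e$ keeps at least $k-2$ degree-2 vertices, and conclude $d(F')-1\le\lceil (k-2)/2\rceil\le\lceil k'/2\rceil$. Your explicit checks that balancedness reduces to this face-degree bound, and that inserting $e$ inside $F$ leaves $d(F')$ unchanged, are steps the paper uses only implicitly; they tighten its argument rather than change it.
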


	\begin{proof}
		We know that the face $F$ has a degree of $d(F) > \pendant{k}+1$. We muster that $d(F') \le \pendant{k}$ for all other faces $F'\neq F$ in $G$ from arithmetic by counting the pendants from the fact that the total number of degree-2 vertices in $G$ is $k$. Once the edge $e$ is added, an old face $F^*$ is the maximum-degree face. But we know that $d(F^*) \le \pendant{k}$. That is, $d(F^*)-1 \le \pendant{k-2}$. $G+e$ has at least $k-2$ degree-2 vertices (since the addition of an edge removes two degree-2 vertices at most). Hence, none of the faces are massive, which means that the resultant graph is balanced.
	\end{proof}

\subsubsection*{Algorithm for (+1)-approximation} 

The algorithm proceeds as follows: for unbalanced graphs, we cleverly pick an edge in the maximum-degree face that splits the face into two new faces of similar size and makes use of the subroutine for balanced graphs highlighted in \cref{propbalancedopt}.

\begin{algorithm}[t]
\caption{\textsc{TriConn}}
\label{triconn}

\KwIn{Outerplane 2-connected graph $G$}
\KwOut{List of edges $E'$ such that $G(V, E \cup E')$ is planar and triconnected}

Compute $T(G)$, $k \gets \text{no. of leaves}$, and $d \gets \max_{v \in T(G)} d(v)$\;

\eIf{$\pendant{k} \ge d - 1$}{
    \Return{\textsc{OuterPlanarConnect}($G$)}\;
}{
    Let $v_F$ be the vertex corresponding to face $F$ with maximum-degree $d$\;
    Let $u$ be an arbitrary vertex in $F$ (choose it to be a degree $2$ vertex if it exists)\;
    Let $v$ be another vertex in $F$ such that $e = uv$ splits $F$ into two faces $F_1$ and $F_2$ with $|d(v_{F_1}) - d(v_{F_2})| \le 1$\;
    
    \Return $\textsc{OuterPlanarConnect}(G+e) \cup \{e\}$;
}
\end{algorithm}

\begin{algorithm}[H]
\caption{\textsc{OuterPlanarConnect}}
\label{outerplanarconnect}

\KwIn{Outerplane 2-connected balanced graph $G$}
\KwOut{List of edges $E'$ such that $G(V, E \cup E')$ is planar and triconnected}

$A \gets \textsc{TreeConn}(T(G))$\;
Let $A'$ represent the corresponding edge set in $G$ from \cref{propbalancedopt}\;
\Return{$A'$}\;

\end{algorithm}

Although this solution does not claim to be the optimal one, it is indeed optimal when the input graph is balanced. 

\begin{lemma}\label{lem_balancedopt}
For a balanced outerplane 2-connected graph $G$, \textsc{OuterPlanarConnect} computes an optimal triconnectivity augmentation in $\mathcal{O}(|V|(1+\alpha(|V|)))$ time.
\end{lemma}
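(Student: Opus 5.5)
The plan has three parts: a lower bound, an upper bound, and a running-time bound, each taken from results already stated.

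\textbf{Lower bound.} By the Observation at the start of \cref{subsec:prelimstreegraph}, every triconnectivity augmentation of $G$ needs at least $\pendant{k}$ edges. Here $k$ is the number of degree-2 vertices of $G$, which equals the number of leaves of $T(G)$. The argument is short. In a 3-connected graph every vertex has degree at least $3$. Hence every degree-2 vertex of $G$ must be an endpoint of some augmentation edge. Each new edge covers at most two such vertices, so at least $\pendant{k}$ edges are needed. This holds for inner and outer edges alike, so it lower-bounds every solution whose embedding extends that of $G$.

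\textbf{Upper bound.} Since $G$ is balanced, $bc(T(G))$ has no massive $c$-node. In $T(G)$ every internal node (face vertex $v_F$) is a cut vertex, and the blocks are single edges. So the balance condition says exactly that $d(v_F) \le \pendant{k}+1$ for every face $F$, with the pendants of $bc(T(G))$ being the $k$ leaves of $T(G)$. \textsc{OuterPlanarConnect} calls \textsc{TreeConn} on $T(G)$. By \cref{lem_onlyleaves} (via \cref{lem_balancedtreemin}), this returns $A_T$ with $|A_T| = \pendant{k}$, where every edge joins leaves and no edge joins two sibling leaves. By \cref{propbalancedopt}, translating $A_T$ leaf-by-leaf into edges between the corresponding degree-2 vertices yields a set $A'$ of the same size. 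This $A'$ is planar with respect to the fixed embedding and makes $G+A'$ triconnected. Combining the two bounds, $|A'| = \pendant{k}$ equals the lower bound, so $A'$ is optimal.

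\textbf{Degenerate cases.} \Cref{lem_onlyleaves} excludes $T$ being $K_2$ or $K_3$, so I would check these separately. The main case is when $G$ is a single cycle: then $T(G)$ is a star, and balance forces at most $\pendant{k}+1 \ge k$ leaves. Such a star is small ($k\le 3$, e.g.\ $G$ a triangle or $C_4$), and I would handle it directly. For instance, one chord of $C_4$ does not give 3-connectivity ($C_4$ would need $K_4$), so the bound there has to be checked by hand. If the statement needs it, I would restrict to the nontrivial case. Of all the steps, this degenerate-case verification is where I expect the only real friction; the rest is assembling stated lemmas.

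\textbf{Running time.} Computing $T(G)$ (the interior dual plus pendant leaves) takes linear time for an outerplane graph. \textsc{TreeConn} runs in $\mathcal{O}(|V|(1+\alpha(|V|)))$ time by \cref{lem_nearlintime}. The translation step is a linear-time relabelling of leaf endpoints. Together these give the stated bound.
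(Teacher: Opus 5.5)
Your proposal is correct and follows the paper's route: the paper's proof is a one-line appeal to \cref{propbalancedopt} for sufficiency of $\pendant{k}$ edges, with the degree-2 lower bound for necessity and the running time inherited from \textsc{TreeConn}, which is exactly your assembly. One correction to your degenerate-case paragraph: $C_4$ is not balanced (its star has centre degree $4=\pendant{4}+2$), so the only balanced cycle is the triangle, and since $K_3$ has no $3$-connected simple supergraph at all, it must be excluded from the statement rather than checked by hand.
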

\begin{proof}
If $G$ is balanced, \cref{propbalancedopt} implies that $\pendant{k}$ edges suffice and are necessary. The correctness and running time follow as well.
\end{proof}

We can finally prove the correctness of the solution with the desired Time Complexity.
\begin{proof}[Proof of \cref{theorem_almostopt}]
If $G$ is balanced, optimality and running time follow from \cref{lem_balancedopt}.  
Assume $G$ is unbalanced and let $F$ be a face with $d(F) > \lceil k/2 \rceil + 1$.  
By \cref{lem:bigfacesmallface}, all other faces $F' \ne F$ satisfy  
$d(F') \le \lceil k/2 \rceil$. Adding an internal edge $e$ that splits $F$ into faces
$F_1, F_2$ with $|d(F_1)-d(F_2)| \le 1$ ensures  
$d(F_i)-1 \le \lceil (k-a)/2 \rceil$ for $i\in\{1,2\}$, where $a\in\{0,1,2\}$ counts the
degree-2 vertices incident to $e$. Extrapolating from \cref{lem_newmaxface} and using the degree bounds on $F_1$ and $F_2$, $G+e$ is balanced. Applying
\textsc{OuterPlanarConnect} yields a triconnected graph using at most
$\lceil (k-a)/2 \rceil + 1 \le \lceil k/2 \rceil + 1$ edges.  
The running time follows from the linear-time split and \cref{lem_balancedopt}.
\end{proof}

Although \textsc{TreeConn} might use one edge more than the optimal, in multiple instances, it does provide the optimal solution. In fact, there are various instances where $\pendant{k}+1$ edges are required to make a graph triconnected.

%% file: Sections/5.opt.tex
\subsection{An Optimal Algorithm for Triconnectivity Augmentation of Outerplane Graphs}\label{optalg}
In this subsection, we finally provide an optimal algorithm for the triconnectivity augmentation problem on general biconnected outerplane graphs (OTA-Fix) building on \cref{almostopt,subsec:biconnplanetree}. 
Let us restate it for convenience.

\begin{restatable}{theorem}{triconnmainthm}\label{thm:triconnexact}
     There exists an algorithm that solves OTA-Fix by running in time $\mathcal{O}(|V|(1+\alpha(|V|)))$ time and linear space, augmenting $\mathrm{OPT}(G)\in\{\lceil k/2\rceil,\ \lceil k/2\rceil+1\}$ edges, where $\alpha$ denotes the inverse of the Ackermann function.
 \end{restatable}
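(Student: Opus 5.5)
Building on \cref{theorem_almostopt} and \cref{lem_balancedopt}, the plan is to determine exactly when $\pendant{k}$ edges suffice for an unbalanced instance. If $G$ is balanced, \textsc{OuterPlanarConnect} is optimal by \cref{lem_balancedopt}, since $\pendant{k}$ is a lower bound (the observation on degree-2 vertices). So assume $G$ is unbalanced, with its unique massive face $F$ (uniqueness by \cref{lem:bigfacesmallface}). By \cref{theorem_almostopt}, $\mathrm{OPT}(G)\le\pendant{k}+1$, and the lower bound gives $\mathrm{OPT}(G)\ge\pendant{k}$. This already yields the claimed range $\mathrm{OPT}(G)\in\{\pendant{k},\pendant{k}+1\}$. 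What remains is to decide in linear time which value is attained, and to produce a matching augmentation.

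The key structural step is to show that, in an unbalanced instance, some solution of size $\pendant{k}$ has a very restricted form. Consider the cut vertex $v_F$ of $T(G)$: removing it separates the $d(F)$ branches. Any augmentation must connect all branches of $T(G)-v_F$ through outer edges, while also triconnecting $G$. An outer edge can merge only two branches around $v_F$ (planarity in the outer face forces edges to join branches consecutive in the cyclic order, or to nest). Since $d(F)-1>\pendant{k}$, outer edges alone cannot do this with $\pendant{k}$ edges. Hence every optimal solution contains inner edges inside $F$; inner edges in other faces are useless, since those faces are already small.

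I would then prove that at most two inner edges are needed. An inner chord of $F$ splits $F$, and the degree-2 vertices it uses are removed from the pendant count. Exchanging arguments should show that any set of inner edges in $F$ can be replaced by at most two chords between degree-2 vertices of $F$ without increasing the total count. With that in hand, the case analysis runs as follows. If $F$ has a degree-2 vertex $u$ (used as an endpoint) and the parity of $k$ allows it, one chord $uv$ with $v$ also degree-2 lowers $k$ by $2$ while balancing. If that fails, try two chords, each incident to degree-2 vertices, which lower $k$ by $4$ and split $F$ into three faces. Each candidate configuration is checked via \cref{lem_newmaxface} and the balance condition: $\max d(F_i)-1\le\pendant{k'}$, where $k'$ is the new count. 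If some choice gives $|E_{\text{inner}}|+\pendant{k'}=\pendant{k}$, output it together with \textsc{OuterPlanarConnect}$(G+E_{\text{inner}})$. Otherwise fall back to \textsc{TriConn}, which gives $\pendant{k}+1$ and is optimal by the structural claim.

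The main obstacle is the structural claim: that an optimal solution uses at most two inner edges, all inside $F$ with degree-2 endpoints, and that its outer part is exactly a \textsc{TreeConn}-type augmentation of the resulting balanced instance. For this I would first try a counting argument on $T(G)$ around $v_F$, bounding how many branches each edge type can merge, combined with an uncrossing and exchange argument inside $F$.

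For the running time, it remains to choose the chords quickly. The degree-2 vertices of $F$, in cyclic order, give prefix counts of branch sizes, so the balancing chord endpoints can be found by a linear scan. Only $O(1)$ candidate configurations need testing, so the total cost is dominated by \cref{lem_nearlintime}, giving $\mathcal{O}(|V|(1+\alpha(|V|)))$ time and linear space.
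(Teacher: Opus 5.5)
\textbf{There is a genuine gap.} You correctly get $\mathrm{OPT}(G)\in\{\lceil k/2\rceil,\lceil k/2\rceil+1\}$ from the lower bound and \cref{theorem_almostopt}. You also correctly see that any $\lceil k/2\rceil$-solution of an unbalanced instance must use a chord of $F$. But the real content of the theorem is deciding which value holds and producing a matching augmentation in near-linear time, and in your plan this rests on two unproved claims.

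\emph{The exchange claim.} You assert that some $\lceil k/2\rceil$-solution uses at most two chords of $F$ between degree-2 vertices. You flag this yourself as the main obstacle and offer only ``exchanging arguments should show''. Without it, a failed search certifies nothing, so the fallback to \textsc{TriConn} is not known to be optimal.

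\emph{The $O(1)$ candidates claim.} With $r$ degree-2 vertices on $F$ (and $r$ can be $\Theta(k)$), there are $\Theta(r^2)$ single chords and $\Theta(r^4)$ non-crossing pairs. A prefix-sum scan handles single chords, but nothing in your plan bounds or organizes the search over pairs.

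\emph{Odd $k$.} The exchange claim is false as stated when $k$ is odd. If $F$ carries a single degree-2 vertex $u$, every $\lceil k/2\rceil$-solution must contain a chord of $F$ from $u$ to a vertex of degree at least three. Your fallback still happens to be optimal there, since \textsc{TriConn} then uses only $1+\lceil (k-1)/2\rceil=\lceil k/2\rceil$ edges, but not for the reason you give.

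\textbf{What the paper does instead.} It never proves an exchange lemma. It settles odd $k$ first (\cref{lem_oddoptimal}): a chord from a degree-2 vertex to any vertex is budget-neutral, so \textsc{TriConn} is already optimal. For even $k$, it decides the answer through the gaps $t(v_i,v_{i+1})$ between consecutive degree-2 vertices of $F$.
\begin{itemize}
\item If some gap is at least $\lceil k/2\rceil$, every edge of a $\lceil k/2\rceil$-solution must join two degree-2 vertices. So that arc stays on a single face after any chord insertion, that face remains massive, and \cref{degreegoverns} forces $\lceil k/2\rceil+1$ edges (\cref{prop_bigarc}). This is a direct obstruction that needs no search.
\item If a gap equals $\lceil k/2\rceil-1$, the single chord $v_iv_{i+1}$ balances the graph.
\item Otherwise all gaps are at most $\lceil k/2\rceil-2$. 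The paper then handles the brute-force case of at most four degree-2 vertices on $F$, and tests for a balanced split from a fixed $u$ to another degree-2 vertex. Failing both, five vertices chosen around the balanced split point give arc lengths $a_1,\dots,a_5$. A constant-size case analysis on these always yields one or two balancing chords (\cref{lem:fivevertices,lem:fivecasesbalance}).
\end{itemize}
Thus ``at most two inner edges'' and the linear running time both follow constructively from this gap dichotomy. That dichotomy is the idea your plan is missing.
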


\paragraph*{Proof strategy.}
The additive-$1$ algorithm of \cref{almostopt} already shows that every
instance can be augmented using at most
$\lceil k/2\rceil+1$ edges. To obtain an exact algorithm, we first
establish structural properties of optimal solutions. In particular,
we show that unbalanced instances require internal augmentation edges
when only $\lceil k/2\rceil$ edges are used in an optimal solution. Morever, we establish that all the internal edges can be added to the maximum-degree face (\cref{internaledgeoneface}). These structural
results allow us to distinguish instances whose optimum is
$\lceil k/2\rceil+1$ from those admitting a solution of size
$\lceil k/2\rceil$. The resulting algorithm either identifies a
balancing edge in the maximum-degree face or reduces the problem to a
final balancing case.

The algorithm proceeds in stages. We first handle the case
where the maximum-degree face contains at most four degree-2 vertices,
for which an optimal augmentation can be obtained directly. We then
consider several additional configurations that admit simple balancing
steps or immediately imply that
$\mathrm{OPT}(G)=\lceil k/2\rceil+1$. 

Finally, once these
cases have been eliminated, the remaining instance satisfies a well-defined set of entry conditions.
The algorithm first selects five well-spaced degree-2 vertices on the bounday of the maximum-degree face $F$. It then analyzes the facial segments determined by these vertices, and adds one or two internal edges to balance the graph. The resulting balanced instance is then completed optimally using
\textsc{OuterPlanarConnect}.

All edges inserted by the algorithm are drawn inside faces and are
non-crossing, so the embedding of $G$ is preserved throughout.

Before we proceed to the structural preliminaries required for this section, we record two consequences from our results in \Cref{subsec:biconnplanetree}.

\subsubsection*{Supporting lemmas for \Cref{optalg}}

We observed that \textsc{TreeConn} provides a minimum edge set that helps us move from biconnectivity of a plane balanced tree to the triconnectivity of an outerplane balanced graph.

The following lemma shows that whenever an optimal augmentation using only outer edges exists for \textsc{OTA-Fix}, we may assume, without loss of generality, that all augmentation edges are incident only on degree-2 vertices.

\begin{restatable}{lemma}{tritotrideg}\label{tritotrideg2}
	If an outerplane biconnected graph $G$ can be triconnected using $\pendant{k}$ outer edges $A_{opt}$ (and no internal edges), then there exists $A_{tri}$ such that $A_{tri}$ only connects degree-2 vertices in $G$ and $|A_{tri}| = \pendant{k}$ such that $G+A_{tri}$ is triconnected.
\end{restatable}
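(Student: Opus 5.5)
The plan is a counting argument followed by a conversion step. First, a counting argument shows that every edge of $A_{opt}$ already joins two degree-2 vertices; where that fails for odd $k$, a local modification turns the edge set into one with that property without changing its size or destroying triconnectivity.

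\textbf{Step 1 (degree counting).} Every degree-2 vertex of $G$ must become an endpoint of some edge of $A_{opt}$, since $G+A_{opt}$ has minimum degree at least $3$. Since $|A_{opt}|=\pendant{k}$, the edges have $2\pendant{k}\le k+1$ endpoints in total. When $k$ is even, each endpoint must be a distinct degree-2 vertex, so $A_{tri}:=A_{opt}$ already works. When $k$ is odd, there is exactly one endpoint to spare. This spare endpoint is either a second occurrence of a degree-2 vertex or a vertex $w$ of degree at least $3$. In the first case we are again done. So the only case left is $k$ odd with exactly one edge $xw\in A_{opt}$, where $x$ has degree 2 and $\deg_G(w)\ge 3$.

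\textbf{Step 2 (rerouting the bad edge).} We replace $xw$ by an edge $xy$, where $y$ is a degree-2 vertex. The first candidates are the degree-2 vertices $y$ reachable inside the same outer region cut off by the edges of $A_{opt}$, so that planarity in the outer face is kept.

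Concretely, the edges of $A_{opt}$ are pairwise non-crossing chords of the outer face, drawn around the cyclic order $v_1,\dots,v_n$. Take the arc of the outer boundary from $x$ to $w$ that contains no further endpoint of $A_{opt}$ between them, or, failing that, the innermost such arc. Choose $y$ to be the degree-2 vertex nearest to $w$ along the boundary. Then $xy$ does not cross any edge of $A_{opt}-xw$.

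If the region between $x$ and $w$ contains no degree-2 vertex, we instead slide the endpoint $w$ along the boundary toward an endpoint $z$ of a neighbouring chord. In that case the new edge is $xz$, and it duplicates the role of $z$. This is still a legal choice, since a degree-2 vertex may be used twice.

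\textbf{Step 3 (triconnectivity is preserved).} This is the main obstacle. We must show that $G+A_{opt}-xw+xy$ has no separating pair $\{a,b\}$. The approach is to translate to the treegraph $T(G)$ and argue as in the proof of \cref{propbalancedopt}: a pair $\{a,b\}$ separates $G+A$ exactly when some outer arc $P_{a+1,b-1}$ and its complementary arc are joined by neither a chord of $G$ nor an edge of $A$.

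Suppose $\{a,b\}$ separates the new graph but not the old one. Then $xw$ was the only edge crossing the cut, while $xy$ fails to cross it. Hence $a$ or $b$ lies strictly between $y$ and $w$ on the boundary. By the choice of $y$, every vertex strictly between $y$ and $w$ has degree at least $3$ in $G$, so each is incident to a chord of $G$ or lies on an internal face that is shared across the cut. I expect to use this to find a chord of $G$ that crosses the same cut, which contradicts the assumption that $\{a,b\}$ separates.

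If this direct argument becomes messy, the fallback is to take a counterexample minimizing the boundary distance from $w$ to the nearest degree-2 vertex, and to show that moving $w$ one step closer, to an adjacent vertex on the shared internal face, preserves triconnectivity. Iterating this single-step move then finishes the proof.
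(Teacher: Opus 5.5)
Your Step~1 is correct and matches the paper's opening. For even $k$ there is nothing to do. For odd $k$ the only problematic configuration is a single edge $xw\in A_{opt}$ with $\deg_G(x)=2$ and $\deg_G(w)\ge 3$, where $w$ and every degree-2 vertex are endpoints of exactly one edge of $A_{opt}$.

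The gap is Step~3. You leave it as an expectation, and with your choice of $y$ it cannot be closed. The ``innermost arc'' is undefined, so your only concrete rule is the degree-2 vertex nearest to $w$. That rule is not tied to the side on which $w$ has a chord. Knowing that the vertices strictly between $w$ and $y$ have degree at least $3$ only gives you chords, and those chords may all stay on $w$'s side of the cut.

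Here is a counterexample. Take the cycle $v_1v_2\cdots v_{14}$ with chords $v_2v_4,v_2v_6,v_8v_{14},v_9v_{11},v_9v_{12},v_9v_{13}$.
\begin{itemize}
\item The degree-2 vertices are $v_1,v_3,v_5,v_7,v_{10}$, so $k=5$.
\item $A_{opt}=\{v_5v_{13},v_7v_{10},v_1v_3\}$ is a non-crossing set of outer edges, and $G+A_{opt}$ is triconnected: every non-adjacent pair of cycle vertices is interleaved by a chord or an edge of $A_{opt}$.
\item The bad edge is $v_5v_{13}$. The degree-2 vertex nearest to $w=v_{13}$ is $v_1$ (two steps, via $v_{14}$), not $v_{10}$ (three steps).
\item Replacing $v_5v_{13}$ by $v_5v_1$ keeps planarity. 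But then $\{v_1,v_7\}$ separates $\{v_2,\dots,v_6\}$ from $\{v_8,\dots,v_{14}\}$, because the chord $v_8v_{14}$ at the intermediate vertex $v_{14}$ does not cross this cut.
\item Your one-step fallback fails in this direction at the very first move: after replacing $v_5v_{13}$ by $v_5v_{14}$, the pair $\{v_6,v_{14}\}$ already separates.
\end{itemize}

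The missing idea, which is the one the paper uses, is to tie $y$ to a chord at $w$.
\begin{itemize}
\item Since $\deg_G(w)\ge 3$, $G$ has a chord $wc$ with $c\neq x$. The open boundary arc from $w$ to $c$ that avoids $x$ contains a degree-2 vertex.
\item Let $y$ be the first such vertex reached when walking from $w$ towards $c$. The cyclic order is then $x,w,y,c$.
\item Suppose a non-adjacent pair $\{a,b\}$ is crossed by $xw$ but not by $xy$. Then one of them, say $a$, lies after $w$ and no later than $y$ on this arc, and $b$ lies strictly between $w$ and $x$ on the opposite arc.
\item Consequently $x$ and $c$ lie on one of the two boundary arcs left after deleting $a$ and $b$, and $w$ lies on the other. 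The chord $wc$ of $G$ reconnects them.
\end{itemize}
Your ``first degree-2 vertex'' idea is exactly what secures planarity in this direction. By the counting of Step~1, neither $w$ (apart from $xw$) nor the degree-$\ge 3$ vertices passed over is an endpoint of an edge of $A_{opt}-xw$, so $xy$ crosses no other augmentation edge. The paper's proof takes an arbitrary degree-2 vertex in that arc and does not spell this point out.
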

\input{Sections/proofs/tritotrideg}

It is important to observe what happens when the graph is unbalanced. The following lemma and corollary discuss this.

\begin{restatable}{lemma}{outerktwo}\label{outerk2}
    If an outerplane biconnected graph $G$ can be triconnected using $\pendant{k}$ outer edges (and no internal edges), then $G$ is a balanced graph.
\end{restatable}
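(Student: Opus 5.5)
The plan is to argue by contradiction. Suppose $G$ is unbalanced, so some face $F$ is massive, i.e.\ $d(F)\ge \pendant{k}+2$. We show that then every triconnecting augmentation consisting only of outer edges has size at least $d(F)-1\ge\pendant{k}+1$, which contradicts the hypothesis.

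First, by \cref{tritotrideg2} we may assume the optimal set $A$ of $\pendant{k}$ outer edges joins only degree-2 vertices of $G$. Each such edge then corresponds to an edge between two leaves of $T(G)$. Since outer edges are pairwise non-crossing in the outer face, and the leaves of $T(G)$ inherit the cyclic order of the degree-2 vertices, the corresponding leaf-to-leaf edges can be drawn without crossings in the embedding of $T(G)$.

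The key intermediate claim is the converse of the correspondence used in \cref{propbalancedopt}: if $G+A$ is triconnected, then $T(G)+A_T$ is biconnected, where $A_T$ is the corresponding leaf edge set. To prove it, take a node $v_{F'}$ of $T(G)$ (pendant nodes have degree one and are never cut vertices). Removing $v_{F'}$ splits $T(G)$ into $d(F')$ components. Each component corresponds to one boundary segment of $F'$: either a chord of $F'$ together with everything beyond it, or a single degree-2 vertex of $F'$. For a chord $xy$ of $F'$, the pair $\{x,y\}$ separates the region beyond the chord from the rest of $G$. For a degree-2 vertex $w$ on $F'$, its two neighbours separate $w$ from the rest. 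More generally, for any two vertices $a,b$ on $F'$, the pair $\{a,b\}$ splits $G$ into two arcs of the outer cycle, and the only links between these arcs are chords that lie inside $F'$'s side, which is impossible, or augmentation edges.

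Triconnectivity of $G+A$ therefore forces, for every partition of the cyclic sequence of components around $v_{F'}$ into two contiguous nonempty arcs, some edge of $A_T$ joining the two arcs. This is exactly the statement that $v_{F'}$ is not a cut vertex of $T(G)+A_T$, and it holds in every embedding-compatible way we could route the edges.

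Next, apply this at $F'=F$. Contract each of the $d=d(F)$ components around $v_F$ to a single point. These points lie in cyclic order around $v_F$, and the edges of $A_T$ between different components become non-crossing chords of this $d$-cycle of points. The requirement that every split into two contiguous arcs is crossed by some edge means that the star $K_{1,d}$ plus these chords is biconnected. Since the chords are non-crossing, we can then show that at least $d-1$ edges are needed. I would prove this by induction on $d$: a non-crossing chord $uv$ cuts the cyclic order into two arcs, and merging the components joined by $uv$ lowers the degree of the central $c$-node of the $bc$-tree by at most one per edge. Once the central $c$-node has degree $1$ it is no longer a cut vertex, which happens only after at least $d-1$ edges.

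Putting this together gives $|A|\ge d-1\ge \pendant{k}+1$, contradicting $|A|=\pendant{k}$; hence $G$ is balanced. I expect the main obstacle to be this lower bound of $d-1$ for non-crossing edges. One must make sure that a single edge cannot "cover" two separation directions at once. It is exactly planarity (non-crossing) that prevents this: without planarity, $\pendant{d}$ edges would suffice. I would formalize the bound through the $bc$-tree degree of the $c$-node $v_F$, as in the argument of \cref{lem_balancedtreemin}, noting that adding one planar edge between leaves in two consecutive groups merges exactly those two groups.
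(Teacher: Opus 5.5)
Your proposal is correct and follows the paper's proof essentially step for step: you reduce to degree-2 endpoints via \cref{tritotrideg2}, show that $T(G)+A_T$ is biconnected by turning 2-cuts of $G$ on a face into crossing requirements and using that outer edges do not cross (the paper phrases this as a contradiction via non-interleaving components around a cut vertex $v_F$), and then contradict $d(F)-1>\lceil k/2\rceil$. One small correction: planarity is needed only to pass from ``every contiguous split around $v_F$ is crossed'' to connectivity of $T(G)+A_T-v_F$, after which $|A_T|\ge d(F)-1$ is immediate because each added edge merges at most two components; so your remark that $\lceil d/2\rceil$ edges would suffice without planarity applies to the split condition, not to biconnectivity.
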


\input{Sections/proofs/outerktwo}

From \Cref{degreegoverns}, unbalanced outerplane graphs requires at least 1 internal edge in any triconnectivity augmentation using only $\pendant{k}$ edges. In the upcoming section, we introduce how a minimum edge addition can make an unbalanced graph into a balanced one.
\begin{corollary}\label{degreegoverns}
    If $G$ is an unbalanced outerplane graph, then $>\pendant{k}$ outer edges are required to make $G$ 3-connected.
\end{corollary}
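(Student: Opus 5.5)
The corollary should follow from \cref{outerk2} together with the lower bound observation at the start of \cref{subsec:prelimstreegraph}. We argue by contradiction. Suppose $G$ is unbalanced and can be made 3-connected using at most $\pendant{k}$ outer edges. By the lower-bound observation, every 3-connectivity augmentation of $G$ uses at least $\pendant{k}$ edges, since each of the $k$ degree-2 vertices must receive a new edge and each new edge covers at most two of them. So such a set has exactly $\pendant{k}$ edges, and all of them are outer edges, with no internal edges.

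This is precisely the hypothesis of \cref{outerk2}, which then tells us that $G$ is balanced. That contradicts the assumption that $G$ is unbalanced. Hence every triconnecting set consisting solely of outer edges has strictly more than $\pendant{k}$ edges, which is the claim.

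The one point that needs care is the reduction from ``at most $\pendant{k}$ edges'' to ``exactly $\pendant{k}$ edges,'' so that \cref{outerk2} applies verbatim. This is exactly what the lower-bound observation provides: no augmentation has fewer than $\pendant{k}$ edges, so ``at most'' forces equality. Beyond this, the corollary inherits all its substance from \cref{outerk2}. That lemma's proof presumably goes via \cref{tritotrideg2}, moving the edges onto degree-2 vertices so they correspond to leaf-to-leaf edges in $T(G)$, and then uses the fact that a massive node in $bc(T(G))$ cannot be resolved by $\pendant{k}$ leaf edges. We take it as given here, so no further obstacle arises.
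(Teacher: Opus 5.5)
Your argument is correct and is essentially the paper's: the corollary is stated without a separate proof, as the contrapositive of \cref{outerk2}. Using the lower-bound observation to turn ``at most $\pendant{k}$'' into ``exactly $\pendant{k}$'', so that \cref{outerk2} applies verbatim, is a step the paper leaves implicit, and you handle it soundly.
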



\subsubsection*{Structural Preliminaries and Auxiliary Results}

We introduce terminology and definitions that shall be used throughout the remainder of the section. For vertices $u,v$ on the boundary of $F$, let $P_{uv}$ denote the clockwise
boundary path from $u$ to $v$ (and $P_{vu}$ the complementary path).
Inserting the edge $uv$ splits $F$ into faces $F_1$ and $F_2$ bounded by
$P_{uv}\cup uv$ and $P_{vu}\cup uv$, respectively. Define
$t(u,v) := d(F_1)-1$ and $t(v,u) := d(F_2)-1$.

Two degree-2 vertices $u$ and $v$ on the boundary of a face $F$ are said to be \emph{consecutive} along $P_{uv}$ if all the interior vertices in $P_{uv}$ have degree more than 2.
We move to important structural insights.

\begin{restatable}{lemma}{internaledgeoneface}\label{internaledgeoneface}
There exists an optimal solution for OTA-Fix where all the internal edges are added inside the maximum-degree face.
\end{restatable}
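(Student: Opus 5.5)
Let $A$ be an optimal solution and let $F^*$ be a maximum-degree face of $G$ (a vertex $v_{F^*}$ of maximum degree in $T(G)$). I will use an exchange argument: among all optimal solutions, take one that minimises the number of internal edges lying outside $F^*$, and show this number is zero. By the results already proved, $|A| \in \{\pendant{k},\pendant{k}+1\}$: \cref{theorem_almostopt} gives the upper bound and the lower bound is the observation on degree-2 vertices. I will treat the two cases separately.

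\textbf{Case $|A|=\pendant{k}+1$.} Here the statement follows directly from \textsc{TriConn}. If $G$ is balanced, \cref{lem_balancedopt} gives a solution of size $\pendant{k}$ with no internal edges, so this case cannot occur for balanced $G$. If $G$ is unbalanced, \textsc{TriConn} adds one internal edge inside the maximum-degree face and then only outer edges from \textsc{OuterPlanarConnect}, for a total of at most $\pendant{k}+1$ edges. That solution is therefore also optimal and has the required form.

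\textbf{Case $|A|=\pendant{k}$.} If $G$ is balanced, \cref{lem_balancedopt} again gives an optimal solution with only outer edges, and the claim holds vacuously. So assume $G$ is unbalanced. By \cref{degreegoverns}, $A$ contains at least one internal edge. Since $G$ is unbalanced, $F^*$ is massive, and by \cref{lem:bigfacesmallface} every other face $F'$ satisfies $d(F') \le \pendant{k}$.

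Now suppose $e=xy \in A$ is internal and lies in a face $F' \neq F^*$.
\begin{enumerate}
\item \textbf{Counting.} With $|A|=\pendant{k}$, every degree-2 vertex must be covered, and at most one vertex can be covered twice (when $k$ is odd). So $e$ is "wasted" unless it covers degree-2 vertices of $F'$. Its contribution to the count is at most the two degree-2 endpoints it covers.
\item \textbf{Replacement.} I would remove $e$ and try to reroute it: either as an outer edge between the same endpoints (if both have degree 2, then $x,y$ both lie on the outer face), or merged into a chord inside $F^*$. The aim is to keep triconnectivity while not increasing the solution size.
\item \textbf{Checking triconnectivity.} To verify that the new solution is still triconnected, I would use the separation-pair analysis from the proof of \cref{propbalancedopt}. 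A separation pair $\{v_i,v_j\}$ is destroyed either by a chord of $G$ or by an augmentation edge crossing between the two boundary arcs.
\end{enumerate}

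The key observation is that after $F^*$ is split appropriately, $F'$ has degree at most $\pendant{k}$, so the remainder is balanced with respect to the reduced pendant count. Then \cref{propbalancedopt}, applied to $G$ plus the internal edges in $F^*$, supplies outer edges covering the remaining degree-2 vertices with the right count. Concretely, I would show the following: if $A$ is optimal, let $A_{F^*}$ be its internal edges inside $F^*$. Then $G+A_{F^*}$, possibly plus one more internal edge in $F^*$ replacing the internal edges elsewhere, is balanced. Its degree-2 count $k'$ satisfies $\pendant{k'} + |A_{F^*}| \le |A|$, and \textsc{OuterPlanarConnect} finishes the construction.

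\textbf{Main obstacle.} The hardest step is showing that internal edges outside $F^*$ are useless for balancing. A chord inside $F'$ splits $v_{F'}$ but does not reduce $d(F^*)$ by more than it reduces $k$. So if $G+A_{\text{int}}$ is triconnectable with $|A|-|A_{\text{int}}|$ outer edges, \cref{outerk2} forces $G+A_{\text{int}}$ to be balanced. I must argue that massiveness of $F^*$, meaning $d(F^*)-1 > \pendant{k'}$, survives any chords not placed in $F^*$. Each such chord lowers $k'$ by at most $2$, which lowers $\pendant{k'}$ by at most $1$ while costing one edge; the degree $d(F^*)$ is unchanged. I would carry out this inequality bookkeeping carefully, handling parity separately: the odd-$k$ slack allows one vertex to be covered twice.
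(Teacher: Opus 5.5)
There is a genuine gap in the case $|A|=\pendant{k}$. Your handling of balanced $G$ and of the case $|A|=\pendant{k}+1$ matches the paper. Your survival bookkeeping is also correct: chords outside $F^*$ leave $d(F^*)$ unchanged and only lower the pendant count, so by \cref{degreegoverns} they cannot be the only internal edges of $A$. But this only proves $A_{F^*}\neq\emptyset$.

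The step that carries the lemma is the claim that $G+A_{F^*}$, ``possibly plus one more internal edge in $F^*$'', is balanced. You assert it rather than prove it, and as stated it is false. Your justification, $d(F')\le\pendant{k}$ for the other faces, only gives $d(F')-1\le\pendant{k-2}$. That suffices when the edges in $F^*$ remove at most two degree-2 vertices. After $j\ge 2$ such edges the pendant count is $k-2j$, and an old face can become massive.

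For example, let $G$ be the cycle $x_1,\dots,x_6,s,a,b,c,d,t$ with the single chord $st$, so $k=10$, $d(F^*)=7$ and $d(F')=5$. The set $\{x_1x_3,x_4x_6,ac,x_2d,x_5b\}$ is a non-crossing triconnecting set of size $5$, hence optimal. Yet $G+\{x_1x_3,x_4x_6\}$ has $6$ degree-2 vertices and $d(F')-1=4>3$, so it is unbalanced. Further edges inside $F^*$ never change $d(F')$, so enlarging $A_{F^*}$ cannot help. What works is to \emph{drop} an edge: $G+x_1x_3$ is balanced.

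The missing idea is the paper's case split on the massive face of $G+A_{F^*}$ when that graph is unbalanced. Take $k$ even, so every edge of $A$ joins two degree-2 vertices, and let $k''$ be the number of degree-2 vertices of $G+A_{\mathrm{int}}$.

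\textbf{Massive face is a sub-face of $F^*$.} The internal edges of $A$ outside $F^*$ cannot split it. So $G+A_{\mathrm{int}}$ is still unbalanced, while the outer edges of $A$ number exactly $\pendant{k''}$. This contradicts \cref{degreegoverns}.

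\textbf{Massive face is an old face of $G$ other than $F^*$.} Order $A_{F^*}=\{e_1,\dots,e_m\}$, and let $\ell$ be the first index at which some old face becomes a maximum-degree face. Then $G+\{e_1,\dots,e_{\ell-1}\}$ is either already balanced (finish there) or has its unique massive face inside $F^*$. In the latter case $e_\ell$ must split that face, since otherwise it would remain the unique maximum. So \cref{lem_newmaxface} makes $G+\{e_1,\dots,e_\ell\}$ balanced, and \textsc{OuterPlanarConnect} adds the remaining $\pendant{k}-\ell$ edges.

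For odd $k$ your parity bookkeeping is unnecessary. The survival argument forces a degree-2 vertex on $F^*$, and then \textsc{TriConn} already outputs $\pendant{k}$ edges with its single internal edge in $F^*$.

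Finally, the exchange/rerouting step is never carried out, and it would need justification, since a rerouted outer edge $xy$ may cross outer edges of $A$. Because your final plan discards the outer part of $A$ and recomputes it anyway, you can drop that step.
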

\input{Sections/proofs/internaledgeoneface}

We can now show that for the case where the input graph $G$ has an odd number of degree-2 vertices, the resulting augmentation is already optimal.

\begin{restatable}{lemma}{oddoptimal}\label{lem_oddoptimal}  
	If $k$ is odd, then \textsc{TriConn} returns an optimal solution to OTA-Fix.
\end{restatable}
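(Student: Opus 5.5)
If $G$ is balanced, the claim is immediate: \cref{lem_balancedopt} shows that \textsc{TriConn} (through \textsc{OuterPlanarConnect}) uses exactly $\pendant{k}$ edges, and this matches the lower bound from the Observation. So I will assume $G$ is unbalanced.

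By \cref{theorem_almostopt}, \textsc{TriConn} then returns at most $\pendant{k}+1$ edges. I will sharpen this count using the parity of $k$. In the unbalanced branch, the algorithm picks $u$ to be a degree-2 vertex of the maximum-degree face $F$ whenever one exists. A massive face has $d(F) \ge \pendant{k}+2 \ge 3$, and $T(G)$ is built by attaching one pendant per degree-2 vertex, so I expect $F$ to contain degree-2 vertices and $u$ to be one of them. Let $a\ge 1$ be the number of degree-2 endpoints of $e$.

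The balanced graph $G+e$ has $k-a$ degree-2 vertices, so \textsc{OuterPlanarConnect} adds $\lceil (k-a)/2\rceil$ edges. The total is therefore $\lceil (k-a)/2\rceil+1$. For odd $k$ and $a=1$ this is $(k-1)/2+1=\pendant{k}$. For $a=2$ it is $\lceil (k-2)/2\rceil + 1=\pendant{k}$ as well. Either way the output has size $\pendant{k}$, which meets the lower bound $\pendant{k}$, so it is optimal.

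The main obstacle is justifying that $a\ge 1$, i.e.\ that the balancing edge can be chosen with a degree-2 endpoint while still satisfying $|d(F_1)-d(F_2)|\le 1$ and balancing $G+e$. I would argue this from the massiveness of $F$ and \cref{lem:bigfacesmallface}, following the proof of \cref{theorem_almostopt}. The pendants of $v_F$ in $T(G)$ correspond to degree-2 vertices on $F$. If $F$ had none, all $d(F)$ neighbours of $v_F$ would be subtrees, each containing at least two pendants. That gives $k \ge 2d(F) \ge 2\pendant{k}+4$, a contradiction, so a degree-2 vertex exists on $F$.

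Fix $u$ to be this vertex. Sweeping $v$ around the boundary of $F$ changes $t(u,v)$ by at most one per step. So some $v$ attains a split with $|d(F_1)-d(F_2)|\le1$, and the argument of \cref{theorem_almostopt} with $a\ge1$ shows that $G+e$ is balanced.
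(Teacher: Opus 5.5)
There is a genuine gap. Your argument rests on the claim that a massive face $F$ always contains a degree-2 vertex, and that claim is false.

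The counting behind it breaks at ``each containing at least two pendants''. A neighbour $v_{F'}$ of $v_F$ in $T(G)$ can root a subtree with only one pendant. This happens when $F'$ is a triangle $xwy$ glued to $F$ along the chord $xy$: then $w$ has degree 2 and is the only pendant behind $v_{F'}$.

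Here is a concrete instance. Take a pentagon $F=p_1p_2\cdots p_5$ and glue a triangle $p_iw_ip_{i+1}$ onto each of its sides.
\begin{itemize}
\item The degree-2 vertices are $w_1,\dots,w_5$, so $k=5$.
\item $d(F)=5$, so $d(F)-1=4>3=\pendant{k}$ and $F$ is massive.
\item Every vertex of $F$ has degree $4$, so $F$ has no degree-2 vertex.
\end{itemize}
On this input \textsc{TriConn} must pick $u$ of degree $>2$ and inserts an edge such as $p_1p_3$, so $a=0$. It returns $1+\pendant{5}=4=\pendant{k}+1$ edges. So your conclusion that ``the output has size $\pendant{k}$'' is wrong for this graph.

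The lemma still holds here, but only because $\mathrm{OPT}=\pendant{k}+1$. Your proposal contains no lower-bound argument that could show this. The paper supplies it in a separate case for when $F$ has no degree-2 vertex:
\begin{itemize}
\item By \cref{internaledgeoneface}, the internal edges of an optimal solution may be assumed to lie in $F$.
\item Any such edge has no degree-2 endpoint, so it is wasted. All $k$ degree-2 vertices still need $\pendant{k}$ further edges, giving at least $\pendant{k}+1$ in total.
\item If the optimal solution uses no internal edges, $G$ stays unbalanced, and \cref{degreegoverns} forces more than $\pendant{k}$ outer edges.
\end{itemize}
You can avoid \cref{internaledgeoneface} if you prefer. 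Internal edges placed in faces other than $F$ leave $F$ intact, and they reduce the number of degree-2 vertices by at most two each. So $F$ remains massive after inserting them, and \cref{degreegoverns} applied to the augmented graph gives the same bound.

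The part of your proposal for the case where $F$ does contain a degree-2 vertex matches the paper's first case and is fine. There $a\in\{1,2\}$, and $\lceil (k-a)/2\rceil+1=\pendant{k}$ for odd $k$.
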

\input{Sections/proofs/oddoptimal}

\subsubsection*{Case when the optimum is $\pendant{k}+1$}

Going forward, we can assume that the input graph $G$ has even number of degree-2 vertices, since otherwise~\cref{lem_oddoptimal} is applicable. Below, we discuss a structural case when at least $\pendant{k}+1$ edges are required in an optimal solution.

\begin{restatable}{proposition}{bigarc}\label{prop_bigarc}
Let $G$ be an instance of \textsc{OTA-Fix} and let $F$ be a face containing two consecutive degree-2 vertices $v_1$ and $v_2$ (i.e., no other degree-2 vertex lies on the boundary of $F$ between them). If $t(v_1,v_2) \ge \pendant{k}$ and $k$ is even, then at least $\pendant{k}+1$ edges are required to triconnect $G$.
\end{restatable}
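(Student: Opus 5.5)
We argue by contradiction: suppose $A$ is an augmentation with $|A| = \pendant{k} = k/2$ edges such that $G+A$ is triconnected. Since $k$ is even and each of the $k$ degree-2 vertices of $G$ needs at least one new incident edge in a 3-connected graph, a counting argument forces $A$ to be a perfect matching on the degree-2 vertices. Every edge of $A$ has both endpoints of degree 2 in $G$, and each degree-2 vertex is incident to exactly one edge of $A$. In particular no edge of $A$ touches a vertex of degree at least 3.

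The plan is to focus on the arc $P_{v_1v_2}$ of $F$ and the ``pocket'' it bounds. Consider the side of $F$ cut off by the (virtual) chord $v_1v_2$, namely the part of $G$ corresponding to $F_1$ in the definition of $t(v_1,v_2)$. Removing $v_1$ and $v_2$ from $G$ separates the interior vertices of $P_{v_1v_2}$, together with everything hanging off that arc, from the rest of $G$, because $v_1,v_2$ lie on the common face $F$. Call the pocket side $X$ and the other side $Y$. The count $t(v_1,v_2)=d(F_1)-1$ is the number of subtrees of $T(G)$ hanging off $F$ along this arc. Since $v_1,v_2$ are consecutive, no such subtree is a leaf directly at $F$, so every interior vertex of $P_{v_1v_2}$ has degree at least 3, and each of the $t(v_1,v_2)$ branches is a genuine nontrivial subtree containing at least one pendant. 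Hence $X$ contains at least $t(v_1,v_2)\ge k/2$ degree-2 vertices. I would sharpen this: a branch that is a single face hanging off one edge of the arc contributes at least one degree-2 vertex, and I would check that the total is at least $k/2$, using that pendants on the arc itself are absent by consecutiveness.

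Next, in $G+A$ minus $\{v_1,v_2\}$ the sides $X$ and $Y$ must be joined by an edge of $A$. I would show that planarity with the fixed embedding forbids such edges except through restricted routes. An inner edge joining $X$ to $Y$ must lie in a face containing vertices of both sides, and the only such face is $F$. Inside $F$ it must go from an interior vertex of $P_{v_1v_2}$, which has degree at least 3 and so is not allowed, to the other side. Hence no inner edge helps. An outer edge from $X$ to $Y$ lies in the outer face, and the outer cycle passes through $v_1$ and $v_2$, so $X$ occupies a contiguous arc of the outer cycle between them. The crux is then a parity/capacity argument. The degree-2 vertices of $X$, at least $k/2$ of them, together with $v_1$ and $v_2$, must be matched. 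Matching them internally within the contiguous outer arc, while also keeping each sub-pocket 3-connected, requires at least one cross edge to $Y$. Meanwhile $Y$ holds at most $k-2-k/2 = k/2-2$ degree-2 vertices besides $v_1,v_2$.

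The main obstacle is making this capacity argument rigorous. I would try to show that the pocket $X\cup\{v_1,v_2\}$, viewed through $T(G)$, behaves like a massive node, because the branch count at $F$ restricted to the arc is at least $k/2$. Then I would argue, as in \cref{outerk2} and \cref{impolemma1}, that a perfect matching of $k/2$ non-crossing outer edges cannot have every branch of the arc matched outside its own branch and outside the pocket. There are at least $k/2$ branches, but only $k/2 - 1$ edges remain available after accounting for the matching partners of $v_1$ and $v_2$, which collide with the non-crossing constraint. This yields either a 2-cut among $\{v_1,v_2\}$ or a cut at a vertex on the arc, contradicting triconnectivity and giving the bound $\pendant{k}+1$.
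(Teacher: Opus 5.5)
Your opening steps are correct and match the key claim in the paper's proof. With $|A|=k/2$ and $k$ even, $A$ is a perfect matching on the degree-2 vertices. So no augmentation edge ends at an interior vertex of $P_{v_1v_2}$, and no inner edge can join the pocket $X$ to $Y$. The gap is the step you yourself call the crux: the ``capacity'' claim you sketch is not merely unproved, it is false.

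Here is a counterexample. For $t\ge 3$, glue a triangle onto every edge of a polygon $F$ that is not incident to $v_1$ or $v_2$. Let the branches along $P_{v_1v_2}$ be $B_1,\dots,B_t$ with $t=k/2$, each containing exactly one degree-2 vertex $w_j$. Then $Y$ contains exactly $t-2$ degree-2 vertices $y_1,\dots,y_{t-2}$, listed from $v_2$ towards $v_1$. The outer edges $v_1w_1$, $v_2w_t$ and $y_jw_{t-j}$ ($1\le j\le t-2$) form a non-crossing perfect matching of size $k/2$. In it every branch is matched outside its branch and outside the pocket, and $X$ is joined to $Y$ in $G+A-\{v_1,v_2\}$.

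Your counts ($\ge k/2$ degree-2 vertices in $X$, $\le k/2-2$ in $Y$) are met with equality here. Your ``$k/2-1$ remaining edges'' count silently assumed that the partners of $v_1$ and $v_2$ lie outside the pocket. This augmentation is still not 3-connected, but for a different reason. Let $p$ be the other neighbour of $v_1$, and let $a_1b_1$ be the chord of $F$ carrying $B_1$, with $a_1$ adjacent to $v_1$. Then $\{p,b_1\}$ separates $v_1$, $a_1$ and the interior of $B_1$ from the rest.

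The missing idea is that the obstruction is a connectivity count around the face, not a matching capacity. This count must include $v_1$ and $v_2$ themselves as pendants of $v_F$ in $T(G)$. So $v_F$ has $d(F)\ge t(v_1,v_2)+2\ge k/2+2$ neighbours. Non-crossing outer edges must reconnect all components of $T(G)-v_F$, which costs at least $d(F)-1\ge k/2+1$ edges. This is exactly what \cref{degreegoverns} (via \cref{outerk2}) provides.

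The paper combines it with your observation as follows. By \cref{degreegoverns}, a solution of size $k/2$ must contain some $l\ge 1$ inner edges. None of them ends inside $P_{v_1v_2}$, so after inserting them the face $F'$ containing $P_{v_1v_2}$ still has $d(F')\ge t(v_1,v_2)+1\ge k/2+1$. Meanwhile only $k-2l\le k-2$ degree-2 vertices remain. Hence $F'$ is still massive, and \cref{degreegoverns} forces at least $k/2-l+1$ outer edges, for a total of at least $k/2+1$.
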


\input{Sections/proofs/bigarc}

\begin{algorithm}[t]
\caption{\textsc{TriConnExact}}
\label{triconnexact}

\KwIn{Outerplane 2-connected graph $G$}
\KwOut{List of edges $E'$ such that $G(V, E \cup E')$ is planar and triconnected}

Compute $T(G)$, $k \gets \text{no. of leaves}$, and $d \gets \max_{v \in T(G)} \deg(v)$\;

\If{$T(G)$ is balanced or $k$ is odd}{
    \Return{\textsc{TriConn}($G$)} \tcp*[r]{\Cref{lem_balancedopt} and \Cref{lem_oddoptimal}}
}

Let $v_F$ be the vertex corresponding to face $F$ with maximum-degree $d$ in $G$\;

\If{$F$ has $\le 4$ degree $2$ vertices on it}{
    Brute-force internal (vertex-disjoint) edge addition in $F$ with a budget of at most $2$ non-crossing edges between non-adjacent degree $2$ vertices\;
    \If{there exists a set of edges $M$ such that $G + M$ is balanced}{
        \Return{$M \cup \textsc{OuterPlanarConnect}(G + M)$}\;
    }
    \Return{\textsc{TriConn}($G$)}\tcp*[r]{\Cref{lem:smallfacecase}}
}

Let $v_1, \hdots, v_r$ (with $v_{i \bmod r}$) be the degree $2$ vertices in $F$ in consecutive cyclic order\;

\For{$i \in [r]$}{
    \If{$t(v_i, v_{i+1}) \ge \pendant{k}$}{
        \Return{\textsc{TriConn}($G$)} \tcp*[r]{\Cref{prop_bigarc}}
    }
    \ElseIf{$t(v_i, v_{i+1}) = \pendant{k} - 1$}{
        \Return{$\{v_i v_{i+1}\} \cup \textsc{TriConn}(G + v_i v_{i+1})$}\tcp*[r]{\Cref{observation:balancingedge}}
    }
}

Let $u$ be a degree $2$ vertex of $F$\;

\eIf{there exists a degree $2$ vertex $v$ in $F$ such that $|t(u,v) - t(v,u)| \le 1$}{
    \Return{$\{uv\} \cup \textsc{OuterPlanarConnect}(G + uv)$}\;
}{
    \Return{\textsc{FiveVertices}($G,u$)}\;
}

\end{algorithm}

\subsubsection*{Easy Optimal cases}
In some configurations, it is possible to save case analysis. We shall now focus on the case where it is easy to compute the optimal solution.

\begin{restatable}[Small-face case]{lemma}{smallfacecase}\label{lem:smallfacecase}
Let $F$ be a maximum-degree face containing at most four degree-2 vertices.
Then \textsc{TriConnExact} returns an optimal augmentation for OTA-Fix.
\end{restatable}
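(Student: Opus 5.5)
Since the statement is reached only after the tests in \textsc{TriConnExact} for ``balanced or $k$ odd'', we may assume that $G$ is unbalanced and that $k$ is even. We may also assume that the maximum-degree face $F$ is the unique massive face (\cref{lem:bigfacesmallface}) and has at most four degree-2 vertices on its boundary. By \cref{theorem_almostopt}, \textsc{TriConn} always returns at most $\pendant{k}+1$ edges. So the whole task is to show one thing: if some augmentation of size $\pendant{k}$ exists, then the brute-force step finds a set $M$ that, together with \textsc{OuterPlanarConnect}$(G+M)$, also has size exactly $\pendant{k}$. If no such augmentation exists, the fallback to \textsc{TriConn} is optimal by \cref{theorem_almostopt}.

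Suppose then that $A$ is an optimal solution with $|A|=\pendant{k}$. By \cref{internaledgeoneface}, we may assume that every internal edge of $A$ lies in $F$, and by \cref{degreegoverns}, $A$ has at least one internal edge. Let $M_0\subseteq A$ be the set of internal edges and $A_0=A\setminus M_0$ the set of outer edges. Then $G+M_0$ is biconnected and outerplane, and $A_0$ triconnects it using only outer edges. The counting step comes next. Write $k'$ for the number of degree-2 vertices of $G+M_0$. Each internal edge kills at most two degree-2 vertices, so $k'\ge k-2|M_0|$, and $|A_0|\ge\pendant{k'}$. Since $|A_0|=\pendant{k}-|M_0|$ and $k$ is even, equality must hold throughout. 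Hence the edges of $M_0$ are vertex-disjoint, each joins two degree-2 vertices of $G$ lying on $F$, and $|A_0|=\pendant{k'}$. Because $F$ has at most four degree-2 vertices, $|M_0|\le 2$, and the edges in $M_0$ are non-crossing and join non-adjacent vertices (otherwise they would be parallel edges, which are useless). So $M_0$ is exactly one of the candidate sets that the brute-force step enumerates.

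Next, \cref{outerk2} applied to $G+M_0$ (triconnected by $\pendant{k'}$ outer edges) shows that $G+M_0$ is balanced. So the brute force does find some set $M$ with $G+M$ balanced. The main obstacle is that the algorithm may return a different $M$, not $M_0$, and we must check that its output still has size $\pendant{k}$. To handle this, we will require, or argue, that the enumeration of $M$ runs over vertex-disjoint edges between degree-2 vertices. Each such edge then lowers the count of degree-2 vertices by exactly two (endpoints of degree 2 become degree 3). So $G+M$ has $k-2|M|$ degree-2 vertices, and by \cref{lem_balancedopt}, \textsc{OuterPlanarConnect}$(G+M)$ returns $\pendant{k-2|M|}=\pendant{k}-|M|$ edges. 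The total is therefore $\pendant{k}$, which is optimal by the lower bound observation.

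A small detail needs care: \textsc{OuterPlanarConnect} should not be applied to $K_3$-like degenerate instances, which are excluded by \cref{lem_onlyleaves}. If $G+M$ has no degree-2 vertices, it is already triconnected, so the added set is just $M$. The same argument covers $|M|\in\{1,2\}$ uniformly, which completes the case analysis.
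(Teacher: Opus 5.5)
Your proof is correct and follows the same route as the paper. Use \cref{internaledgeoneface} to put the internal edges of a $\pendant{k}$-size optimum inside $F$, note that they are among the brute-force candidates, and otherwise fall back on \textsc{TriConn}, which is optimal by \cref{theorem_almostopt}. You are in fact more careful than the paper on two points it glosses over: you justify via \cref{outerk2} that the optimum's internal set $M_0$ balances $G$, and you check that any balancing $M$ the enumeration returns (not necessarily $M_0$) still yields exactly $\pendant{k}$ edges in total.
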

\begin{proof}
\Cref{triconnexact} enumerates all possibilities of inserting one or two internal edges
between degree-2 vertices in $F$.

Suppose the optimal solution has size $\pendant{k}$. Since $k$ is even,
any optimal solution connects only degree-2 vertices. By
\Cref{internaledgeoneface}, there exists an optimal solution whose internal
edges all lie inside $F$. Hence, the brute-force step exhausts all possible
internal-edge configurations of an optimal $\pendant{k}$-edge solution,
and the algorithm finds such a solution.

Otherwise, every set of internal edges leaves the graph unbalanced. By
\Cref{degreegoverns}, any triconnectivity augmentation then requires at
least $\pendant{k}+1$ edges. In this case, \textsc{TriConn} produces an optimal
solution of size $\pendant{k}+1$ (\Cref{theorem_almostopt}).

Thus, in both cases the algorithm outputs an optimal solution.
\end{proof}

The remaining easy case where $t(v_i,v_{i+1}) = \pendant{k} - 1$ for two degree-2 vertices $v_i$ and $v_{i+1}$ occurs as a continuation of analysis from \Cref{prop_bigarc}. Here, one internal edge suffices to balance the graph.

\begin{observation}\label{observation:balancingedge}
    If $t(v_i,v_{i+1}) = \pendant{k} - 1$ for two degree-2 vertices $v_i$ and $v_{i+1}$, then $G+v_iv_{i+1}$ is balanced and \tc\,gives an optimal solution. 
\end{observation}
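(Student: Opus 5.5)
Inserting $e=v_iv_{i+1}$ inside $F$ splits $F$ into two faces. My plan is to show that neither new face, and no old face, is massive in $G+e$. Then $G+e$ is balanced, and we can invoke \cref{lem_balancedopt} together with the lower bound argument.

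\textbf{Setting up the count.} Since $v_i,v_{i+1}$ are degree-2 vertices of $G$, adding $e$ raises both to degree 3. So $G+e$ has $k-2$ degree-2 vertices, with $k$ even (the paper's standing assumption), and $\pendant{k-2}=\pendant{k}-1$. Let $F_1$ be the face bounded by $P_{v_iv_{i+1}}\cup e$ and $F_2$ the other one. By hypothesis $d_{G+e}(F_1)-1=t(v_i,v_{i+1})=\pendant{k}-1=\pendant{k-2}$, so $F_1$ is not massive in $G+e$.

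\textbf{The face $F_2$.} $F_2$ carries every pendant of $F$ except those on the consecutive arc. Counting in $T(G+e)$: removing $v_{F_2}$, each neighbour subtree other than the one through $v_{F_1}$ contains at least one pendant. The subtree through $v_{F_1}$ contains $t(v_i,v_{i+1})=\pendant{k}-1$ pendants (here I use the definition of $t$ as degree minus the edge towards the other side). If $d(F_2)-1>\pendant{k-2}$, then the total number of pendants is at least $(\pendant{k}-1)+(\pendant{k}-1)+1=k-1>k-2$, a contradiction. This is the same counting pattern as in \cref{lem:bigfacesmallface}.

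\textbf{Old faces.} Every other face $F'$ keeps its degree in $G+e$, so I bound it directly in $T(G+e)$ by the same subtree-counting argument. Its subtree through $F$ contains at least the $\pendant{k}-1$ pendants of the arc on $F_1$, which forces $d(F')-1\le\pendant{k-2}$. Alternatively, if $G$ was unbalanced with $F$ massive, \cref{lem_newmaxface} applies directly once $F_1$ and $F_2$ are shown small.

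\textbf{Conclusion.} With $G+e$ balanced, \textsc{TriConn} calls \textsc{OuterPlanarConnect} and uses $\pendant{k-2}$ further edges by \cref{lem_balancedopt}. The total is $\pendant{k}$, which matches the lower bound, so the solution is optimal.

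\textbf{Main obstacle.} The delicate part is the precise pendant accounting for $F_2$ and for old faces. One must handle the degree shift of $v_{F_1}$ and $v_{F_2}$ from the new dual edge carefully, so that off-by-one errors do not break the inequality.
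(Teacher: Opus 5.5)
Your proof is correct and follows the reasoning the paper intends; the paper states this as an observation without a separate proof. The steps are the same: $F_1$ sits exactly at the threshold because of the value of $t(v_i,v_{i+1})$, $F_2$ is small by a pendant count, old faces are small because $F$ is massive, and \textsc{OuterPlanarConnect} then adds $\lceil (k-2)/2\rceil$ edges, for a total of $\lceil k/2\rceil$.

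One sub-argument is wrong as written, so you should lean on your alternative. In the direct count for an old face $F'$, you claim that the branch of $T(G+e)-v_{F'}$ towards $v_{F_1}$ contains all $\lceil k/2\rceil-1$ pendants hanging off the arc $P_{v_iv_{i+1}}$. This fails when $F'$ itself lies behind one of the chords on that arc. The pendants of that chord subtree may then all sit in other branches at $v_{F'}$, for instance when $F'$ is the face directly behind the chord. The branch towards $v_{F_1}$ is then only guaranteed the pendants of the remaining $\lceil k/2\rceil-2$ chord subtrees of the arc.

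The count can be repaired by also charging that branch for a pendant on the $F_2$ side, and one always exists. Since $v_iv_{i+1}\notin E(G)$, the neighbour $w$ of $v_{i+1}$ on $P_{v_{i+1}v_i}$ is an interior vertex of that arc. Either $w$ has degree $2$, or its other edge on $F$ is a chord. In the branch of \textsc{TriConnExact} where the observation is used, there are in fact at least three further degree-2 vertices on that arc.

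The cleaner route is the one you list as an alternative. In that branch $G$ is unbalanced, so $F$ is the massive face, and \cref{lem:bigfacesmallface} gives $d(F')\le\lceil k/2\rceil$ for every $F'\ne F$. Since inserting $e$ does not change $d(F')$, this yields $d(F')-1\le\lceil (k-2)/2\rceil$ in $G+e$ directly. You then avoid both the case distinction and the detour through \cref{lem_newmaxface}.

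A minor point: in your $F_2$ count, the branch through $v_{F_1}$ contains \emph{at least}, not exactly, $t(v_i,v_{i+1})$ pendants. The inequality you need is unaffected.
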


The detailed procedure is given in Algorithm~\ref{triconnexact}.

\begin{algorithm}[t]
\caption{\textsc{FiveVertices}}
\label{fivevertices}

\KwIn{Outerplane 2-connected graph $G$, a degree-2 vertex $u$ in the maximum degree face $F$}
\KwOut{List of edges $E'$ such that $G(V, E \cup E')$ is planar and triconnected}

Compute $T(G)$, $k \gets \text{no. of leaves}$, and $d \gets \max_{v \in T(G)} \deg(v)$\;
Let $v$ be a degree $\ge 3$ vertex in $F$ such that $|t(u,v)-t(v,u)| \le 1$\;
Let $w,x$ be the degree $2$ vertices in $F$ encountered moving anticlockwise and clockwise from $v$ along the boundary of face $F$\;
Let $y,z$ be any two other degree $2$ vertices on $F$\;
Let $u_1,u_2,u_3,u_4,u_5$ be the vertices $u,w,x,y,z$ in consecutive clockwise cyclic order along the face such that $t(u_1,u_2) \ge \max_i t(u_i,u_{i+1})$\;

\If{there exists a face $F'$ in $G$ such that $d(F') = \pendant{k}$}{
    \Return{$\{u_2u_4\} \cup \textsc{OuterPlanarConnect}(G+\{u_2u_4\})$} \tcp*[r]{\Cref{lem:tightfaceshortcut}}
}

$M \gets \emptyset$\;
Let the input in the given cyclic order be $a_1,a_2,a_3,a_4,a_5$ where $t(u_i,u_{i+1}) = a_{i+1}$ and $a_1 = \max_i a_i$\;

\If{$a_1 = 0$}{
    $M \gets \{u_2u_4\}$\;
}
\ElseIf{$a_4+a_3 \le a_1+a_3+a_5$ and $a_2+a_5 \le a_1+a_3+a_4$}{
    $M \gets \{u_1u_5, u_2u_4\}$\;
}
\ElseIf{$a_4+a_3 = a_1+a_2+a_5+1$}{
    $M \gets \{u_2u_4\}$\;
}
\ElseIf{$a_2+a_5 = a_1+a_3+a_4+1$}{
    $M \gets \{u_5u_4, u_1u_3\}$\;
}
\ElseIf{$a_4+a_3 \ge a_1+a_2+a_5+2$}{
    $M \gets \{u_1u_5, u_2u_3\}$\;
}
\ElseIf{$a_2+a_5 \ge a_1+a_3+a_4+2$}{
    $M \gets \{u_1u_2, u_5u_4\}$\;
}

\Return{$M \cup \textsc{OuterPlanarConnect}(G+M)$} \tcp*[r]{\Cref{lem:fivecasesbalance}}

\end{algorithm}

\subsubsection*{The Final Balancing Case}

\begin{figure}
  \centering
  \begin{minipage}[t]{0.33\textwidth}
    \centering
    \includegraphics[width = 1.1\linewidth]{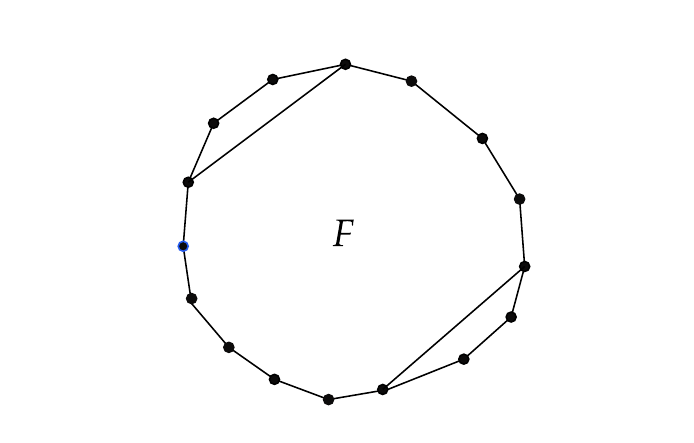}
    \smallskip

    \textbf{(a)}
    \label{fig:a}
  \end{minipage}\hfill
  \begin{minipage}[t]{0.33\textwidth}
    \centering
    \includegraphics[width = 1.2\linewidth]{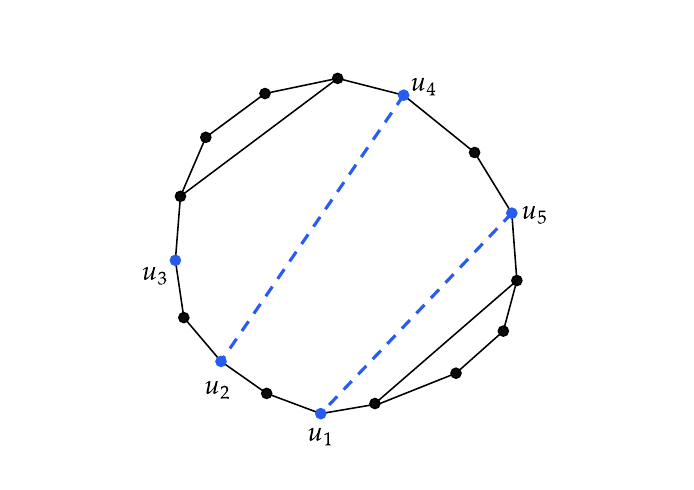}
    \smallskip

    \textbf{(b)}
    \label{fig:b}
  \end{minipage}
    \begin{minipage}[t]{0.33\textwidth}
    \centering
    \includegraphics[width = 1.2\linewidth]{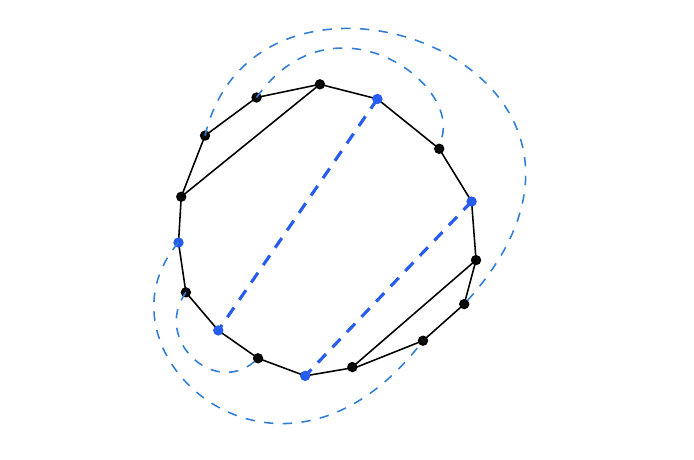}
    \smallskip

    \textbf{(c)}
    \label{fig:c}
  \end{minipage}

  \caption{\textbf{(a)} The input graph with the maximum-degree face $F$.
\textbf{(b)} Five representative degree-2 vertices selected on the boundary of $F$. The thick dotted blue lines show the internal augmentation edges, resulting in a balanced graph.
\textbf{(c)} Augementation of the resultant balanced graph using outer edges.}
  \label{fig:procedure}
\end{figure}

The cases that we have discussed until now give us the following entry condition on the maximum-degree face $F$: $t(v_i,v_{i+1}) \le \lceil k/2\rceil - 2$ for all $i$, there are at least five degree-2 vertices on the boundary of $F$ and for a fixed degree-2 vertex $u$ on the boundary of face $F$ there exists no degree-2 vertex $u'$ satisfying
$|t(u,u')-t(u',u)|\le 1$. Under these conditions, we show that it is possible to pick five degree-2 vertices from the boundary of $F$ such that they represent the face, i.e., every internal augmentation edge is incident on one of these vertices. Once these vertices are chosen, we carefully analyze the sizes of the facial segments (captured by the values $t(v_i,v_{i+1})$) and accordingly add one or two edges inside $F$. We then show that the resulting graph is balanced, after which the resulting graph can be made triconnected using \textsc{OuterPlanarConnect}. $\mathrm{OPT}(G)=\lceil k/2\rceil$ edges. 

The following lemmas establish this in stages. 
We first show that it is always possible to select a suitable set of five vertices (\cref{lem:fivevertices}). 
We then analyze how adding one or two edges inside the face affects the degrees of the resulting faces (\Cref{lem:oneedgebalance,lem:twoedgebalance}). 
Finally, we combine these results to argue that the configurations considered by \textsc{FiveVertices} always lead to a balanced graph (\Cref{lem:fivecasesbalance}). Refer to \Cref{fig:procedure} for an illustrative example.

\begin{restatable}{lemma}{fivevertexlemma}\label{lem:fivevertices}
The five degree-2 vertices
$u_1,\ldots,u_5$ selected by the balancing procedure satisfy
\[
t(u_i,u_{i+1}) \le \pendant{k}-2
\]
for every consecutive pair.
\end{restatable}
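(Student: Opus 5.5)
We will show that each arc between consecutive chosen vertices is bounded by a \emph{short} arc: either one between two consecutive degree-2 vertices of $F$, or the arc cut off by the near-balanced vertex $v$. Fix the entry conditions of \textsc{FiveVertices}. These are: $k$ is even; $G$ is unbalanced with maximum-degree face $F$; $F$ has at least five degree-2 vertices; every pair of consecutive degree-2 vertices $v_i,v_{i+1}$ on $F$ satisfies $t(v_i,v_{i+1})\le\pendant{k}-2$, since the cases $\ge\pendant{k}$ and $=\pendant{k}-1$ were already returned by \textsc{TriConnExact}; and no degree-2 vertex $u'$ satisfies $|t(u,u')-t(u',u)|\le 1$.

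The first step is a monotonicity observation. For degree-2 vertices $a,b,c$ in clockwise order on $F$,
\[
t(a,c)=t(a,b)+t(b,c)+\beta ,
\]
with $\beta\in\{0,1\}$, where $\beta$ accounts for the pendant of $b$ in $T(G)$. This follows by counting the neighbours of the new face node in $T(G+ac)$: pendants of degree-2 vertices and subtrees of chords strictly inside the arc. Hence $t$ is monotone under enlarging an arc, and any arc between two \emph{consecutive} degree-2 vertices has $t\le\pendant{k}-2$.

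Next, consider the pairs among $u,w,x,y,z$ that are consecutive in cyclic order. By construction $w$ and $x$ are the degree-2 vertices immediately before and after $v$. So if none of $u,y,z$ lies strictly between $w$ and $x$, then $w$ and $x$ are consecutive degree-2 vertices of $F$ and the bound for that pair is immediate. The remaining pairs, for instance $(x,y)$ or $(z,u)$, need not be consecutive, and this is where the argument has real content.

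For these pairs we use the choice of $v$, namely $|t(u,v)-t(v,u)|\le 1$, together with the fact that the total over the whole face is about $d(F)-1\le k$ (by \cref{lem:bigfacesmallface}, the other faces are small). Then $t(u,v)$ and $t(v,u)$ are each roughly $(d(F)-1)/2$. Any arc from $u$ to a vertex on the same side of $v$ is contained in the arc from $u$ to $w$ or from $x$ to $u$, and each of these is at most $t(u,v)$ minus the contribution of $v$. The contribution of $v$ is at least its own chord subtree, since $\deg v\ge 3$.

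We still need strictness, that is, to get down to $\pendant{k}-2$ rather than $\pendant{k}-1$. For this we use that no degree-2 vertex $u'$ balances $u$: as $u'$ moves past $v$, the difference $t(u,u')-t(u',u)$ jumps from $\le -2$ to $\ge 2$. The jump across the single vertex $v$ is therefore at least $4$, so $v$ carries at least $2$ units, and each half-arc $t(u,w)$ and $t(x,u)$ is at most $\pendant{k}-2$.

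Because $y$ and $z$ are arbitrary, every arc between two cyclically consecutive chosen vertices lies inside one of these half-arcs or is the consecutive pair $(w,x)$. Monotonicity then gives the bound for all five pairs. Relabelling the vertices as $u_1,\dots,u_5$ starting at the maximal arc changes nothing.

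The main obstacle is the middle step: the exact arithmetic relating $t(u,w)$, $t(x,u)$, $d(F)$ and $k$, including the $\pm1$ contributions from the endpoints' own pendants and the parity of $k$. I would first check this on the extreme configuration where $d(F)-1=k$, meaning all pendants hang off $F$.
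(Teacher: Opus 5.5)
\textbf{Where the proof is incomplete.} Your overall architecture matches the paper's. You flank the balancing vertex $v$ by its neighbouring degree-2 vertices $w,x$, and you settle the $w$--$x$ arc with the consecutive-pair entry condition. Every arc involving $y,z$ is then reduced, by monotonicity, to a sub-arc of $P_{uw}$ or $P_{xu}$. Your additivity identity in fact holds with $\beta=1$ exactly. The gap is at the one step with real content, namely $t(u,w),t(x,u)\le\lceil k/2\rceil-2$. You leave this at ``roughly'' and then close it with an inference that does not follow.

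Write $D(s)=t(u,s)-t(s,u)$, and let $c$ be the number of chords of $G$ on the boundary of $F$ between $w$ and $x$. Then $D(x)-D(w)=2+2c$ and $t(u,w)+t(x,u)=d(F)-3-c$. Since $v$ has degree at least three, one of its two boundary edges on $F$ is a chord, so $c\ge 1$. Hence the jump of at least $4$ across $v$ is automatic and carries no information. Moreover, any lower bound on what $v$ ``carries'' controls only the \emph{sum} $t(u,w)+t(x,u)$, not each half-arc. The sum constraint alone allows $t(x,u)=0$ and $t(u,w)=d(F)-4$.

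\textbf{How to close it.} What you need is the \emph{position} of the jump, which you do state: $D(w)\le -2$ and $D(x)\ge 2$. This follows because $D(v)-D(w)\ge 1$ and $D(v)\le 1$ give $D(w)\le 0$, and $|D(w)|\ge 2$ since $w$ is a degree-2 vertex; the case of $x$ is symmetric. Combine this with the exact count $t(u,w)+t(w,u)=d(F)-2$: every chord of $F$ and every degree-2 vertex of $F$ other than $u,w$ is counted on exactly one side. Together with $d(F)\le k$ this gives $2t(u,w)\le d(F)-4\le k-4$, and likewise $2t(x,u)\le k-4$.

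The bound $d(F)\le k$ does not come from \cref{lem:bigfacesmallface}, which bounds the \emph{other} faces. It holds because the leaves of the tree $T(G)$ are exactly the $k$ pendants, and a tree has at least as many leaves as its maximum degree.

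\textbf{Comparison with the paper.} With this fix, your route is sounder than the paper's written argument. The paper passes through $t(u,v),t(v,u)\le\lceil k/2\rceil-1$ and then uses $t(u,w)<t(u,v)$. But $t(u,v)+t(v,u)=d(F)-1$, so when $d(F)=k$ one of the two sides equals $k/2$. Applying the entry condition at $w$ and $x$ themselves, as you propose, is what actually yields the strict bound.
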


\begin{proof}   
Since there is no degree-2 vertex $u'$ on $F$ with
$|t(u,u')-t(u',u)|\le 1$, the balanced split of $F$ from $u$ must be
achieved by a boundary vertex $v$ of degree at least three. We can make the following observation about $v$: since $k$ is even and $t(u,v)+t(v,u) < d(F) \le \pendant{k}$ such that $|t(u,v)-t(v,u)| \le 1$, we can conclude that $t(u,v)\le \pendant{k}-1 \text{ and } t(v,u) \le \pendant{k}-1$. Let $w$ be the last degree-2 vertex before $v$ in clockwise order, and let $x$ be the first degree-2 vertex after $v$ we can say that $t(u,w)\le \pendant{k}-2$ (since $t(u,w) < t(u,v)$) and $t(x,u) \le \pendant{k}-2$ (since $t(x,u) < t(v,u)$). We also know that $t(w,x) \le \pendant{k}-2$ since $w$ and $x$ are consecutive degree-2 vertices. Since $F$ contains at least five degree-2 vertices, pick degree-2 vertices $y$ and $z$ arbitrarily. These 5 vertices are identified as $\{u_i : i \in [5]\}$ in cyclic order. For every consecutive pair among
these five vertices, the corresponding boundary interval is contained in one of the intervals already bounded above. From the said arguments, we obtain $t(u_i,u_{i+1})\le \pendant{k} - 2$.
\end{proof}

Recall that all internal augmentation edges are added inside the maximum-degree face $F$. To ensure this, we first establish that every other face has sufficiently small degree to satisfy the entry condition for the remaining case analysis. The following observation shows that if some interior face violates this condition, then adding the edge $u_2u_4$ inside $F$ immediately balances the graph.

    \begin{observation}\label{lem:tightfaceshortcut}
If some interior face $H$ satisfies $d(H)=\pendant{k}$, then adding the edge $u_2u_4$ inside $F$
produces a balanced graph.
\end{observation}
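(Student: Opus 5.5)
The plan is to show that after adding $u_2u_4$ inside $F$, no face of $G+u_2u_4$ is massive, i.e.\ every face $F^*$ satisfies $d(F^*)-1\le \pendant{k'}$, where $k'$ is the number of degree-2 vertices of $G+u_2u_4$. Since $u_2$ and $u_4$ are degree-2 vertices of $G$ that become degree 3, we have $k'=k-2$. Because $k$ is even, $\pendant{k'}=k/2-1$. The faces of $G+u_2u_4$ are the old faces $F'\neq F$ together with the two new faces $F_1,F_2$ obtained by splitting $F$ along $u_2u_4$. I will bound these two groups separately.

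\textbf{Old faces.} The degree of an old face is unchanged by the split, since its node in $T(G)$ keeps the same neighbours; the added edge only changes the pendant structure near $F$. By \cref{lem:bigfacesmallface}, using that $F$ is massive, every old face satisfies $d(F')\le \pendant{k}=k/2$. Hence $d(F')-1\le k/2-1=\pendant{k-2}$, so old faces are fine. This covers the tight face $H$ with $d(H)=\pendant{k}$: it lands exactly on the bound and does not become massive.

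\textbf{New faces.} Here the key inputs are the segment bounds from \cref{lem:fivevertices}, namely $t(u_i,u_{i+1})\le \pendant{k}-2$, and the existence of $H$. The face $F_1$ is bounded by $P_{u_2u_4}$, which covers the two segments $(u_2,u_3)$ and $(u_3,u_4)$. The face $F_2$ covers the three segments $(u_4,u_5)$, $(u_5,u_1)$ and $(u_1,u_2)$. I expect additivity to hold up to a constant: $t(u_2,u_4)\approx t(u_2,u_3)+t(u_3,u_4)+1$, where the $+1$ accounts for the pendant at $u_3$, and similarly for $F_2$.

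The existence of $H$ is what controls $d(F)$. The subtree of $T(G)$ hanging off $H$ away from $F$ contains at least $d(H)-1=k/2-1$ pendants. These pendants all sit in a single branch at $v_F$, and that branch attaches to $F$ via one edge of $P$ strictly between two consecutive degree-2 vertices. Therefore at most $k/2+1$ pendants remain for the other branches of $F$, which bounds $d(F)$. Combining this with the segment bounds and the fact that $H$'s branch lies inside one segment, I expect to show $d(F_1)-1\le k/2-1$ and $d(F_2)-1\le k/2-1$.

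\textbf{Main obstacle.} The hard step is the new-face count, because nothing in the statement fixes which segment contains $H$'s branch. I would handle this by a small case split on whether that branch lies in $F_1$'s arc or in $F_2$'s arc. I would also use the choice that $t(u_1,u_2)$ is the largest segment to place the heavy segment on the side where the bound still works. If the pure pendant count turns out to be off by one, I would fall back on \cref{lem_newmaxface}-style reasoning: show that the maximum face of $G+u_2u_4$ is the old face $H$ rather than $F_1$ or $F_2$, which directly yields balance.
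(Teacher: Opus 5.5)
\paragraph{Assessment.} Your old-face argument matches the paper's. Your pendant count through $H$'s branch is the right key fact: that branch of $v_F$ carries at least $d(H)-1=k/2-1$ pendants, and every other branch carries at least one, so $d(F)\le k/2+2$. The paper's proof simply asserts $d(F_i)\le\pendant{k}$ for the two new faces without justification, so this count is exactly what is needed.

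\paragraph{The gap.} You stop at ``I expect to show'' for the new faces, which is the only nontrivial step, and you misjudge what it requires. No case split on which segment contains $H$'s branch is needed. Neither are the segment bounds $t(u_i,u_{i+1})\le\pendant{k}-2$ or the maximality of $t(u_1,u_2)$.

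The missing ingredient is the degree-sum identity $d(F_1)+d(F_2)=d(F)$. Splitting $v_F$ deletes the two pendants at $u_2,u_4$ and adds the edge $v_{F_1}v_{F_2}$, which contributes one to each side. This is also the convention in \cref{lem:oneedgebalance}. Since $u_3$ lies strictly inside $P_{u_2u_4}$ and $u_5,u_1$ lie strictly inside $P_{u_4u_2}$, you have $d(F_1)\ge 2$ and $d(F_2)\ge 3$. Hence
\[
d(F_1)\le \tfrac{k}{2}+2-3=\tfrac{k}{2}-1,\qquad d(F_2)\le \tfrac{k}{2}+2-2=\tfrac{k}{2}.
\]
Thus $d(F_i)-1\le k/2-1=\pendant{k-2}$, and together with your old-face bound the graph is balanced.

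\paragraph{The fallback.} Your fallback via \cref{lem_newmaxface} would not rescue an off-by-one either. To know that the old face $H$ is the maximum face of $G+u_2u_4$, you already need $d(F_i)\le d(H)=k/2$, which is precisely the bound in question. As written, your proposal has the right idea but leaves the decisive inequality unproven. With the short computation above it becomes a complete proof, and a more explicit one than the paper's.
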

\begin{proof}
    From \cref{lem:bigfacesmallface}, we gather that $d(F') \le \pendant{k}$. Say the equality obtained for some face $H$, i.e., $d(H) = \pendant{k}$. After the edge $u_2u_4$ is added to the face $F$, the new faces $F_1$ and $F_2$ have $d(F_i) \le \pendant{k}$. Since $d(F') \le \pendant{k}$ for all other faces $F'$, the resultant graph is balanced.
\end{proof}

	Let us recall that for all faces $F$, $d(F) \le k$. Indeed, in any tree, the number of leaves is at least the degree of the maximum degree node. Specifically in $T(G)$, $k \ge d(v_F) \text{ for all }  v_F \in T(G)$. Let us prove specific conditions when adding one or two edges to $F$ gives balance.
    
\begin{restatable}{lemma}{oneedgebalance}[Balancing after one internal edge]\label{lem:oneedgebalance}
Let a face $F$ of degree $a+b+2 \le k$ be split by adding one internal edge into faces
$F_1$ and $F_2$ with $d(F_1)=a+1$ and $d(F_2)=b+1$, where $a\ge b$.
If $a \le b+1$, then
\[
d(F_1)-1 \le \pendant{k-2}
\quad\text{and}\quad
d(F_2)-1 \le \pendant{k-2}.
\]

\end{restatable}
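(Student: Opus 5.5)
The claim is pure arithmetic, so I would argue directly from the hypotheses $a+b+2\le k$ and $b\le a\le b+1$. By definition, $d(F_1)-1=a$ and $d(F_2)-1=b$. Since $b\le a$, it suffices to prove the single inequality $a\le \pendant{k-2}$; the bound for $F_2$ then follows from $b\le a$.

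The key step is to combine the two hypotheses on $a$ and $b$. From $a\le b+1$ we get
\[
2a \le a+b+1 \le (k-2)+1 = k-1,
\]
using $a+b\le k-2$. Hence $a\le \frac{k-1}{2}$, and since $a$ is an integer, $a\le \left\lfloor \frac{k-1}{2}\right\rfloor$.

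It remains to check the identity $\left\lfloor \frac{k-1}{2}\right\rfloor=\pendant{k-2}$, which I would do by a parity split.
\begin{itemize}
\item If $k$ is even, both sides equal $\frac{k-2}{2}$.
\item If $k$ is odd, both sides equal $\frac{k-1}{2}$.
\end{itemize}
This gives $d(F_1)-1=a\le\pendant{k-2}$, and then $d(F_2)-1=b\le a\le\pendant{k-2}$.

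There is no real obstacle here. The only point needing care is the floor/ceiling identity in the last step, and the parity split handles it. The hypothesis $a+b+2\le k$ (an upper bound on the face degree, consistent with the remark that $d(F)\le k$ in $T(G)$) is exactly what makes the bound $2a\le k-1$ go through.
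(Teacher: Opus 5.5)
Your proof is correct and is essentially the paper's argument: both derive the bound directly from $a+b\le k-2$ and $b\le a\le b+1$. The paper writes it as $a\le\pendant{a+b}\le\pendant{k-2}$, while you pass through $2a\le k-1$ and the identity $\lfloor (k-1)/2\rfloor=\pendant{k-2}$; this is only a different way of handling the rounding.
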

\begin{proof}
    Consider the scenario where a singular edge $e$ is added to the degree-2 vertices to get two new faces $F_1$ and $F_2$ with degrees $d(F_1)=a+1$ and $d(F_2)=b+1$. So, $d(F)=a+b+2 \le k$. WLOG, let $a\ge b$. Then,
	
	\begin{equation}\label{eq1}
		a \le b+1 \iff a \le \pendant{a+b} \le \pendant{k-2}
	\end{equation}
	
	Hence, $d(F_1)-1 \le \pendant{k-2}$ and $d(F_2)-1 \le \pendant{k-2}$. 
\end{proof}

\begin{restatable}{lemma}{twoedgebalance}[Balancing after two internal edges]\label{lem:twoedgebalance}
Suppose two internal edges added inside a face $F$ create faces
$F_1,F_2,F_3$ with degrees $a+1$, $b+2$, and $c+1$, respectively,
where $a+b+c+4 \le k$.
If
\[
a \le \pendant{a+b+c}, \quad
b+1 \le \pendant{a+b+c}, \quad
c \le \pendant{a+b+c},
\]
then all new faces satisfy $d(F_i)-1 \le \pendant{k-4}$.

\end{restatable}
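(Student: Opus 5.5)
The plan mirrors the proof of \cref{lem:oneedgebalance}: the claim reduces to the monotonicity of $x\mapsto\pendant{x}$ together with the hypothesis $a+b+c+4\le k$. The three new faces have $d(F_1)-1=a$, $d(F_2)-1=b+1$ and $d(F_3)-1=c$, so we must show that each of $a$, $b+1$ and $c$ is at most $\pendant{k-4}$. Each of these quantities is bounded by $\pendant{a+b+c}$ by the corresponding hypothesis.

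The key step is the observation that $a+b+c\le k-4$, which is just a rearrangement of $a+b+c+4\le k$. Since $\lceil\cdot\rceil$ and division by $2$ are both monotone, this gives
\[
\pendant{a+b+c}\le\pendant{k-4}.
\]
Chaining this with the three hypotheses yields
\[
d(F_1)-1=a\le\pendant{k-4},\qquad d(F_2)-1=b+1\le\pendant{k-4},\qquad d(F_3)-1=c\le\pendant{k-4},
\]
which is the assertion of the lemma.

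I do not expect a genuine obstacle. The only point needing care is bookkeeping: the middle face $F_2$ is bounded by both new edges, so it has degree $b+2$ and hence contributes $b+1$ rather than $b$; this is precisely why the second hypothesis is stated as $b+1\le\pendant{a+b+c}$. I will also remark, as context for later use, that $k-4$ is the relevant parameter because two internal edges between degree-2 vertices remove at most four degree-2 vertices. Combined with the bound on the other faces (as in \cref{lem_newmaxface}), this is what will make $G+M$ balanced in \cref{lem:fivecasesbalance}. That application, however, lies outside the present statement, which is purely arithmetic.
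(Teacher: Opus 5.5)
Your proposal is correct and matches the paper's proof. Both read off $d(F_i)-1$ as $a$, $b+1$, $c$, bound each by $\pendant{a+b+c}$ using the hypotheses, and conclude from $a+b+c\le k-4$ and monotonicity of the ceiling; you state that last step explicitly, whereas the paper leaves it implicit.
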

\begin{proof}
    	Consider the scenario where the edges $e_1$ and $e_2$ are added to obtain faces $F_1, F_2, F_3$ with degrees $a+1, b+2, c+1$. For the resultant graph to be balanced, we have to check the degrees of 3 faces.
	
	\begin{equation}\label{2a}
		d(F_1)-1 = a \le \pendant{a+b+c} 
	\end{equation}
	\begin{equation}\label{2b}
		d(F_2)-1 = b+1 \le \pendant{a+b+c} 
	\end{equation}
	\begin{equation}\label{2c}
		d(F_3)-1 = c \le \pendant{a+b+c}  
	\end{equation}
	
	We know that $k \ge a+b+c+4 = d(F)$. 
	These conditions are sufficient to obtain $d(F_i) - 1\le \pendant{k-4}$, that is, these are sufficient conditions to obtain a balanced graph.
 \end{proof}

We have built machinery to understand how edge additions affect the degree of the new faces and the balance of a graph. Let $a_i := t(u_i,u_{i+1})$ and assume $a_1=\max_i a_i$. What follows is a lemma that affirms that the graph is balanced after the various edge additions on the different possible configurations in \textsc{FiveVertices}.

\begin{restatable}{proposition}{fivecases}[\textsc{FiveVertices} cases yield balance]\label{lem:fivecasesbalance}
Let $u_1,\dots,u_5$ and $a_1,\dots,a_5$ be as in \cref{lem:fivevertices},
and assume $a_i \le \pendant{k}-2$ for all $i$.
In each of Case of \textsc{FiveVertices}, the edges inserted satisfy
the hypotheses of \cref{lem:oneedgebalance} or \cref{lem:twoedgebalance}.
Therefore, each case produces a balanced graph.
\end{restatable}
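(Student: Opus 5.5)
The plan is to check, case by case, that every configuration in \textsc{FiveVertices} meets the inequalities of \cref{lem:oneedgebalance} (one edge) or \cref{lem:twoedgebalance} (two edges). Combined with the fact that every other face stays small, this gives balance.

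First I fix the bookkeeping. The five degree-2 vertices $u_1,\dots,u_5$ cut the boundary of $F$ into five segments with values $a_1,\dots,a_5$, where $a_{i+1}=t(u_i,u_{i+1})$ and $a_1=\max_i a_i$. The total degree splits as $d(F)=\sum_i a_i+5$, each $u_i$ contributing its pendant. Adding a chord $u_iu_j$ removes the pendants of $u_i$ and $u_j$ and creates one new tree edge on each side. So a chord that separates the segments into groups $S$ and $\bar S$ produces faces with $d-1$ equal to $\sum_{S}a+(\#\text{untouched }u\text{'s in that group})$. For two vertex-disjoint chords the middle face gets the extra $+1$; this is the $b+2$ in \cref{lem:twoedgebalance}. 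I will recast each chosen $M$ in the variables $a,b$ (respectively $a,b,c$) of those lemmas. For example, $M=\{u_2u_4\}$ splits $F$ into the segments $\{a_3,a_4\}$ plus $u_3$, and $\{a_5,a_1,a_2\}$ plus $u_1,u_5$. Then I verify the hypotheses with the case conditions.

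The cases go as follows. Case $a_1=0$ is immediate. In the case $a_4+a_3=a_1+a_2+a_5+1$, the two sides of $u_2u_4$ differ by at most one once the untouched degree-2 vertices are counted, so \cref{lem:oneedgebalance} applies. In the main two-edge case with $M=\{u_1u_5,u_2u_4\}$, the two inequalities in the guard are exactly the statements that neither outer face exceeds half the total. The middle face is controlled by $a_1=\max a_i$ together with $a_i\le\pendant{k}-2$ from \cref{lem:fivevertices}. The cases with strict excess ($\ge +2$) move one chord endpoint ($u_2u_3$ instead of $u_2u_4$, or $u_1u_2$ instead of $u_1u_5$). This shifts a whole segment to the lighter side, and I will check that the imbalance of at least $2$ is enough to absorb it while keeping each face within $\pendant{a+b+c}$. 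The asymmetric case $\{u_5u_4,u_1u_3\}$ is handled the same way.

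To conclude balance, I will argue as in \cref{lem_newmaxface} and \cref{lem:tightfaceshortcut}. Since the tight-face shortcut was not taken, every other face $F'$ satisfies $d(F')\le\pendant{k}-1$, which is at most $\pendant{k-4}+1$ for even $k$. The new faces satisfy $d(F_i)-1\le\pendant{k-2}$ or $\pendant{k-4}$ by the lemmas, and the number of pendants drops by exactly $2$ per inserted edge. Hence no face is massive in $G+M$.

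I expect the main obstacle to be the strict-excess cases. There the guard inequality must be converted, with the correct parity and ceiling handling and the extra $+1$ for the middle face, into $b+1\le\pendant{a+b+c}$. I would first try to rewrite each guard as a bound on the heaviest face relative to $\sum a_i$, using $a_1$ maximal and the bound $\pendant{k}-2$. If a ceiling slack of one arises, I would fall back on the entry condition that no degree-2 vertex $u'$ gives $|t(u,u')-t(u',u)|\le1$ to exclude the boundary instance.
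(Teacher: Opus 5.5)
Your route matches the paper's: the bookkeeping $d(F)=\sum_i a_i+5$, then a case-by-case check against \cref{lem:oneedgebalance} and \cref{lem:twoedgebalance}. You also close, more explicitly than the paper's proof, with the observation that every other face has $d(F')\le k/2-1$ because the shortcut of \cref{lem:tightfaceshortcut} was not taken. However, the one two-edge case you spell out, $M=\{u_1u_5,u_2u_4\}$, has the roles of the faces reversed, and that step fails as written. Your own rule gives three faces:
\begin{itemize}
\item an ear containing only segment $a_1$, with value $a_1$;
\item an ear containing $a_3,a_4,u_3$, with value $a_3+a_4+1$;
\item a \emph{middle} face containing $a_2,a_5$, with value $a_2+a_5+1$.
\end{itemize}
Write $S=\sum_i a_i$, and read the first guard as $a_3+a_4\le a_1+a_2+a_5$, as in the paper's proof. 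The two guards bound the second ear and the middle face, each giving $2(\text{value})\le S+2$.

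The ear $\{a_1\}$ is the face that needs $a_1\le\lceil k/2\rceil-2=\lceil (k-4)/2\rceil$ directly. The guards do not control it: $a=(10,0,0,0,0)$ satisfies both, and there even the hypothesis $a\le\lceil (a+b+c)/2\rceil$ of \cref{lem:twoedgebalance} fails. Conversely, your claim that the middle face is controlled by $a_1=\max_i a_i$ together with $a_i\le\lceil k/2\rceil-2$ is false. The choice $a=(2,2,0,0,2)$ with $k=12$ satisfies those bounds and $d(F)\le k$, yet $a_2+a_5+1=5>\lceil (k-4)/2\rceil=4$. Swap the roles and Case~1 goes through.

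The strict-excess cases you leave open close directly, with no need for the entry-condition fallback.
\begin{itemize}
\item For $\{u_1u_5,u_2u_3\}$, the middle face has $d-1=a_2+a_4+a_5+2$, one more than the paper's proof writes. The guard $a_3+a_4\ge a_1+a_2+a_5+2$ together with $a_4\le a_1$ gives $a_1+a_3\ge a_2+a_4+a_5+2$, i.e.\ $2(a_2+a_4+a_5+2)\le S+2$.
\item For $\{u_1u_2,u_5u_4\}$, the middle face has $d-1=a_1+a_3+a_4+2$, again one more than the paper states. The guard gives $2(a_1+a_3+a_4+2)\le S+2$ at once.
\item For $\{u_5u_4,u_1u_3\}$, one has $S=2(a_1+a_3+a_4)+1$, and $a_2\le a_1$ bounds both $a_2+a_3+1$ and $a_1+a_4+1$ by $(S+1)/2$.
\end{itemize}
In every case, a face value $x$ with $2x\le S+2$ satisfies $x\le\lceil (S+1)/2\rceil\le\lceil (k-4)/2\rceil$, because $d(F)=S+5\le k$. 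The single-segment ears are covered by $a_i\le\lceil k/2\rceil-2$. So no ceiling slack arises, and with the Case~1 roles corrected your plan gives a complete proof.
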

\begin{proof}
We exhaustively consider the various cases as follows:

    \textbf{Case 1:} $a_4+a_3 \le a_1+a_2+a_5$ and $a_2+a_5 \le a_1+a_3+a_4$.

Edges $u_1u_5$ and $u_2u_4$ are inserted, creating three faces.
Using $\sum_{i=1}^5 a_i + 5 = d(F) \le k$ and the case inequalities,
we obtain
\[
d(F_i)-1 \le \pendant{k-4}
\quad \text{for } i=1,2,3.
\]
Thus, the hypotheses of \cref{lem:twoedgebalance} hold.

\textbf{Case 2:} $a_4+a_3 = a_1+a_2+a_5+1$.

The edge $u_2u_4$ is inserted, splitting $F$ into two faces.
The case condition implies $a_4+a_3+1 \le \pendant{k-2}$,
so \cref{lem:oneedgebalance} applies.

\textbf{Case 3:} $a_2+a_5 = a_1+a_3+a_4+1$.

Edges $u_5u_4$ and $u_1u_3$ are inserted, creating three faces $F_1,F_2,F_3$.
From the case assumption and $a_1=\max_i a_i$, we obtain
\[
d(F_1)-1 \le a_2+a_3+1,\quad
d(F_2)-1 \le a_1+a_4+1,\quad
d(F_3)-1 \le a_5.
\]
Using $a_2+a_5 = a_1+a_3+a_4+1$ and $\sum_{i=1}^5 a_i + 1 = d(F) \le k$,
we derive
\[
\max_i (d(F_i)-1) \le a_2+a_5 \le \pendant{k-4}.
\]
Thus the hypotheses of \cref{lem:twoedgebalance} hold.

\textbf{Case 4:} $a_4+a_3 \ge a_1+a_2+a_5+2$.

Edges $u_1u_5$ and $u_2u_3$ are inserted. The resulting faces satisfy
\[
d(F_1)-1 \le a_1,\quad
d(F_2)-1 \le a_2+a_4+a_5+1,\quad
d(F_3)-1 \le a_3.
\]
Since $a_1=\max_i a_i$ and $\sum_{i=1}^5 a_i+1 = d(F) \le k$,
the case inequality implies
\[
a_2+a_4+a_5+1 \le \pendant{k-4}.
\]
Hence all faces satisfy $d(F_i)-1 \le \pendant{k-4}$,
and \cref{lem:twoedgebalance} applies.

\textbf{Case 5:} $a_2+a_5 \ge a_1+a_3+a_4+2$.

Edges $u_1u_2$ and $u_5u_4$ are inserted. The resulting faces satisfy
\[
d(F_1)-1 \le a_2,\quad
d(F_2)-1 \le a_1+a_3+a_4+1,\quad
d(F_3)-1 \le a_5.
\]
Using the case inequality and $\sum_{i=1}^5 a_i+1 = d(F) \le k$,
we obtain
\[
a_1+a_3+a_4+1 \le \pendant{k-4}.
\]
Thus all faces satisfy $d(F_i)-1 \le \pendant{k-4}$,
and \cref{lem:twoedgebalance} applies.

In all five cases the hypotheses of \cref{lem:oneedgebalance}
or \cref{lem:twoedgebalance} are satisfied, and therefore the
resulting graph is balanced.
\end{proof}

\begin{observation}\label{observation:embeddingpreserve}
    All edges inserted by \textsc{FiveVertices} can be drawn inside 
$F$ without crossings; hence the embedding and planarity is preserved.
\end{observation}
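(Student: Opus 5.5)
The plan is to argue purely geometrically: every edge chosen by \textsc{FiveVertices} joins two of the five vertices $u_1,\dots,u_5$, and these lie on the boundary of the single internal face $F$. Since $G$ is outerplane and biconnected, the boundary of $F$ is a simple cycle. I would fix the clockwise cyclic order $u_1,u_2,u_3,u_4,u_5$ in which the procedure lists them. A chord $u_iu_j$ drawn inside the open disk bounded by $F$ is then a Jordan arc meeting the boundary only at its endpoints. Such an arc splits the disk into two regions, bounded by $P_{u_iu_j}\cup u_iu_j$ and $P_{u_ju_i}\cup u_iu_j$. As a first check, I would verify that the endpoints of each inserted edge are non-adjacent distinct vertices, so that no parallel edge is created. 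This holds because consecutive degree-$2$ vertices on a face with at least five such vertices cannot coincide. Also, if $u_i$ and $u_{i+1}$ were adjacent along $F$, the chord could simply be drawn parallel to that boundary edge without affecting planarity.

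The key step is that two chords of a disk with endpoints on the boundary cycle can be drawn without crossing precisely when their endpoint pairs do not interleave in the cyclic order. Chords that share an endpoint are allowed. I would then check each output of \textsc{FiveVertices}:
\begin{itemize}
\item $\{u_2u_4\}$ is a single chord, so there is nothing to check.
\item In $\{u_1u_5,u_2u_4\}$, the pair $\{2,4\}$ lies entirely within the arc $u_1\to u_5$ that avoids nothing else, so the chords are nested.
\item In $\{u_5u_4,u_1u_3\}$, the indices $\{4,5\}$ and $\{1,3\}$ do not interleave, since $4,5$ both lie on the arc from $u_3$ to $u_1$.
\item In $\{u_1u_5,u_2u_3\}$ and $\{u_1u_2,u_5u_4\}$, each pair is contained in an arc avoiding the other pair's endpoints, so again there is no interleaving.
\end{itemize}
Given non-interleaving, I would draw the second chord inside the region created by the first chord that contains its endpoints. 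This region is again a disk whose boundary contains both endpoints, so the drawing succeeds.

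Finally, I would note that these edges lie in an internal face of $G$. The outer face, and the embedding of every other face, is therefore untouched, and the resulting drawing extends the given embedding of $G$. The subsequent call to \textsc{OuterPlanarConnect} then acts on the new plane graph. I expect the only delicate point to be the interleaving check for the case $\{u_5u_4,u_1u_3\}$, where the cyclic wraparound must be handled carefully. I would settle it by listing the order explicitly as $u_1,u_2,u_3,u_4,u_5$.
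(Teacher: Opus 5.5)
Your proof is correct. It is the argument the paper leaves implicit: the paper states this observation without proof, relying only on the fact that the inserted chords lie in $F$ and pairwise do not cross. Your explicit non-interleaving check against the cyclic order $u_1,\dots,u_5$ is right for every output of \textsc{FiveVertices}, namely the single chord $u_2u_4$ and the sets $\{u_1u_5,u_2u_4\}$, $\{u_5u_4,u_1u_3\}$, $\{u_1u_5,u_2u_3\}$, $\{u_1u_2,u_5u_4\}$.

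The one blemish is your claim that the endpoints of each inserted edge are non-adjacent. Your stated reason only addresses distinctness, and cyclically consecutive chosen vertices such as $u_1,u_2$ can in fact be neighbours on the outer cycle, as your very next sentence concedes. Non-adjacency is not needed for planarity or embedding preservation, so you can simply delete that claim.
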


 \begin{remark}
	There exists an optimal solution that adds at most 2 internal edges. However, there are instances where at least 2 internal edges are required to be added.
\end{remark}

We can now finally prove the main theorem.

 \triconnmainthm*

     \begin{proof}
Let $G$ be the input graph with $k$ degree-2 vertices.
By \cref{lem_oddoptimal}, we may assume $k$ is even.
Any augmentation requires at least $\pendant{k}$ edges,
and in some configurations at least $\pendant{k}+1$ edges
(\cref{prop_bigarc}).

Let $F$ be a maximum-degree face and $v_1, \hdots, v_r$ (with $v_{i \bmod r}$) be the degree-$2$ vertices in $F$ in consecutive cyclic order.

\medskip
\noindent\textbf{Case 1: $T(G)$ is balanced.}
Then \textsc{OuterPlanarConnect} produces a solution of size
$\pendant{k}$, which is optimal (\cref{lem_balancedopt}).

\medskip
\noindent\textbf{Case 2: $T(G)$ is unbalanced.}

\textbf{Case 2.1: $F$ contains at most four degree-2 vertices.}
By \cref{lem:smallfacecase}, \textsc{TriConnExact}
returns an optimal solution.

\smallskip
\textbf{Case 2.2: $F$ contains at least five degree-2 vertices.}

\emph{Subcase 2.2.1:} There exist consecutive degree-2 vertices
$v_i,v_{i+1}$ with $t(v_i,v_{i+1}) \ge \pendant{k}$.  
Then at least $\pendant{k}+1$ edges are required
(\Cref{prop_bigarc}), and \textsc{TriConn}
computes an optimal solution.

\emph{Subcase 2.2.2:} There exist consecutive degree-2 vertices
$v_i,v_{i+1}$ with $t(v_i,v_{i+1}) = \pendant{k}-1$.  
Then \cref{observation:balancingedge} completes the augmentation
using $\pendant{k}$ edges.

\emph{Subcase 2.2.3:} For all consecutive degree-2 vertices $v_i$ and $v_{i+1}$, we have
$t(v_i,v_{i+1}) \le \pendant{k}-2$.  
By \cref{lem:fivevertices}, the subroutine
\textsc{FiveVertices} selects five suitable degree-2 vertices in $F$.  
\Cref{lem:tightfaceshortcut} ensures that for the remaining case analysis, all the internal faces apart from $F$ have sufficiently small degree. Coupled with the guaranties from
Lemmas~\ref{lem:oneedgebalance} and \ref{lem:twoedgebalance}, 
\cref{lem:fivecasesbalance} shows that the edges inserted by
\textsc{FiveVertices} produce a balanced graph. The remaining
augmentation is then optimally completed by 
\textsc{OuterPlanarConnect}. 

\medskip
In all cases, the number of edges inserted equals the minimum
possible. Embedding preservation follows from
\cref{observation:embeddingpreserve}.

\paragraph*{Running Time and Space.}
All structural preprocessing steps, including the construction of
$T(G)$ and identification of the maximum-degree face, can be performed
in linear time and space.
The algorithm then performs a constant number of structural tests on
face $F$, each requiring at most linear time. In the small-face case,
the brute-force step examines only a constant number of edge insertions.
The checks for large arcs and balanced split pairs are linear-time
operations.

In the remaining case, \textsc{FiveVertices} selects five vertices and
inserts at most two internal edges, which takes linear time. Once the
graph becomes balanced, the remaining augmentation is performed by
\textsc{OuterPlanarConnect}, which runs in
$\mathcal{O}(|V(G)|(1+\alpha(|V(G)|)))$ time (\cref{lem_balancedopt}).
The procedure \textsc{TriConn} is invoked only in configurations where
at least $\pendant{k}+1$ edges are required, and by
\cref{theorem_almostopt} also runs in
$\mathcal{O}(|V(G)|(1+\alpha(|V(G)|)))$ time.
No step allocates more than linear auxiliary data structures, and all
subroutines operate in linear space. Hence, the total space usage is
$\mathcal{O}(|V(G)|)$.
\end{proof}

%% file: Sections/proofs/tritotrideg.tex
\begin{proof}
	Let $G$ be considered to be an Outercircle $C$ with interior chords. $G^* = G + A_{opt}$ is triconnected. If $k$ is even, then all the edges in $A_{opt}$ must connect degree 2 vertices in $G$. If $k$ is odd, there exists at most one edge $e$ connecting a degree 2 vertex $u$ to a degree $>2$ vertex, say, $w$. Since $\deg(w) >3$, $w$ should be incident on an internal chord, say, $wx$ where $x$ is another vertex. We can find a degree 2 vertex $v$ in the Outercircle between the arc $C_{wx}$. Now, the vertices $u,w,v,x$ are in cyclic ordering on $C$. 
	
	We claim that $A_{tri}=A_{opt} - \{uw\} + \{vw\}$ is an optimal solution as well. Assume on the contrary that in $G' = G + A_{tri}$ is not triconnected. So $G' - \{a,b\}$ is disconnected for some vertices $a$ and $b$. Upon removal of $a$ and $b$ from $G'$, the circle $C$ is split into 2 arcs $P_1$ and $P_2$. If $u$ and $w$ are present in the same arc or if they are a part of the separating pair, we realize that since $G^* -  \{a,b\}$ is not disconnected, there exists another edge $e$ from $P_1$ to $P_2$ in $G^*$ which should exist in $G'$ as well. This ensures that $P_1$ and $P_2$ stay connected in $G'$ thereby giving us a contradiction that $\{a,b\}$ is not a separating pair. Hence, assume that $u$ and $w$ are in different arcs.
	
	Hence, let $a$ be in the arc $C_{uw}$\footnote{Referring to the clockwise part of the cycle $C$ restricted between $a$ and $b$}. The other vertex $b$ can be in the arc $C_{ux}, C_{xv}$ or $C_{vw}$. We'll show that for each possibility, there exists a contradiction.
	
	If $b$ is in the arc $C_{ux}$, then $uv$ connects $P_1$ and $P_2$. If $b$ is in the arc $C_{xv}$, then either $xw$ or $uv$ connects $P_1$ and $P_2$. If $b$ is in the arc
	$C_{vw}$ then $xw$ connects $P_1$ and $P_2$. Hence, we've arrived at a contradiction for all the cases exhausted and $G' - \{a,b\}$ remains connected. Since $a,b$ were arbitrary, $G'$ is triconnected. The running time and space complexity come from \Cref{lem_onlyleaves}.
\end{proof}

%% file: Sections/proofs/outerktwo.tex
\begin{proof}
	Let $A$ represent a set of optimal triconnecting edges for $G$, that is, $G+A$ is triconnected and $|A| = \pendant{k}$.
	
	Due to \cref{tritotrideg2} we can safely assume that for each edge in $A$, its endpoints are both of degree $2$. Then, we have for each edge $e$ in $A$, a counterpart on $T(G)$ (the edge between the two pendants in $T(G)$ that correspond to the degree $2$ vertices that are endpoints of $e$ in $G$. Then, we can see that the graph $T(G)+A$ is well defined.
	We claim that $T(G)+A$ must be $2$-vertex connected.
	\begin{claim}
		$T(G)+A$ is 2-vertex connected.
	\end{claim}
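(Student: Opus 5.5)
To prove that $T(G)+A$ is 2-vertex connected, I would argue by contradiction and transfer a cut vertex of $T(G)+A$ back to a separating pair of $G+A$.

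First, $T(G)+A$ is connected, since $T(G)$ is a tree. So suppose some node $x$ is a cut vertex of $T(G)+A$. A pendant node of $T(G)$ has degree exactly $2$ in $T(G)+A$: each degree-2 vertex of $G$ is an endpoint of exactly one edge of $A$, because $|A|=\pendant{k}$ and every endpoint has degree 2 (for odd $k$, adjust using \cref{tritotrideg2}). Removing such a pendant node cannot disconnect the graph unless its unique tree-neighbour is already separated, so we may take $x=v_F$ for some interior face $F$.

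\textbf{Choosing the separating pair.} The components of $T(G)-v_F$ correspond to the boundary edges of $F$:
\begin{itemize}
\item a chord $e$ of $F$ corresponds to the subtree on the far side of $e$;
\item an outer-cycle edge on $F$ contributes nothing;
\item a degree-2 vertex on $F$ corresponds to its own pendant.
\end{itemize}
Since $x$ is a cut vertex, these pieces fall into at least two groups that no $A$-edge joins. Going around the boundary of $F$ in cyclic order, the pieces in one group and the pieces in another occupy boundary positions. Because the $A$-edges are non-crossing outer edges, the groups can be chosen to be cyclically contiguous, so there are two "transition points" on the boundary of $F$. I would pick boundary vertices $a,b$ of $F$, one at each transition point: either the shared endpoint of two consecutive pieces, or the appropriate endpoint of a chord. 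Then $\{a,b\}$ splits the outer cycle $C$ into two arcs $P_1,P_2$, where $P_1$ carries exactly the degree-2 vertices and subtrees of the first group.

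\textbf{The contradiction.} Every $A$-edge runs between degree-2 vertices. By the grouping, no $A$-edge joins $P_1$ to $P_2$. The $G$-edges leaving $P_1$ are chords, and every chord with both endpoints off $\{a,b\}$ lies within a single side. Hence $G+A-\{a,b\}$ is disconnected, contradicting triconnectivity of $G+A$.

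The main obstacle is the choice of $a,b$ when a transition occurs at a degree-2 vertex $w$ of $F$ whose pendant forms its own group. In that case I would try taking $a$ to be a neighbour of $w$, or the vertex $w$ itself. I would also need to make sure $P_1$ and $P_2$ are both nonempty, and rule out the degenerate case where $F$ is a triangle.
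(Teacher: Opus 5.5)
Your outline follows the paper's proof, step for step:
\begin{itemize}
\item a leaf of $T(G)$ is never a cut vertex, simply because $T(G)$ minus a leaf still spans the remaining nodes (your ``degree exactly $2$'' remark fails for odd $k$, but it is not needed);
\item planarity of $T(G)+A$, with $T(G)$ drawn inside $C$ and $A$ outside, forces the components of $T(G)+A-v_F$ into non-interleaving intervals around $v_F$;
\item a separating pair of $G+A$ is read off on $\partial F$ at the two ends of such an interval (the paper takes the last vertex of one piece and the first vertex of the next).
\end{itemize}
The gap is the step you flag yourself. Nothing in the proposal shows that $a,b$ can be chosen so that $a\neq b$, both arcs of $C-\{a,b\}$ are nonempty, and each arc carries degree-2 vertices of only one group. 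The paper's proof only asserts this as well, so it has to be supplied.

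Here is how to close it. Give each piece a first and a last vertex on $\partial F$: $f_i$ and $f_{i+1}$ for a boundary chord $f_if_{i+1}$, and $w$ twice for a degree-2 vertex $w$.

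\emph{Separation.} Your argument works for any vertex $a$ of $\partial F$ in the closed stretch from the last vertex of the last $X$-piece to the first vertex of the first $Y$-piece, and any $b$ chosen likewise at the other transition. With such a choice, every degree-2 vertex on one arc of $C-\{a,b\}$ lies in an $X$-piece and every one on the other arc lies in a $Y$-piece, so no $A$-edge crosses. No chord of $G$ crosses either. Boundary chords of $F$ join consecutive vertices of $\partial F$. Every other chord lies in the closed arc behind a single boundary chord, and that arc has no vertex of $\partial F$ in its interior. In your problematic case of adjacent degree-2 vertices $w\in X$, $w'\in Y$, this stretch is $\{w,w'\}$, and deleting either one is fine.

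\emph{Nonemptiness.} The open arc behind a chord piece never contains $a$ or $b$. Hence a group containing a chord piece, or at least three degree-2 vertices, survives every choice. If both groups are of this kind, the two stretches are disjoint, so $a\neq b$ automatically.

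\emph{Remaining case.} Suppose instead one group, say $X$, consists only of one or two degree-2 vertices. Then $Y$ must contain a chord piece. Otherwise $\partial F$ has no chords, so every vertex of $F$ has degree $2$ and $G=C$; then $G+A$ is outerplanar, hence not 3-connected. Take $a$ to be the first vertex of the first $Y$-piece and $b$ the last vertex of the last $Y$-piece. These are distinct, and they avoid $X$ because a degree-2 vertex is never a chord endpoint. The arc behind $Y$'s chord keeps the $Y$ side nonempty, and $X$'s vertices keep the other side nonempty.

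No separate triangle case is needed.
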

	\begin{claimproof}
			If not, let $v_F \in V(T(G)+A)$ be a cut vertex. Consider the bc-tree of $T(G)+A$ that is rooted at $v_F$. Let $a_1,a_2,\ldots,a_m$ be the neighbours of $v_F$ such that they appear in that cyclic order around $v_F$ in the `induced' embedding of $T(G)$.
			
			Consider the bc-tree of $T(G)+A$ to be rooted at $v_F$. Then focusing on the neighbours of $v_F$, $a_1,a_2 \dots , a_m$ (that appear in that cyclic order in the embedding of $T(G)$), it must be the case that the partition of the these vertices based on the connected component they belong to in $T(G)+A-v_F$ is a cyclic interval partitioning. That is, if $a_{i_1},a_{i_2},a_{i_3},a_{i_4}$ appear in that cyclic order then, it can't be the case that $a_{i_1}$ and $a_{i_3}$ belong to one connected component and $a_{i_2}$ and $a_{i_4}$ belong to a different connected component of $T(G)+A-v_F$. This is because, all the edges in $A$ are drawn externally in for the augmentation of $G$, and thus an interleaving will indicate violation of the planarity of $G+A$. 
			
			Now, we pick one connected component of $T(G)+A-v_F$. Let the vertices from $a_1,a_2,\ldots,a_m$ that are picked in this connected component be $a_{i_1}, a_{i_2},\ldots, a_{i_\ell}$ (in the same cyclic order as in $a_1,\ldots,a_m$). Our goal now is to a find a separating pair in $T(G)+A$ to show a contradiction. For some $j\in[\ell]$, let $a_{i_j},a_{i_{j+1}}$ be two vertices that do not consecutively appear around $v_F$. We identify a vertex $s_x$ in $G$, for $a_x$ (for $x\in\{a_{i_j},a_{i_{j+1}}\}$ as follows. If $a_x$ is a leaf vertex in $T(G)$ then $s_x$ is the degree $2$ vertex in $G$ that corresponds to $a_x$. Otherwise, $a_x$ is a face vertex that is adjacent to $v_F$. Exactly two vertices of this face lie on $F$ as well. For $a_{i_j}$ we pick the last vertex and for $a_{i_{j+1}}$ we pick the first vertex (first and last as per the cyclic order of vertices on the outer face in the embedding of $G$). We claim that these two vertices form a separating in $G+A$. And the reasoning is the following. There must exist an edge in $A$ that connects the two distinct subtrees of $v_F$ that contain $a_{i_j}$ and $a_{i_{j+1}}$. This edge when drawn on the outerplane drawing blocks any pendant vertex that lies between $s_x$s (the two vertices we delete) to be connected beyond, since we do not have any edge to add internally, and external connection will violate planarity.

			Hence $T(G) + A$ is $2$-connected. The claim is proven.
		\end{claimproof}

		Assume on the contrary that $G$ is not balanced. So, $\max_v \deg(v)-1 > \pendant{k}$ edges are required to augment $T(G)$ to biconnectivity. However, $T(G)+A$ is biconnected (|A| = $\pendant{k})$. This is a contradiction, and therefore, our assumption was wrong, and $G$ is balanced.
	\end{proof}

%% file: Sections/proofs/internaledgeoneface.tex
\begin{proof}
	Let $G$ be an input to OTA-Fix and let $k$ be the number of degree-2 vertices.

If $bc(T(G))$ is balanced, then by \cref{lem_balancedopt},
\textsc{OuterPlanarConnect} yields an optimal solution using $\pendant{k}$
edges and no internal edges. Hence the statement holds trivially.
Assume therefore that $bc(T(G))$ is unbalanced, and let $F$ be a
maximum-degree face.

First consider the case where the optimal solution has size $\pendant{k}+1$.
By \cref{theorem_almostopt}, \textsc{TriConn} computes an optimal
augmentation of this size, inserting exactly one internal edge.
This edge is placed inside $F$ by construction, and thus the claim holds.
	
	Let's assume now that the optimal solution uses $\pendant{k}$ edges, while $bc(T(G))$ is unbalanced: 
	Let $F$ represent the maximum degree face. We claim that there must be a degree $2$ vertex in $F$. This is because, if this is not the case, then no internal edge can be added to $F$ (any internal edge insertion would be suboptimal, because while it will not bring down the number of degree $2$ vertices in $G$, but at the same time it will take up one edge from the budget of the optimal solution). So, we have established that $F$ can't have any internal edges in the optimal solution. Now, consider all the internal edges (say, they are $\ell$ in number) that the optimal solution has to add to be already added. Let this intermediate graph be $G'$. $G'$ must have either $k-2\ell$ or $k-2\ell+1$ degree $2$ vertices remaining. By assumption, since the optimal solution size is $\pendant{k}$, exactly $\pendant{k}-\ell$ are required to triconnect $G'$ by insertion of only outer edges. Since $G'$ is outerplanar, from \cref{outerk2}, we have that $G'$ must be a balanced graph. But, since $F$ is the maximum degree face in $G$, and remains so in $G'$ because all the other face degrees and number of degree $2$ vertices has only gone down as compared to $G$, while degree of face $F$ hasn't changed. This means $G'$ is still unbalanced, and we have a contradiction. So, from now on we can safely assume that $F$ has at least one degree $2$ vertex and moreover, any optimal solution (of size $\pendant{k}$) must have an edge added internally in $F$.    
	
	Now, if $k$ is odd, then performing \cref{triconn} gives an optimal solution by adding an interior edge in $F$ incident on a degree $2$ vertex. Indeed, the remaining $k-1$ (or $k-2$) vertices are connected by $\pendant{k-1}$ (or $\pendant{k-2}$) edges so that in total $\pendant{k}$ edges are used. Hence, we are done in this case as well.
	
	Consider the case when $k$ is even. Let $I$ be the set of edges in the optimal solution that are added inside $F$. $I$ is not empty from  the discussion above. 
	Since $k$ is even, each degree $2$ vertex in $F$ is incident on exactly a single edge from $I$.
	We can now assume that $bc(T(G + I))$ is not balanced. Because if it is not the case, then we can optimally connect the remaining degree $2$ vertices in $G+I$ using \textsc{OuterPlanarConnect} and we will obtain an optimal solution with all internal edges only being in $F$ and would be done.
	
	Let $F'$ be the maximum degree face in $G+I$. It is unique since $bc(T(G + I))$ is unbalanced.
	
	\begin{enumerate}[wide, labelindent=0pt,label=Case~\arabic*.]
		\item {$F'$ is a new face}\label{caselemma1}:
		
		By assumption, insertion of further internal edges does not split this maximum degree face $F'$. Hence, after all the internal edges have been inserted, the resulting graph still remains unbalanced. But, it can be triconnected optimally with only outer edges. This contradicts \Cref{degreegoverns}.

		\item {$F'$ is an old face}\label{caselemma2}:
		
		$F'\neq F$ and is an internal face in $G$. Let $I=\{e_1,e_2,\ldots,e_k\}$. For every $j\in[k]$ let $I_j=\{e_1,\ldots,e_j\}$. Consider the following sequence of graphs, $G+I_1, G+I_2,\ldots,G+I_k=G+I$. By assumption, $F'$ is the maximum degree face in $G+I$. Let $\ell$ be the smallest index in $[k]$ such that $F'$ is the maximum degree face in $G+I_\ell$. Then we claim that $F'$ is the maximum degree face in all of $G+I_{\ell+1},G+I_{\ell+2},\ldots,G+I_{k-1}$ as well. Indeed, the degree of $F'$ doesn't change through these edge additions and no new face gets a high degree.
		
		Let $G_1=G+I_{\ell-1}$ and $G_2=G+I_{\ell}$. In $G_1$, the maximum degree face must be a face that lies inside the boundary of erstwhile $F$. In $G_2$ by assumption, $F'$ becomes the maximum degree face. Let $M$ be the face in $G+I_{\ell-1}$ to which the edge $e_\ell$ is added to get $G+I_{\ell}$. If $G_1$ is already balanced, we are done since we add the $l-1$ edges to $F$ and employ \Cref{lem_balancedopt} to use $\pendant{k}$ edges in total. Otherwise, we can employ \cref{lem_newmaxface} to the graph $G_1$ upon adding the edge $e$ since an old face gets the maximum degree after adding $e$. This gives us that $G_2$ is balanced. 
		
		In $G_2$, there are $k-2\ell$ degree $2$ vertices remaining.
		Hence, use the $\ell$ edges from the face $F$ and use \textsc{OuterPlanarConnect} on $G+I_\ell$ to optimally augment the graph. We use $\ell + \pendant{k} - \ell = \pendant{k}$ edges exactly.
		
	\end{enumerate}
	
\end{proof}

%% file: Sections/proofs/oddoptimal.tex
\begin{proof}
	Say the input graph $G$ is balanced, then from \cref{lem_balancedopt}, we know that the algorithm returns the optimal solution. If not, let $F$ be the massive face. 
	Say that face $F$ has a degree 2 vertex. Then the algorithm adds an interior edge by connecting a degree 2 vertex and another vertex to split the face $F$. Once this edge is added, we know that the graph gets balanced. Hence, in total, 1 + $\pendant{k-1}$ or 1 + $\pendant{k-2}$ are used to triconnect the graph depending on whether the edge $e$ connects two degree 2 vertices or just one. Since $k$ is odd, both the quantities are equal. Hence, an optimum of $\pendant{k}$ edges are used in total.
	
	Now consider the case when $F$ doesn't have a degree 2 vertex. We know from \cref{internaledgeoneface} that internal edges (if any) are added to $F$. If an internal edge was indeed added, we know that it doesn't connect any degree 2 vertex and $k$ degree 2 vertices have to be connected using outer edges requiring at least $\pendant{k}+1$ edges in total. If no internal edge was added, the graph is still unbalanced and from \Cref{degreegoverns}, at least $\pendant{k}$+1 edges are required by the optimal solution. Hence, in totality, \textsc{TreeConn} provides the optimal solution. 
	
\end{proof}

%% file: Sections/proofs/bigarc.tex
\begin{proof}
	Assume that the optimal solution has size $\pendant{k}$. From \Cref{degreegoverns}, since $G$ is unbalanced, the solution must include internal edges.

    \begin{claim}
        The graph $G$ remains unbalanced after adding the internal edges.
    \end{claim}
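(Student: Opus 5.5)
The approach is to locate a face of $G+I$ that still contains the whole boundary arc $P_{v_1v_2}$, where $I$ is the set of internal edges of the assumed optimal solution $A$, and to show that this face is massive in $G+I$.

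\textbf{Step 1: shape of $A$.} Since $|A|=\pendant{k}=k/2$ with $k$ even, and every degree-2 vertex must receive at least one new edge, $A$ is a perfect matching on the $k$ degree-2 vertices of $G$. So every edge of $I$ joins two degree-2 vertices. By \cref{degreegoverns}, $I\neq\emptyset$. Hence $G+I$ has exactly $k':=k-2|I|\le k-2$ degree-2 vertices. This gives
\[
\pendant{k'}\le k/2-1\le t(v_1,v_2)-1 .
\]

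\textbf{Step 2: the arc $P_{v_1v_2}$ is never cut.} An edge of $I$ lying in a face other than $F$ does not touch the boundary of $F$ in a way that splits it. An edge of $I$ inside $F$ has both endpoints among the degree-2 vertices of $F$. Because $v_1,v_2$ are consecutive, no such endpoint lies in the interior of $P_{v_1v_2}$. So no internal edge has an endpoint strictly inside the arc, and there is a unique face $F^*$ of $G+I$ contained in $F$ whose boundary contains all of $P_{v_1v_2}$.

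\textbf{Step 3: bound $d(F^*)$.} In $T(G+I)$, the node $v_{F^*}$ is adjacent to every neighbour that $v_F$ had through the interior of the arc: the subtrees hanging off inner edges of $P_{v_1v_2}$, and the pendants of degree-2 vertices in its interior (there are none). These are exactly the $t(v_1,v_2)$ neighbours counted by $d(F_1)-1$, where $F_1$ is the face cut off by the chord $v_1v_2$.

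Besides these, $v_{F^*}$ has at least one more neighbour, coming from the part of the boundary of $F^*$ outside the arc. There are two cases:
\begin{itemize}
\item if $v_1v_2\in I$, this extra neighbour is the face on the other side of that edge;
\item otherwise, it is the pendant or face adjacent at $v_1$ or $v_2$, or the face across the internal edge of $I$ incident to $v_1$.
\end{itemize}
In either case $d(F^*)\ge t(v_1,v_2)+1$.

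\textbf{Step 4: conclude.} Combining Steps 1 and 3,
\[
d(F^*)-1\ \ge\ t(v_1,v_2)\ \ge\ k/2\ >\ k/2-1\ \ge\ \pendant{k'} .
\]
So $F^*$ is massive in $G+I$, and $G+I$ is unbalanced.

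\textbf{Main obstacle.} The delicate part is Step 3: verifying that the neighbours of $v_F$ along the arc all survive as neighbours of $v_{F^*}$, and that at least one genuinely new neighbour always exists. I would handle this by a short case split on whether $v_1$ and $v_2$ are matched by $I$ to each other, to other vertices of $F$, or not at all, reading off the boundary of $F^*$ in each case.
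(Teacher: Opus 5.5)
Your proposal is correct and follows the paper's proof. The paper also argues that the solution edges pair up the degree-2 vertices, that the arc $P_{v_1v_2}$ stays on a single face $F^*$ of $G+I$ with $d(F^*)\ge t(v_1,v_2)+1\ge \lceil k/2\rceil+1$, and that $k'\le k-2$ makes this face massive; your Step 2 and the case split in Step 3 just spell out the two facts the paper asserts without detail, namely that no internal edge cuts the arc and where the extra ``$+1$'' neighbour of $v_{F^*}$ comes from.
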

    \begin{claimproof}
        Since $k$ is even and the optimal solution uses $\pendant{k}$ edges, these edges can only connect the $k$ degree-2 vertices. After inserting the internal edges, all vertices between $v_1$ and $v_2$ along the boundary of $F$ lie on a common face $F'$. By definition of $t(v_1,v_2)$, we have
\[
d(F') \ge t(v_1,v_2) + 1 \ge \pendant{k} + 1.
\]
After the internal edge insertions, at most $k-2$ degree-2 vertices remain. Hence,
\[
d(F') \ge \pendant{k} + 1 \ge \pendant{k'} + 2,
\]
implying that $F'$ is still a massive face and the graph remains unbalanced.
    \end{claimproof}

    By \cref{internaledgeoneface}, we may assume that all internal edges of the optimal solution lie in $F$. Suppose $l$ such edges are added. Then $k-2l$ degree-2 vertices remain. Since the graph is still unbalanced, \Cref{degreegoverns} implies that at least $\pendant{k} - l + 1$ outer edges are required to achieve triconnectivity. Therefore, the total number of edges used is at least
\[
l + \left(\pendant{k} - l + 1\right) = \pendant{k} + 1,
\]
contradicting the assumption that the optimal solution has size $\pendant{k}$.

\end{proof}

%% file: Sections/6.conclusion.tex
\section{Conclusion}

We studied both edge- and vertex-connectivity augmentation under fixed planar embeddings.
For 2-edge-connectivity augmentation of plane graphs (\textsc{PECA-Fix}), we re-introduced and
used the $bt$-tree structure and a labelling scheme that identifies safe edges, yielding a
near-linear-time optimal algorithm. For 3-vertex-connectivity augmentation of biconnected outerplane graphs
(\textsc{OTA-Fix}), we used the dual tree of inner faces and related biconnecting this tree
to triconnecting the original graph. Furthermore, we developed a face-splitting technique to obtain a (+1)-approximate solution and got rid of the (+1) factor with extensive case analysis and new combinatorial insights and thus obtained a near-linear-time solution.

Several questions remain open. Is it possible to solve \textsc{PECA-Fix} with a better running time? Can a 2-edge-connected outerplane graph be augmented to 3-edge-connectivity in near-linear or better time? Our techniques do not extend directly, but may offer useful insights. It would be interesting to extend the problem to  a weighted setting where the solution edges incur a cost.

%% file: cas-refs.bib
@article{DBLP:journals/siamdm/Vegh11,
    author       = {L{\'{a}}szl{\'{o}} A. V{\'{e}}gh},
    title        = {Augmenting Undirected Node-Connectivity by One},
    journal      = {{SIAM} J. Discret. Math.},
    volume       = {25},
    number       = {2},
    pages        = {695--718},
    year         = {2011},
    url          = {https://doi.org/10.1137/100787507},
    doi          = {10.1137/100787507},
    bibsource    = {dblp computer science bibliography, https://dblp.org}
}

@article{watanabe1993minimum,
title = {A minimum 3-connectivity augmentation of a graph},
journal = {Journal of Computer and System Sciences},
volume = {46},
number = {1},
pages = {91-128},
year = {1993},
issn = {0022-0000},
doi = {https://doi.org/10.1016/0022-0000(93)90050-7},
url = {https://www.sciencedirect.com/science/article/pii/0022000093900507},
author = {Toshimasa Watanabe and Akira Nakamura}
}

@book{diestel2017,
author = {Diestel, Reinhard},
title = {Graph Theory},
year = {2017},
isbn = {3662536218},
publisher = {Springer Publishing Company, Incorporated},
edition = {5th}
}

@article{toth2012connectivity,
title = {Connectivity augmentation in planar straight line graphs},
journal = {European Journal of Combinatorics},
volume = {33},
number = {3},
pages = {408-425},
year = {2012},
note = {Topological and Geometric Graph Theory},
issn = {0195-6698},
doi = {https://doi.org/10.1016/j.ejc.2011.09.002},
url = {https://www.sciencedirect.com/science/article/pii/S0195669811001508},
author = {Csaba D. Tóth}
}

@InProceedings{kant1991planar,
author="Kant, Goos
and Bodlaender, Hans L.",
editor="Dehne, Frank
and Sack, J{\"o}rg-R{\"u}diger
and Santoro, Nicola",
title="Planar graph augmentation problems",
booktitle="Algorithms and Data Structures",
year="1991",
publisher="Springer Berlin Heidelberg",
address="Berlin, Heidelberg",
pages="286--298",
doi = {10.1007/BFb0028270},
isbn="978-3-540-47566-8"
}

@InProceedings{gutwenger2009planar,
author="Gutwenger, Carsten
and Mutzel, Petra
and Zey, Bernd",
editor="Fiala, Ji{\v{r}}{\'i}
and Kratochv{\'i}l, Jan
and Miller, Mirka",
title="Planar Biconnectivity Augmentation with Fixed Embedding",
booktitle="Combinatorial Algorithms",
year="2009",
publisher="Springer Berlin Heidelberg",
address="Berlin, Heidelberg",
pages="289--300",
doi = {10.1007/978-3-642-10217-2_29},
isbn="978-3-642-10217-2"
}

@inproceedings{hartmann2012cubic,
  author    = {Tanja Hartmann and
               Jonathan Rollin and
               Ignaz Rutter},
  editor    = {Kun-Mao Chao and
               Hsu-Chun Yen and
               In-Bok Yeom},
  title     = {Cubic Augmentation of Planar Graphs},
  booktitle = {Algorithms and Computation - 23rd International Symposium, ISAAC 2012, Taipei, Taiwan, December 19-21, 2012. Proceedings},
  series    = {Lecture Notes in Computer Science},
  volume    = {7676},
  pages     = {266--276},
  publisher = {Springer},
  year      = {2012},
  doi       = {10.1007/978-3-642-35261-4_43}
}

@article{eswaran1976augmentation,
author = {Eswaran, Kapali P. and Tarjan, R. Endre},
title = {Augmentation Problems},
journal = {SIAM Journal on Computing},
volume = {5},
number = {4},
pages = {653-665},
year = {1976},
doi = {10.1137/0205044},

URL = { 
    
        https://doi.org/10.1137/0205044
    
    

},
eprint = { 
    
        https://doi.org/10.1137/0205044
    
    

}
}

@inproceedings{fialko1998new, author = {Fialko, Sergej and Mutzel, Petra}, title = {A new approximation algorithm for the planar augmentation problem}, year = {1998}, isbn = {0898714109}, publisher = {Society for Industrial and Applied Mathematics}, address = {USA}, booktitle = {Proceedings of the Ninth Annual ACM-SIAM Symposium on Discrete Algorithms}, pages = {260–269}, numpages = {10}, location = {San Francisco, California, USA}, series = {SODA '98}, url = {http://dl.acm.org/citation.cfm?id=314613.314714} }

@INPROCEEDINGS{hsu1991linear,
  author={Hsu, T.-S. and Ramachandran, V.},
  booktitle={[1991] Proceedings 32nd Annual Symposium of Foundations of Computer Science}, 
  title={A linear time algorithm for triconnectivity augmentation}, 
  year={1991},
  volume={},
  number={},
  pages={548-559},
  doi={10.1109/SFCS.1991.185418}}

@article{jackson2005independence,
title = {Independence free graphs and vertex connectivity augmentation},
journal = {Journal of Combinatorial Theory, Series B},
volume = {94},
number = {1},
pages = {31-77},
year = {2005},
issn = {0095-8956},
doi = {https://doi.org/10.1016/j.jctb.2004.01.004},
url = {https://www.sciencedirect.com/science/article/pii/S0095895604001224},
author = {Bill Jackson and Tibor Jordán}
}

@article{ABELLANAS2008220,
title = {Augmenting the connectivity of geometric graphs},
journal = {Computational Geometry},
volume = {40},
number = {3},
pages = {220-230},
year = {2008},
issn = {0925-7721},
doi = {https://doi.org/10.1016/j.comgeo.2007.09.001},
url = {https://www.sciencedirect.com/science/article/pii/S0925772107000934},
author = {M. Abellanas and A. García and F. Hurtado and J. Tejel and J. Urrutia}
}

@InProceedings{gutwengerhardness,
author="Gutwenger, Carsten
and Mutzel, Petra
and Zey, Bernd",
editor="Ngo, Hung Q.",
title="On the Hardness and Approximability of Planar Biconnectivity Augmentation",
booktitle="Computing and Combinatorics",
year="2009",
publisher="Springer Berlin Heidelberg",
address="Berlin, Heidelberg",
pages="249--257",
doi          = {10.1007/978-3-642-02882-3\_25},
isbn="978-3-642-02882-3"
}

@article{watanabe1987edge,
title = {Edge-connectivity augmentation problems},
journal = {Journal of Computer and System Sciences},
volume = {35},
number = {1},
pages = {96-144},
year = {1987},
issn = {0022-0000},
doi = {https://doi.org/10.1016/0022-0000(87)90038-9},
url = {https://www.sciencedirect.com/science/article/pii/0022000087900389},
author = {Toshimasa Watanabe and Akira Nakamura}
}

@article{wolff2012augmenting,
title = {Augmenting the Connectivity of Planar and Geometric Graphs},
journal = {Electronic Notes in Discrete Mathematics},
volume = {31},
pages = {53-56},
year = {2008},
note = {The International Conference on Topological and Geometric Graph Theory},
issn = {1571-0653},
doi = {https://doi.org/10.1016/j.endm.2008.06.009},
url = {https://www.sciencedirect.com/science/article/pii/S1571065308000656},
author = {Ignaz Rutter and Alexander Wolff}
}

@InProceedings{altri,
author="Al-Jubeh, Marwan
and Ishaque, Mashhood
and R{\'e}dei, Krist{\'o}f
and Souvaine, Diane L.
and T{\'o}th, Csaba D.",
editor="Dong, Yingfei
and Du, Ding-Zhu
and Ibarra, Oscar",
title="Tri-Edge-Connectivity Augmentation for Planar Straight Line Graphs",
booktitle="Algorithms and Computation",
year="2009",
publisher="Springer Berlin Heidelberg",
address="Berlin, Heidelberg",
pages="902--912",
doi = {10.1007/978-3-642-10631-6_91},
isbn="978-3-642-10631-6"
}

@article{westbrook1992maintaining,
	title = {Maintaining bridge-connected and biconnected components on-line},
	volume = {7},
	issn = {1432-0541},
	url = {https://doi.org/10.1007/BF01758773},
	doi = {10.1007/BF01758773},
	number = {1},
	journal = {Algorithmica},
	author = {Westbrook, Jeffery and Tarjan, Robert E.},
	month = jun,
	year = {1992},
	pages = {433--464},
}

@article{kant1996augmenting,
author = {Kant, Goos},
title = {Augmenting Outerplanar Graphs},
year = {1996},
issue_date = {July 1996},
publisher = {Academic Press, Inc.},
address = {USA},
volume = {21},
number = {1},
issn = {0196-6774},
url = {https://doi.org/10.1006/jagm.1996.0034},
doi = {10.1006/jagm.1996.0034},
journal = {J. Algorithms},
month = jul,
pages = {1–25},
numpages = {25}
}

@article{hsu1992four,
author = {Hsu, Tsan-sheng},
title = {On Four-Connecting a Triconnected Graph},
year = {2000},
issue_date = {May 2000},
publisher = {Academic Press, Inc.},
address = {USA},
volume = {35},
number = {2},
issn = {0196-6774},
url = {https://doi.org/10.1006/jagm.2000.1077},
doi = {10.1006/jagm.2000.1077},
journal = {J. Algorithms},
month = may,
pages = {202–234},
numpages = {33}
}

@article{garcia2010augmenting,
	title = {Augmenting the {Connectivity} of {Outerplanar} {Graphs}},
	volume = {56},
	issn = {1432-0541},
	url = {https://doi.org/10.1007/s00453-008-9167-1},
	doi = {10.1007/s00453-008-9167-1},
	number = {2},
	journal = {Algorithmica},
	author = {García, A. and Hurtado, F. and Noy, M. and Tejel, J.},
	month = feb,
	year = {2010},
	pages = {160--179},
}

@InProceedings{nagamochi1998edge,
author="Nagamochi, Hiroshi
and Eades, Peter",
editor="Bixby, Robert E.
and Boyd, E. Andrew
and R{\'i}os-Mercado, Roger Z.",
title="Edge-Splitting and Edge-Connectivity Augmentation in Planar Graphs",
booktitle="Integer Programming and Combinatorial Optimization",
year="1998",
publisher="Springer Berlin Heidelberg",
address="Berlin, Heidelberg",
pages="96--111",
doi = {10.1007/3-540-69346-7_8},
isbn="978-3-540-69346-8"
}

@InProceedings{akitaya2025price,
  author =	{A. Akitaya, Hugo and Dallant, Justin and Demaine, Erik D. and Kaufmann, Michael and Kleist, Linda and Stock, Frederick and T\'{o}th, Csaba D. and Ueckerdt, Torsten},
  title =	{{The Price of Connectivity Augmentation on Planar Graphs}},
  booktitle =	{33rd International Symposium on Graph Drawing and Network Visualization (GD 2025)},
  pages =	{23:1--23:24},
  series =	{Leibniz International Proceedings in Informatics (LIPIcs)},
  ISBN =	{978-3-95977-403-1},
  ISSN =	{1868-8969},
  year =	{2025},
  volume =	{357},
  editor =	{Dujmovi\'{c}, Vida and Montecchiani, Fabrizio},
  publisher =	{Schloss Dagstuhl -- Leibniz-Zentrum f{\"u}r Informatik},
  address =	{Dagstuhl, Germany},
  URL =		{https://drops.dagstuhl.de/entities/document/10.4230/LIPIcs.GD.2025.23},
  URN =		{urn:nbn:de:0030-drops-250095},
  doi =		{10.4230/LIPIcs.GD.2025.23}
}

@article{hartmann2015regular,
  title={Regular augmentation of planar graphs},
  author={Hartmann, Tanja and Rollin, Jonathan and Rutter, Ignaz},
  journal={Algorithmica},
  volume={73},
  number={2},
  pages={306--370},
  year={2015},
  publisher={Springer},
  doi = {10.1007/s00453-014-9922-4}
}
